\documentclass[final,onefignum,onetabnum]{siamonline190516}

\usepackage[mathscr]{euscript}
\usepackage{color,amsmath,amssymb,mathrsfs}
\usepackage{graphicx}
\usepackage{algorithm}
\usepackage{algpseudocode}
\usepackage{tikz,pgfplots}
\pgfplotsset{compat=1.18}
\usepackage{tabularx}
\makeatother
\usepackage{upgreek}
\usepackage{cite}
\usepackage{multirow}
\usepackage{yfonts}
\usepackage{mathtools}
\usepackage[normalem]{ulem}
\usepackage{latexsym}
\DeclareMathAlphabet{\mathpzc}{OT1}{pzc}{m}{it}
\usepackage{multirow}

\usepackage{url} 
\usepackage[margin=1in]{geometry}
\usepackage{tikz} 
\usepackage{comment}
\usepackage{dsfont}
\newcommand{\norm}[2][]{\left\Vert #2\right\Vert_{#1}}

\newsiamremark{rmk}{Remark}

\newcommand{\GGamma}{{\Gamma}}
\newcommand{\zsp}{\mathcal{P}}
\newcommand{\msp}{\mathcal{M}}

\newcommand{\e}{{e}}

\newcommand{\f}{\mathscr{G}}

\newcommand{\dd}{d_1}
\newcommand{\ddd}{d_2}
\newcommand{\ee}{{e_1}}
\newcommand{\eee}{e_2}
\newcommand{\z}{p}
\newcommand{\w}{m}
\newcommand{\SPD}{\mathbb{S}^{+}_n}
\newcommand{\CON}{\mathbb{A}}
\newcommand{\data}{{d}}

\newcommand{\CC}{{C}}
\newcommand{\DD}{{D}}

\newcommand{\s}{{s}}

\newcommand{\LL}{{L}}

\newcommand{\tr}[1]{{\rm tr}\left(\ensuremath{#1}\right)}
\newcommand{\diag}{\mathop{\mathrm{diag}}}

\begin{document}

\def\addressRO{Department of Engineering Science, University of Auckland, Auckland 1010, New Zealand}
\def\addressJ{Department of Technical Physics, University of Eastern Finland, Kuopio Campus, 70211 Kuopio, Finland}
\def\addressM{Rocsole Limited, Kartanonkatu 2,
70700 Kuopio, Finland}

\author{Ruanui Nicholson\footnotemark[1]
  \and Matti Niskanen\footnotemark[2]
  \and  Oliver J. Maclaren\footnotemark[1]
    \and Jari P. Kaipio\footnotemark[3]}
\renewcommand{\thefootnote}{\fnsymbol{footnote}}
\footnotetext[1]{\addressRO.}
\footnotetext[2]{\addressM.}
\footnotetext[3]{\addressJ.}

\title{Beyond Independence: on Jointly Normal Priors in Bayesian Inversion}

\maketitle

\begin{abstract}
We consider joint inversion for two or more unknown parameters from observational data in the Bayesian framework. Standard approaches often either treat the parameters as independent or impose structural similarity through regularisation terms that can be difficult to interpret statistically. We instead construct jointly Gaussian prior models with prescribed Gaussian marginals, so that correlation between the parameters can be incorporated without altering the marginal prior distributions. We propose a joint covariance construction that preserves the marginals, allows spatially varying cross-correlation, and supports uncertainty and inference in the correlation itself. The construction is valid for any strict contraction encoding the desired cross-correlation and is optimal in a canonical correlation sense under the principal square root factorisation. We demonstrate the method using prior sampling and several inference examples: a low-dimensional illustrative example and two higher-dimensional examples, including a PDE-constrained problem. The examples highlight both the potential pitfalls of ignoring or neglecting uncertainty in the correlation as well as reinforcing a key principle of the Bayesian paradigm: unknown quantities included in a model should be treated as random variables.
\end{abstract}

\begin{keywords}
Joint inversion, Multiphysics inversion, Bayesian inversion, Covariance matrices, Correlation 
\end{keywords}
\section{Introduction}


We consider joint inversion of distributed parameters in the Bayesian framework. The aim of joint inversion is to estimate two or more {\em inversion parameters}, denoted here by $\z\in\zsp$ and $\w\in\msp$,  based on noisy observed data. In various settings there may be measured data from two independent problems. In such cases the setup is often modelled as
\begin{equation}
\label{eq: sepjoint}
\begin{aligned}
\dd&=\mathscr{F}_1(\z)+\ee,\\
\ddd&=\mathscr{F}_2(\w)+\eee,
\end{aligned} 
\end{equation}
with $\dd\in\mathbb{R}^{q_1}$ and $\ddd\in\mathbb{R}^{q_2}$ denoting the vectors of measured data, $\mathscr{F}_1:\zsp\to\mathbb{R}^{q_1}$ and $\mathscr{F}_2:\msp\to\mathbb{R}^{q_2}$ denoting the associated parameter-to-observable mappings (i.e., the forward problems) and $\ee\in\mathbb{R}^{q_1}$ and $\eee\in\mathbb{R}^{q_2}$ denoting measurement errors in each of the problems. Such a setup is commonly referred to as the {\em multi-physics} case in the literature. The more general setup, however, is of the form
\begin{align}\label{eq: genjoint}
\data&=\f(\z,\w)+\e
\end{align} 
where $\data\in\mathbb{R}^q$ for $q=q_1+q_2$ is the measured data, $\f:\zsp\times\msp\to\mathbb{R}^q$ the parameter-to-observable map  (i.e., the joint forward problem), and $\e\in\mathbb{R}^q$ the measurement errors in the data.
  
Joint inversion problems naturally occur in geophysics~\cite{GallardoMeju04,AbubakarGaoHabashyEtAl12,KowalskyFinsterlePetersonEtAl05,HaberGazit13}, biomedical imaging~\cite{SongLiYangEtAl19,EhrhardtThielemansPizarroEtAl14,RaschBrinkmannBurger17}, and non-destructive testing~\cite{inglese2004corrosion,NicholsonNiskanen21} among various other fields. Closely related problems also arise in co-kriging and multivariate spatial prediction, where a key task is to model and condition multiple correlated spatial fields \cite{wackernagel2003multivariate,chiles2012geostatistics}. The most common approaches to joint inversion are framed within the deterministic setting and are often based on (generalised) total variation with the goal of enforcing {\em structural similarity} between the parameters of interest, see~\cite{CrestelStadlerGhattas18} for an in-depth review. Though in some cases it may be possible to interpret these approaches in a statistical manner, how to choose a suitable regularisation parameter, encode uncertainty or spatial variability in the level of structural similarity, or quantify the resulting uncertainties remains unclear. It is also well documented that total variation-based approaches suffer from theoretical limitations, such as discretisation invariance issues~\cite{LassasSiltanen04}. Moreover, if there is no prior knowledge of any relationship between the inversion parameters, the default approach is to treat them as independent, see, e.g.,~\cite{recinos2023framework,petra2012inexact,NicholsonNiskanen21,hanninen2020application} among many.

In the current paper, we focus on the design of joint Gaussian prior models for high-dimensional Bayesian inverse problems. More specifically, we consider the case where the marginal prior models for $\z$ and $\w$ are known and we wish to encode correlation (or at least the possibility of correlation) between the two parameters. As such, the problem reduces to finding a suitable joint covariance matrix. The somewhat related problem of finding a valid correlation matrix has been, and continues to be, extensively studied~\cite{higham2002computing,boyd2005least,qi2006quadratically,qi2007application,borsdorf2010computing,higham2016bounds}. However, these works are typically aimed at finding the {\em nearest} correlation matrix based on sample-approximated correlation matrices when missing data has compromised the correlations and led to a non-positive (semi-)definite matrix.

Our underlying objectives for constructing the joint prior are fourfold: 
\begin{enumerate}
\item The constructed joint prior should not alter the two marginal distributions $\pi(\z)$ and $\pi(\w)$; 
\item Correlation between the parameters should be easy to incorporate (while still satisfying 1); 
\item The approach should easily allow for uncertainty in the correlation.
\item Computational costs of the prior\footnote{sampling from and/or evaluating} should not drastically increase using the joint prior. 
\end{enumerate}

\paragraph{Contribution}

The main contributions of this article are the provision of a covariance matrix which maintains the two marginal distributions (Objective 1) and easily allows for incorporation of a given possible spatially varying correlation (Objective 2). The proposed covariance formulation also allows for straightforward incorporation and quantification of uncertainty in the correlation (Objective 3). Finally, we provide a convenient approach to carry out the required computations involving the joint prior based on the marginal distributions so that the cost associated with evaluating (and sampling from) the joint prior are not significantly more than carrying out the related computations for the case of independent parameters (Objective 4). We demonstrate the applicability of the approach on an illustrative low-dimensional problem, and two higher-dimensional problems, one of which is governed by a partial differential equation (PDE). The illustrative examples serve to demonstrate both the dangers of neglecting uncertainty, namely uncertainty in the correlation between the two parameters, along with the benefits of modelling the uncertainty in the correlation.

\paragraph{Paper overview}
The remainder of the paper is organised as follows. In Section \ref{sec: Bjoint} we review the Bayesian approach to joint inversion and propose our method of constructing a jointly normal model, as well as demonstrating its use on generating correlated samples. Several technical considerations are discussed and demonstrated on joint inversion for the  maximum population and half-velocity constant of the Monod model in Section~\ref{sec: Ignorance}. Section \ref{sec: NumEx} is used to demonstrate the applicability of our proposed approach on two higher dimensional examples motivated by geophysical settings. Finally, in the last section,
Section \ref{sec: Conc},
 we close with some conclusions.

\section{The Bayesian approach to joint inversion}\label{sec: Bjoint}
We consider problems in the finite-dimensional setting with $\z\in\mathbb{R}^{n_1}$, $\w\in\mathbb{R}^{n_2}$ and set $n=n_1+n_2$. However, the results given here should hold in the infinite-dimensional (Hilbert space) setting~\cite{Stuart10}  with special care taken to ensure the correct spaces are used~\cite{BuiGhattasMartinEtAl13,PetraMartinStadlerEtAl14}. We denote by $\SPD$ the space of $n\times n$ symmetric positive definite matrices. A matrix $A\in\mathbb{R}^{n\times q}$ is said to be a strict contraction if $\norm[2]{A}<1$, i.e., all singular values of $A$ are strictly less than 1. We denote the space of all $n\times q$ strict contractions by $\CON^{n\times q}$. 

Posed within the Bayesian framework~\cite{KaipioSomersalo05,CalvettiSomersalo07,Stuart10} the joint inversion problem becomes a problem of joint inference. The Bayesian approach treats all unknowns as random variables and naturally allows for the incorporation, and subsequent quantification, of uncertainty. The solution to the inference problem is the (joint) posterior density $\pi(\z,\w\vert\data)$ which by Bayes' theorem can be written as
\begin{align}
\pi_{\rm post}(\z,\w\vert\data)\propto\pi(\data\vert\z,\w)\pi(\z,\w),
\end{align}
where $\pi(\data\vert\z,\w)$ denotes the likelihood function and $\pi(\z,\w)$ the joint prior which encodes our beliefs about the parameters prior to considering the data. However, it is well understood that the relevance of any resulting estimates is heavily dependent on the ability to accurately model and account for the various sources of uncertainty~\cite{KaipioSomersalo05,CalvettiSomersalo07}. Failure to accurately account for the statistics of noise and/or errors typically results in a mis-specified likelihood, and consequently, misleading posterior estimates~\cite{KaipioKolehmainen13,CalvettiDunlopSomersaloEtAl18,NicholsonPetraKaipio18,BabaniyiNicholsonVillaEtAl21}. However, the same can be said for the prior: {\em failure to model the prior uncertainty in the parameters can result in misleading posterior estimates}. The design of prior densities for a single (distributed) parameter for Bayesian inference is a fairly mature field~\cite{RoininenHuttunenLasanen14,KaipioKolehmainenVauhkonenEtAl99,CalvettiSomersalo18,LindgrenRueLindstrom11,ChadaRoininenSuuronen21}. On the other hand, for joint inference of two or more distributed parameters, the literature is comparatively modest.

\subsection{A jointly normal model}\label{sec: norm} Here we present an efficient means to construct a joint Gaussian distribution. The proposed approach can be seen as a generalisation of the so-called {\em co-simulation} approach \cite{Oliver03}. The approach outlined here allows for a spatially varying correlation between the two parameter fields of interest, correlation between parameters of different dimensions, and (most importantly) uncertainty in the correlation itself. 

As stated previously, our approach is based on the construction of a joint Gaussian, thus provision of a suitable joint covariance matrix is the key to the approach. Before presenting our main results, we recall a key theorem.

\begin{theorem}\label{thm:contract}
    Let $H_1\in\mathbb{S}^+_{n_1}$ and $H_2\in\mathbb{S}^+_{n_2}$. Then the matrix 
    \begin{align}
H=\begin{pmatrix}
H_1 & H_{12}\\
 H_{12}^T & H_2
 \end{pmatrix}
    \end{align}
    is positive definite {\em if and only if} there exists a strict contraction $K\in\mathbb{A}^{n_1\times n_2}$ such that $H_{12}=H_1^{\frac{1}{2}}K H_2^{\frac{1}{2}}$.
\end{theorem}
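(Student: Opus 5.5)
The plan is to reduce the statement to the case of identity diagonal blocks by a congruence transformation, and then use a Schur complement. Let $H_1^{1/2}$ and $H_2^{1/2}$ denote the principal (symmetric positive definite) square roots; both are invertible. Define the block-diagonal matrix $S=\diag\bigl(H_1^{-1/2},H_2^{-1/2}\bigr)$, which is invertible. Sylvester's law of inertia, or simply $x^T S H S x = (Sx)^T H (Sx)$ with $S$ invertible, shows that $H$ is positive definite if and only if
\begin{align*}
SHS=\begin{pmatrix} I_{n_1} & K\\ K^T & I_{n_2}\end{pmatrix}
\end{align*}
is positive definite, where $K:=H_1^{-1/2}H_{12}H_2^{-1/2}$. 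This $K$ is the unique matrix satisfying $H_{12}=H_1^{1/2}KH_2^{1/2}$. Hence the ``there exists'' in the statement is really ``the specific $K$ is'', and both directions follow once we show that the normalised block matrix above is positive definite if and only if $\norm[2]{K}<1$.

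For that reduced claim, I would take the Schur complement with respect to the $(1,1)$ block. Since $I_{n_1}$ is positive definite, the block matrix is positive definite if and only if $I_{n_2}-K^TK$ is positive definite. That holds precisely when every eigenvalue of $K^TK$, i.e.\ every squared singular value of $K$, is strictly less than $1$, which is the definition of $K\in\CON^{n_1\times n_2}$.

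Alternatively, one can avoid quoting the Schur complement lemma by a direct argument. The quadratic form $\|x\|^2+2x^TKy+\|y\|^2$ is bounded below by $(\|x\|-\|y\|)^2+2(1-\norm[2]{K})\|x\|\|y\|$, which gives sufficiency. For necessity, take the top singular pair $Ku=\sigma v$ with unit vectors $u\in\mathbb{R}^{n_2}$, $v\in\mathbb{R}^{n_1}$, and test the form on $(x,y)=(-v,u)$; this gives $2-2\sigma$, so positivity forces $\sigma<1$.

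None of the steps is a genuine obstacle. The only care needed is in the bookkeeping: checking that the congruence preserves definiteness in both directions, and noting that the factorisation uses the invertible principal square roots, so that $K$ is uniquely determined.
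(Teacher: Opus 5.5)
Your proof is correct and complete. The paper gives no argument of its own here, only a pointer to Horn--Johnson (Theorem~7.7.7), and your congruence reduction to the block matrix with identity diagonal blocks, followed by the Schur complement (or your direct quadratic-form check), is essentially the standard argument behind that cited result. A small bonus of your route: the same congruence taken with $\diag(\LL_\z,\LL_\w)$ works for any factors with $\GGamma_\z=(\LL_\z^T\LL_\z)^{-1}$, $\GGamma_\w=(\LL_\w^T\LL_\w)^{-1}$, so it directly gives the form $\LL_\z^{-1}\CC\LL_\w^{-T}$ used in Corollary~\ref{Thm2}, which the principal-root statement reaches only after an extra orthogonal-factor step.
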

\begin{proof} The result and proof can be found in, for example, \cite[Theorem 7.7.7]{HornJohnson12}.
\end{proof}

We then propose the following construction of the joint covariance matrix for a given desired cross-correlation encoded in $C\in\CON^{n_1\times n_2}$;
\begin{align}\label{eq: CovOut}
\GGamma=
\begin{pmatrix}
\GGamma_\z & \LL_\z^{-1}\CC\LL_\w^{-T}\\
 \LL_\w^{-1}\CC^{T}\LL_\z^{-T} & \GGamma_\w
 \end{pmatrix}\in\mathbb{R}^{n_1+n_2},
 \end{align}
which, by the following corollary is a proper covariance matrix, i.e. $\GGamma\in\mathbb{S}^+_{n}$. That is, we now present our main results.
\begin{corollary}[Valid joint covariance with preserved marginals]\label{Thm2} Let $\GGamma_{\z}=(\LL_{\z}^T\LL_{\z})^{-1}\in\mathbb{S}^+_{n_1}$ and $\GGamma_{\w}=(\LL_{\w}^T\LL_{\w})^{-1}\in\mathbb{S}^+_{n_2}$, and suppose $\CC\in\mathbb{A}^{n_1\times n_2}$. Then, taking $n=n_1+n_2$,  we have
\begin{align}\label{eq: CovProof}
\GGamma=\begin{pmatrix}
\GGamma_\z & \GGamma_{\z\w}\\
 \GGamma_{\w\z} & \GGamma_\w
 \end{pmatrix}=\begin{pmatrix}
\GGamma_\z & \LL_\z^{-1}\CC\LL_\w^{-T}\\
 \LL_\w^{-1}\CC^{T}\LL_\z^{-T} & \GGamma_\w
 \end{pmatrix}\in\mathbb{S}^+_n,
 \end{align}
 that is, $\GGamma$ is a proper covariance matrix. 
\end{corollary}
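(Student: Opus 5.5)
The plan is to reduce the statement to Theorem~\ref{thm:contract} by exhibiting the off-diagonal block $\LL_\z^{-1}\CC\LL_\w^{-T}$ in the form $\GGamma_\z^{\frac12}K\GGamma_\w^{\frac12}$ with $K$ a strict contraction. The natural candidate is
\begin{align*}
K=\GGamma_\z^{-\frac12}\LL_\z^{-1}\,\CC\,\LL_\w^{-T}\GGamma_\w^{-\frac12},
\end{align*}
where $\GGamma_\z^{\frac12}$ and $\GGamma_\w^{\frac12}$ are the principal square roots. The identity $\GGamma_\z^{\frac12}K\GGamma_\w^{\frac12}=\LL_\z^{-1}\CC\LL_\w^{-T}$ then holds trivially. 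It remains to show $\norm[2]{K}<1$.

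For the contraction bound, set $U_\z=\GGamma_\z^{-\frac12}\LL_\z^{-1}$ and $U_\w=\GGamma_\w^{-\frac12}\LL_\w^{-1}$, so that $K=U_\z\CC U_\w^T$. The key step is to check that $U_\z$ is orthogonal:
\begin{align*}
U_\z U_\z^T=\GGamma_\z^{-\frac12}\LL_\z^{-1}\LL_\z^{-T}\GGamma_\z^{-\frac12}=\GGamma_\z^{-\frac12}(\LL_\z^T\LL_\z)^{-1}\GGamma_\z^{-\frac12}=\GGamma_\z^{-\frac12}\GGamma_\z\GGamma_\z^{-\frac12}=I.
\end{align*}
Here we use $\LL_\z^{-1}\LL_\z^{-T}=(\LL_\z^T\LL_\z)^{-1}=\GGamma_\z$ and the fact that the principal square root commutes with $\GGamma_\z$. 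Since $U_\z$ is square, $U_\z$ is orthogonal. The same argument applies to $U_\w$. Because the spectral norm is invariant under multiplication by orthogonal matrices, $\norm[2]{K}=\norm[2]{\CC}<1$. Hence $K\in\CON^{n_1\times n_2}$.

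The ``if'' direction of Theorem~\ref{thm:contract}, with $H_1=\GGamma_\z$, $H_2=\GGamma_\w$ and $H_{12}=\LL_\z^{-1}\CC\LL_\w^{-T}$, then gives positive definiteness of $\GGamma$. Symmetry follows because the lower-left block is the transpose of the upper-right one, so $\GGamma\in\mathbb{S}^+_n$.

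The only step that needs care is recognising that $\GGamma_\z^{-\frac12}\LL_\z^{-1}$ is orthogonal. This is the polar-type relation between an arbitrary factor $\LL_\z$ of the precision and the symmetric square root. Implicitly, it also requires $\LL_\z$ and $\LL_\w$ to be invertible square matrices, which is guaranteed by the hypotheses $\GGamma_\z=(\LL_\z^T\LL_\z)^{-1}$ and $\GGamma_\w=(\LL_\w^T\LL_\w)^{-1}$.

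As an alternative self-contained check, one can use the congruence
\begin{align*}
\GGamma=\begin{pmatrix}\LL_\z^{-1}&0\\0&\LL_\w^{-1}\end{pmatrix}\begin{pmatrix}I&\CC\\\CC^T&I\end{pmatrix}\begin{pmatrix}\LL_\z^{-T}&0\\0&\LL_\w^{-T}\end{pmatrix}.
\end{align*}
The middle matrix is positive definite exactly when $\norm[2]{\CC}<1$, for example via the Schur complement $I-\CC^T\CC\succ0$. Since the outer block-diagonal factors are invertible, $\GGamma$ is then positive definite as well.
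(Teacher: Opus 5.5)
Your proof is correct. Its main line is the paper's own reduction to Theorem~\ref{thm:contract} with $H_1=\Gamma_p$ and $H_2=\Gamma_m$; the difference lies in the choice of $K$.

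The paper simply sets $K=C$. Strictly speaking, this produces the off-diagonal block $\Gamma_p^{1/2}C\Gamma_m^{1/2}$, so it covers only the principal-square-root factorisation. The corollary, however, concerns arbitrary factors $L_p$ and $L_m$, as does the later claim in Section~\ref{sec:factor} that $\Gamma$ is valid for any whitening filters. Your polar-type observation, that $U_p=\Gamma_p^{-1/2}L_p^{-1}$ and $U_m=\Gamma_m^{-1/2}L_m^{-1}$ are orthogonal, supplies exactly the step the paper leaves implicit. It gives $L_p^{-1}CL_m^{-T}=\Gamma_p^{1/2}(U_pCU_m^T)\Gamma_m^{1/2}$ with $\|U_pCU_m^T\|_2=\|C\|_2<1$. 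The same step also justifies how the proof of Theorem~\ref{ThmOpt} reads Theorem~\ref{thm:contract}, with $H_1^{1/2}$ replaced by a general factor $L_p^{-1}$.

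Your alternative congruence argument, $\Gamma=\mathrm{diag}(L_p^{-1},L_m^{-1})\bigl(\begin{smallmatrix} I & C\\ C^T & I\end{smallmatrix}\bigr)\mathrm{diag}(L_p^{-T},L_m^{-T})$, is a genuinely different and more elementary route. It needs neither Horn--Johnson nor any square roots, and it is manifestly independent of the factorisation. The same Schur complement also gives the converse: $\Gamma\succ 0$ only if $\|C\|_2<1$. Finally, it is precisely the structure behind the sampling factor in Theorem~\ref{ThmWhite}, since $\bigl(\begin{smallmatrix} I & 0\\ C^T & D\end{smallmatrix}\bigr)\bigl(\begin{smallmatrix} I & C\\ 0 & D^T\end{smallmatrix}\bigr)=\bigl(\begin{smallmatrix} I & C\\ C^T & I\end{smallmatrix}\bigr)$.
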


\begin{proof} The proof follows immediately by setting $H_1=\GGamma_\z$, $H_2=\GGamma_\w$, and $K=\CC$ in Theorem~\ref{thm:contract}.
\end{proof}

\begin{rmk}
The off-diagonal blocks of the proposed joint covariance matrix can equivalently
be written in terms of prior covariance square roots, i.e.,  as $\GGamma_\z^{1/2}\CC\GGamma_\w^{1/2}$ and $\GGamma_\w^{1/2}\CC^T\GGamma_\z^{1/2}$ 
when the principal square root factorisation is used. However, Bayesian inverse problems are often formulated in terms of (square root factors of) prior precision operators and we thus use the factorisations given in~(\ref{eq: CovProof}).
\end{rmk}

Corollary~\ref{Thm2} states $\GGamma$ is a proper covariance matrix, it does not say anything about how well the desired correlation is encoded. In fact, the choice of $\GGamma$ above is in general {\em not} optimal in terms of enforcing the (Pearson's) correlation point-wise, i.e., there may exist other choices of $\GGamma_{\z\w}$ such that the correlation matrix $\mathbb{R}^{n_1\times n_2}\ni \Phi:=\diag(\GGamma_\z)^{-\frac{1}{2}}\GGamma_{\z\w} \diag(\GGamma_\w)^{-\frac{1}{2}}$ is closer to $\CC$ in a particular sense.  However, when the principal square root factorisation is used, our choice is optimal in the sense of canonical correlation. Recall, canonical correlation can be interpreted as a multivariate analogue of Pearson's correlation. Specifically, in canonical correlation analysis (CCA) the goal is to find $v_1\in\mathbb{R}^{n_1}$ and $v_2\in\mathbb{R}^{n_2}$ to maximise
\[\rho=\frac{v_1^T\GGamma_{\z\w} v_2}{\sqrt{v_1^T\GGamma_{\z} v_1}\sqrt{v_2^T\GGamma_{\w} v_2}},\]
and then repeat under orthogonality to produce the sequence $1\geq \rho_1\geq\rho_2,\dots,\rho_{\min\{n_1,n_2\}}\geq 0$, which are termed the canonical correlations and are the singular values of the (canonical correlation) matrix $\GGamma_{\z}^{-\frac{1}{2}}\GGamma_{\z\w}\GGamma_{\w}^{-\frac{1}{2}}$ with $\GGamma_{\z}^{-\frac{1}{2}}$ and $\GGamma_{\w}^{-\frac{1}{2}}$ the inverse principal square roots (for more on canonical correlation see, for example, \cite[Chapter 12]{anderson2003introduction} or \cite{hotelling1936relations}). We then have the following theorem on the optimality of $\GGamma$.

\begin{theorem}[Optimality of $\GGamma$ in canonical correlation]\label{ThmOpt}
    Let $\GGamma_{\z}\in\mathbb{S}^+_{n_1}$ and $\GGamma_{\w}\in\mathbb{S}^+_{n_2}$, and suppose $\CC\in\mathbb{A}^{n_1\times n_2}$. Then, letting $n=n_1+n_2$, the unique solution to the optimisation problem:
    \begin{align}\label{eq:Copt}
    \min_{\Gamma_{\z\w}\in\mathbb{R}^{n_1\times n_2}}\;\norm[F]{\Gamma_\z^{-\frac{1}{2}}\Gamma_{\z\w}\Gamma_\w^{-\frac{1}{2}}-\CC}^2\quad \text{s.t.}\quad \GGamma=\begin{pmatrix}
\GGamma_\z & \GGamma_{\z\w}\\
 \GGamma_{\w\z} & \GGamma_\w
 \end{pmatrix}\in\mathbb{S}^+_{n},\end{align}
 is $\Gamma_{\z\w}=\Gamma_\z^{\frac{1}{2}}\CC\GGamma_\w^{\frac{1}{2}}$, i.e., we set $\LL_\z^{-1}$ and $\LL_\w^{-1}$ as the principal square roots of $\Gamma_{\z}$ and $\Gamma_{\w}$, respectively.
 
\end{theorem}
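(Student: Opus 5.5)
The plan is to change variables so that the problem becomes a projection of $\CC$ onto a set that already contains it. Define the linear map $T:\mathbb{R}^{n_1\times n_2}\to\mathbb{R}^{n_1\times n_2}$ by $T(\GGamma_{\z\w})=\GGamma_\z^{-\frac{1}{2}}\GGamma_{\z\w}\GGamma_\w^{-\frac{1}{2}}$, where the square roots are the (symmetric positive definite) principal square roots. Since $\GGamma_\z^{-\frac12}$ and $\GGamma_\w^{-\frac12}$ are invertible, $T$ is a bijection with inverse $K\mapsto \GGamma_\z^{\frac12}K\GGamma_\w^{\frac12}$. Writing $K=T(\GGamma_{\z\w})$, the objective in \eqref{eq:Copt} becomes $\norm[F]{K-\CC}^2$.

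Next, rewrite the constraint in terms of $K$. Apply Theorem~\ref{thm:contract} with $H_1=\GGamma_\z$, $H_2=\GGamma_\w$ and the principal square roots. It gives $\GGamma\in\mathbb{S}^+_n$ if and only if $\GGamma_{\z\w}=\GGamma_\z^{\frac12}K'\GGamma_\w^{\frac12}$ for some $K'\in\CON^{n_1\times n_2}$. By injectivity of $T$, any such $K'$ must equal $K=T(\GGamma_{\z\w})$. So the feasible set, after the change of variables, is exactly $\CON^{n_1\times n_2}$, and \eqref{eq:Copt} is equivalent to
\begin{align*}
\min_{K\in\CON^{n_1\times n_2}}\norm[F]{K-\CC}^2 .
\end{align*}

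Because $\CC\in\CON^{n_1\times n_2}$ by hypothesis, $K=\CC$ is feasible and attains the value $0$. This is a global minimum, since the objective is nonnegative. It is unique, because $\norm[F]{K-\CC}=0$ forces $K=\CC$. Mapping back through $T^{-1}$ gives the unique minimiser $\GGamma_{\z\w}=\GGamma_\z^{\frac12}\CC\GGamma_\w^{\frac12}$. Finally, comparing with \eqref{eq: CovProof}, this coincides with the construction $\LL_\z^{-1}\CC\LL_\w^{-T}$ when $\LL_\z^{-1}=\GGamma_\z^{\frac12}$ and $\LL_\w^{-1}=\GGamma_\w^{\frac12}$, since symmetric square roots satisfy $\LL_\w^{-T}=\LL_\w^{-1}$.

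The only delicate step is the constraint translation. Theorem~\ref{thm:contract} only asserts that some contraction exists, so uniqueness of the representation must come from invertibility of the principal square roots. The same square roots must also appear in both the objective and the theorem. With those two points in place, the minimisation is trivial.
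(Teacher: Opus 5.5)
Your proof is correct and follows the same route as the paper. Both arguments use Theorem~\ref{thm:contract} with the principal square roots to turn the positive-definiteness constraint into $K\in\CON^{n_1\times n_2}$, which reduces the problem to $\min_{K}\norm[F]{K-\CC}^2$ with unique minimiser $K=\CC$. Your injectivity argument for the change of variables and your uniqueness argument via the attained value $0$ are somewhat more careful than the paper's appeal to strict convexity, but the underlying idea is the same.
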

\begin{proof} Recall, by Theorem~\ref{thm:contract} we have $\GGamma\in\mathbb{S}^+_{n}$ if and only if $\GGamma_{\z\w}=\LL_\z^{-1}K\LL_\w^{-T}$ for some contraction $K\in\mathbb{A}^{n_1\times n_2}$. Substituting this in, we can rewrite the optimisation problem as:
\[\min_{\Gamma_{\z\w}\in\mathbb{R}^{n_1\times n_2}}\;\norm[F]{\Gamma_\z^{-\frac{1}{2}}(\LL_\z^{-1}K\LL_\w^{-T})\Gamma_\w^{-\frac{1}{2}}-\CC}^2\quad \text{s.t.}\quad \GGamma=\begin{pmatrix}
\GGamma_\z & \GGamma_{\z\w}\\
 \GGamma_{\w\z} & \GGamma_\w
 \end{pmatrix}\in\mathbb{S}^+_{n}.\]
Taking $\LL_{\z}=\Gamma_\z^{-\frac{1}{2}}$ and $\LL_{\w}=\Gamma_\w^{-\frac{1}{2}}$, the inverse principal square roots, we can rewrite the optimisation problem in terms of the contraction $K$: 
 \begin{align}
    \min_{K\in\mathbb{A}^{n_1\times n_2}}\;\norm[F]{K-\CC}^2\quad \text{s.t.}\quad \GGamma=\begin{pmatrix}
\GGamma_\z & \GGamma_{\z\w}\\
 \GGamma_{\w\z} & \GGamma_\w
 \end{pmatrix}\in\mathbb{S}^+_{n},\quad \GGamma_{\z\w}=\LL_\z^{-1}K\LL_\w^{-T},\end{align}
 which has a minimiser of $K=C$. Furthermore, $K=C$ is the unique minimiser as the squared Frobenius norm is strictly convex.
\end{proof}

\begin{rmk}\label{rem:facts}
The optimality statement in Theorem ~\ref{ThmOpt} is tied to the standard definition of
the canonical correlation matrix, which uses the inverse principal square roots
$\Gamma_p^{-1/2}$ and $\Gamma_m^{-1/2}$. If a different choice of whitening
filters is used, for example Cholesky factors, then an analogous statement can
be obtained only after redefining the objective in terms of those whitening
filters. More precisely, for fixed whitening filters $L_p$ and $L_m$ satisfying
$\Gamma_p=(L_p^T L_p)^{-1}$ and $\Gamma_m=(L_m^T L_m)^{-1}$, the construction
$\Gamma_{pm}=L_p^{-1}CL_m^{-T}$ is optimal for matching $C$ in the transformed
coordinates $L_p\Gamma_{pm}L_m^T$.
Thus different factorisations lead to different interpretations of the matrix
$C$, even though they all produce valid joint covariance matrices. In Section~\ref{sec:factor} we provide additional motivation for choosing the principal square root.
\end{rmk}

\subsection{Generating samples}\label{sec: samps}
Here we show how to efficiently draw samples from the joint distribution of $\pi(\z,\w)$. To this end, 
we present the following theorem showing how samples can be drawn from $\pi(\z,\w)$ and how the joint whitening filter can be constructed efficiently.
\begin{theorem}\label{ThmWhite} 
For $\z_\ast\in\mathbb{R}^{n_1}$ and $\w_\ast\in\mathbb{R}^{n_2}$, let $\z\sim\mathcal{N}(\z_\ast, \GGamma_{\z})$ and  $\w\sim\mathcal{N}(\w_\ast, \GGamma_{\w})$ where $\GGamma_{\z}=(\LL_{\z}^T\LL_{\z})^{-1}\in\mathbb{S}^+_{n_1}$ and $\GGamma_{\w}=(\LL_{\w}^T\LL_{\w})^{-1}\in\mathbb{S}^+_{n_2}$, and suppose $\CC\in\mathbb{A}^{n_1\times n_2}$. Then, if $s=(\z,\w)$ has a (joint) density of the form $\pi(s)=\mathcal{N}(s_\ast,\GGamma)$ with $s\in\mathbb{R}^n$ and $\GGamma\in\mathbb{S}^+_n$ as in Theorem~\ref{Thm2} and $\eta\sim\mathcal{N}(0,I)$, samples from the joint density can be computed by 
\begin{align}\label{eq: taking draw}
\tilde{s}
=
\begin{pmatrix}\LL_\z^{-1} & 0\\ \LL_\w^{-1}\CC^T & \LL_\w^{-1}\DD
\end{pmatrix}\eta
+
s_\ast,
\end{align} 
where the matrix $\DD\in\mathbb{R}^{n_2\times n_2}$ satisfies $\DD\DD^T=I-\CC^T\CC$ (i.e., $\DD$ is the defect operator of $\CC$). Moreover,
\begin{align}\label{eq: bigWhite}
\LL=
\begin{pmatrix}\LL_\z & 0\\ -\DD^{-1}\CC^T\LL_\z & \DD^{-1}\LL_\w
\end{pmatrix}
\end{align} 
satisfies $\GGamma=(\LL^T\LL)^{-1}$.
\end{theorem}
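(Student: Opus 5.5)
Write $S=\begin{pmatrix}\LL_\z^{-1} & 0\\ \LL_\w^{-1}\CC^T & \LL_\w^{-1}\DD\end{pmatrix}$, so that $\tilde s=S\eta+s_\ast$. The argument rests on two facts: affine images of Gaussians are Gaussian, and a block lower-triangular matrix has a block lower-triangular inverse. The plan is first to show $SS^T=\GGamma$, then to show $\LL=S^{-1}$. Together these give $\GGamma=SS^T=(\LL^{-1})(\LL^{-1})^T=(\LL^T\LL)^{-1}$.

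\emph{Step 1 (existence and invertibility of $\DD$).} Since $\CC\in\CON^{n_1\times n_2}$, every eigenvalue of $\CC^T\CC$ lies in $[0,1)$. Hence $I-\CC^T\CC\in\mathbb{S}^+_{n_2}$, and a factor $\DD$ with $\DD\DD^T=I-\CC^T\CC$ exists; the principal square root or a Cholesky factor would do. Because $I-\CC^T\CC$ is positive definite, $\DD$ is nonsingular, so the $\DD^{-1}$ appearing in \eqref{eq: bigWhite} makes sense. I expect this to be the only step that actually uses the hypotheses, namely the strict contraction property. It is also where invertibility must be checked, so it is the one place needing care.

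\emph{Step 2 (covariance of the samples).} Since $\eta\sim\mathcal{N}(0,I)$, the vector $\tilde s$ is Gaussian with mean $s_\ast$ and covariance $SS^T$. I compute $SS^T$ blockwise.
\begin{itemize}
\item The $(1,1)$ block is $\LL_\z^{-1}\LL_\z^{-T}=(\LL_\z^T\LL_\z)^{-1}=\GGamma_\z$.
\item The $(2,1)$ block is $\LL_\w^{-1}\CC^T\LL_\z^{-T}$. This matches \eqref{eq: CovProof}, and the $(1,2)$ block follows by transposition.
\item The $(2,2)$ block is $\LL_\w^{-1}(\CC^T\CC+\DD\DD^T)\LL_\w^{-T}=\LL_\w^{-1}\LL_\w^{-T}=\GGamma_\w$, using the defining relation for $\DD$.
\end{itemize}
Thus $SS^T=\GGamma$, so $\tilde s\sim\mathcal{N}(s_\ast,\GGamma)=\pi(s)$.

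\emph{Step 3 (whitening filter).} I verify directly that $\LL S=I$.
\begin{itemize}
\item The $(1,1)$ block is $\LL_\z\LL_\z^{-1}=I$.
\item The $(1,2)$ block is $0$.
\item The $(2,2)$ block is $\DD^{-1}\LL_\w\LL_\w^{-1}\DD=I$.
\item The $(2,1)$ block is $-\DD^{-1}\CC^T\LL_\z\LL_\z^{-1}+\DD^{-1}\LL_\w\LL_\w^{-1}\CC^T=-\DD^{-1}\CC^T+\DD^{-1}\CC^T=0$.
\end{itemize}
Since $S$ is square and $\LL S=I$, we have $\LL=S^{-1}$. Therefore $(\LL^T\LL)^{-1}=\LL^{-1}\LL^{-T}=SS^T=\GGamma$ by Step 2. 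As a by-product, $\GGamma\in\mathbb{S}^+_n$ is confirmed independently of Corollary~\ref{Thm2}. All of this is routine block arithmetic once Step 1 is settled.
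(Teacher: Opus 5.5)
Your proof is correct and follows essentially the same route as the paper. Both arguments use the strict contraction property to make $\DD$ invertible, and both compute $SS^T=\GGamma$ blockwise via $\CC^T\CC+\DD\DD^T=I$. The only difference is in the last step: the paper verifies $(\LL^T\LL)\GGamma=I$ by direct block multiplication, whereas you show $\LL S=I$ for your factor $S$ and reuse $SS^T=\GGamma$, which is slightly tidier and also gives $\GGamma\in\mathbb{S}^+_n$ without appealing to Corollary~\ref{Thm2}.
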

\begin{proof}
First, to prove (\ref{eq: taking draw}) it is clear that $\mathbb{E}[\tilde{s}]=s_\ast$. For the covariance, assuming without loss of generality that $s_\ast=0$, we have
\begin{align}
\mathbb{E}\left[\tilde{s}\tilde{s}^T
 \right]&=\mathbb{E}\left[\begin{pmatrix}\LL_\z^{-1} & 0\\ \LL_\w^{-1}\CC^T & \LL_\w^{-1}\DD
\end{pmatrix}
\eta \eta^T
\begin{pmatrix}\LL_\z^{-T} & \CC\LL_\w^{-T}\\  0 & \DD^T\LL_\w^{-T}
\end{pmatrix}\right]\nonumber\\
&=\begin{pmatrix}\LL_\z^{-1} & 0\\ \LL_\w^{-1}\CC^T & \LL_\w^{-1}\DD
\end{pmatrix}
\begin{pmatrix}\LL_\z^{-T} & \CC\LL_\w^{-T}\\  0 & \DD^T\LL_\w^{-T}
\end{pmatrix}\nonumber\\
&=\begin{pmatrix}\GGamma_\z & \LL_\z^{-1}\CC\LL_\w^{-T}\\ \LL_\w^{-1}\CC^T\LL_\z^{-T} & \LL_\w^{-1}(\CC^T\CC+\DD\DD^T) \LL_\w^{-T}
\end{pmatrix}=\GGamma,\nonumber
\end{align}
since $\mathbb{E}[\eta\eta^T]=I$ and $\CC^T\CC+\DD\DD^T=I$.

On the other hand, to prove (\ref{eq: bigWhite}), since $\CC\in\mathbb{A}^{n_1\times n_2}$, we have $I-\CC^T\CC\in\mathbb{S}^+_{n_2}$, implying $\DD$ is invertible, and thus
\begin{align}
\left(\LL^T\LL\right)\GGamma&=
\begin{pmatrix}\LL_\z & -\LL_\z\CC\DD^{-T}\\ 0&  \LL_\w\DD^{-T}
\end{pmatrix}
\begin{pmatrix}\LL_\z & 0\\ -\DD^{-1}\CC^T\LL_\z & \DD^{-1}\LL_\w
\end{pmatrix}
\begin{pmatrix}
\GGamma_\z & \LL_\z^{-1}\CC\LL_\w^{-T}\\
 \LL_\w^{-1}\CC^{T}\LL_\z^{-T} & \GGamma_\w
 \end{pmatrix}\nonumber\\ 
 &=\begin{pmatrix}\LL_\z & -\LL_\z\CC\DD^{-T}\\ 0&  \LL_\w\DD^{-T}
\end{pmatrix}
\begin{pmatrix}\LL_\z^{-1} & \CC\LL_\w^{-1}\\ 0 &  \DD^T\LL_\w^{-1}
\end{pmatrix}=I,\nonumber
\end{align}
concluding the proof.
\end{proof}

\subsection{Sampling examples}\label{sec:samp} Here we provide two examples of generating samples from the joint density of two distributed parameters given the associated marginal densities and a desired correlation structure. 

\paragraph{Example 1}
For the first example we assume both parameters are distributed over the rectangle $\Omega=[0,2]\times[0,1]$. We discretise the rectangle using the same regular lattice for each parameter, leading to $\z,\w\in\mathbb{R}^{5000}$. The marginal priors are set as $\pi(\z)=\mathcal{N}(0,\GGamma_\z)$ and $\pi(\w)=\mathcal{N}(0,\GGamma_\w)$. The marginal covariance matrix $\GGamma_\z$ is defined using a PDE-based covariance matrix \cite{RoininenHuttunenLasanen14,DaonStadler18,LindgrenRueLindstrom11} coming from a finite element discretisation of an elliptic PDE using linear Lagrange basis functions. More specifically, we set  
\begin{align}\label{eq: pdeprior}
\GGamma_\z=(a_1K+a_2M+a_3B)^{-2}, 
\end{align} 
with $K$, $M$, and $B$ the stiffness, mass, and boundary mass matrices, respectively, given by
\begin{align}
K_{ij}=\int_\Omega \Theta\nabla\phi_i\cdot\nabla\phi_j\;{\rm d}x,\quad M_{ij}=\int_\Omega \phi_i\phi_j\;{\rm d}x,\quad \quad B_{ij}=\int_{\partial\Omega} \phi_i\phi_j\;{\rm d}x,
\end{align}
where $\left\{\phi(x)\right\}_{i=1}^N$ denotes the linear Lagrange basis functions, and $i,j\in\{1,2,\dots,n_1\}$ where $n_1=5000$. The parameters $a_1,a_2>0$ control the desired correlation length and variance, while $\Theta\in\mathbb{S}^+_2$ can be used to set anisotropy (i.e., different correlation lengths in different directions) and $a_3\propto\sqrt{a_1a_2}$ is used to mitigate so-called boundary effects, see for example \cite{RoininenHuttunenLasanen14,DaonStadler18,khristenko2019analysis}. Here we set $a_1=4\times10^{-2}$, $a_2=1$, $a_3=0.125$, and $\Theta=I$.

On the other hand, the marginal covariance matrix $\GGamma_\w$ is taken as the exponential squared covariance matrix \cite{RasmussenWilliams05}, i.e., 
\begin{align}\label{eq: exp2prior}
\GGamma_\w(r)=\exp\left\{-\frac{1}{2}\left(\frac{r}{\ell}\right)^2\right\}, \quad r=\norm{x},
\end{align} 
with $\ell=0.2$ the correlation length. Finally, for the first example we consider two different correlation structures: (a) we attempt to encode homogeneous strong positive correlation throughout the domain by setting $C=0.999I$, and (b) we attempt to encode strong positive correlation on the left half of the domain strong negative correlation right half of the domain, i.e., we set $C=c(x)I$, where $c(x)=0.999$ for $x\leq1$ and $c(x)=-0.999$ for $x>1$. It is worth noting that (in the function space setting) the exponential squared covariance operator (\ref{eq: exp2prior}) generates smooth (i.e., infinitely differentiable) samples, while the PDE-based covariance operator (\ref{eq: pdeprior}) generates functions in $H^1(\Omega)$, that is to say, the samples of $\w$ should be significantly smoother than the samples of $\z$.

Three samples from the joint densities for cases (a) and (b) are shown in Figure \ref{fig: joint2dsamp}. The approach works as intended: the samples of the two parameters for Example 1 (a) appear to be strongly positively correlated throughout the domain, while the samples of the two parameters for Example 1 (b) appear to strongly positively correlated in the left half of the domain and strongly negatively correlated in the right half of the domain. We also see the difference in smoothness of the two parameters due to the difference in the associated marginal prior covariance matrices (see (\ref{eq: pdeprior}) and (\ref{eq: exp2prior})).

\begin{figure}[t!]
\centering

$\z \text{ (Example 1 (a) and 1 (b))}$ \hrulefill

\vspace{.1cm}

\includegraphics[height=0.15\textwidth]{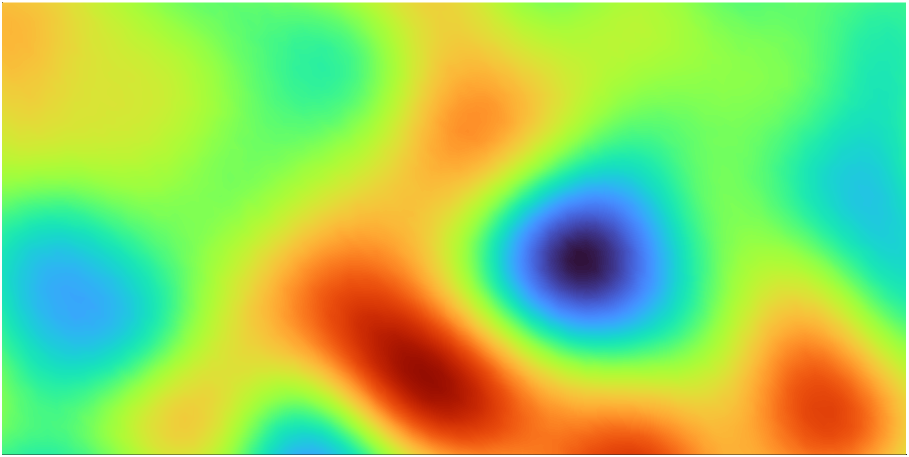}
\hfill
\includegraphics[height=0.15\textwidth]{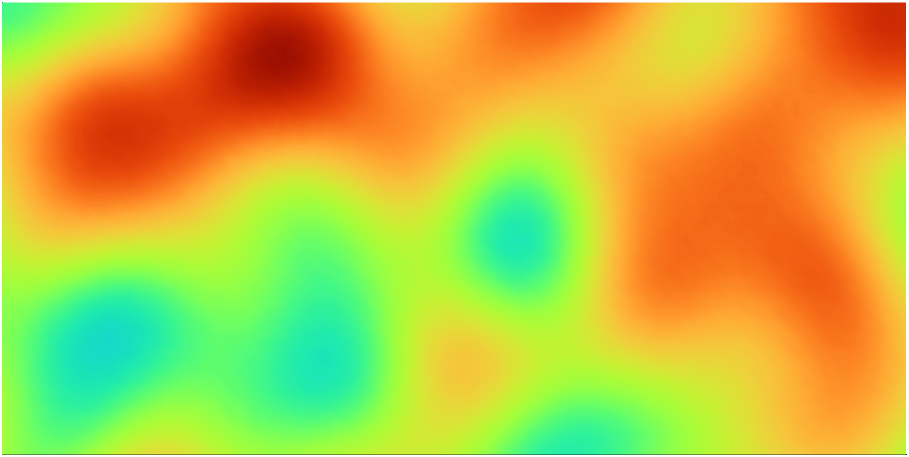}
\hfill
\includegraphics[height=0.15\textwidth]{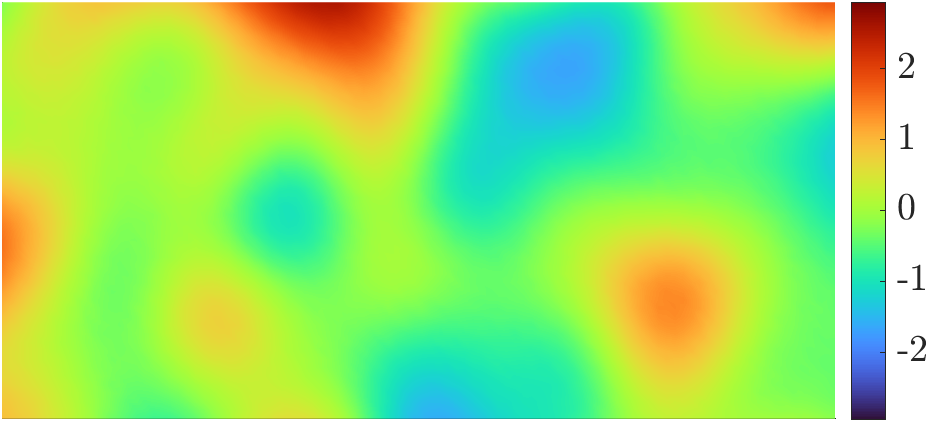}\\

$\w \text{ (Example 1 (a))}$ \hrulefill
\vspace{.1cm}

\includegraphics[height=0.15\textwidth]{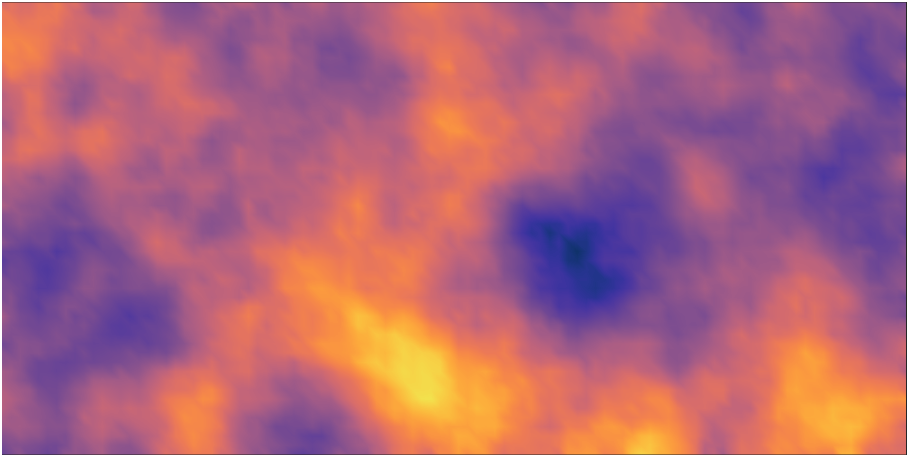}
\hfill
\includegraphics[height=0.15\textwidth]{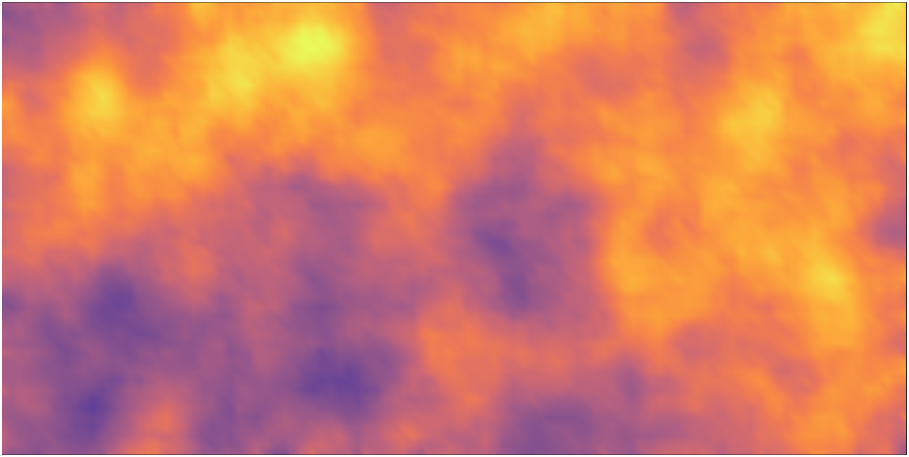}
\hfill
\includegraphics[height=0.15\textwidth]{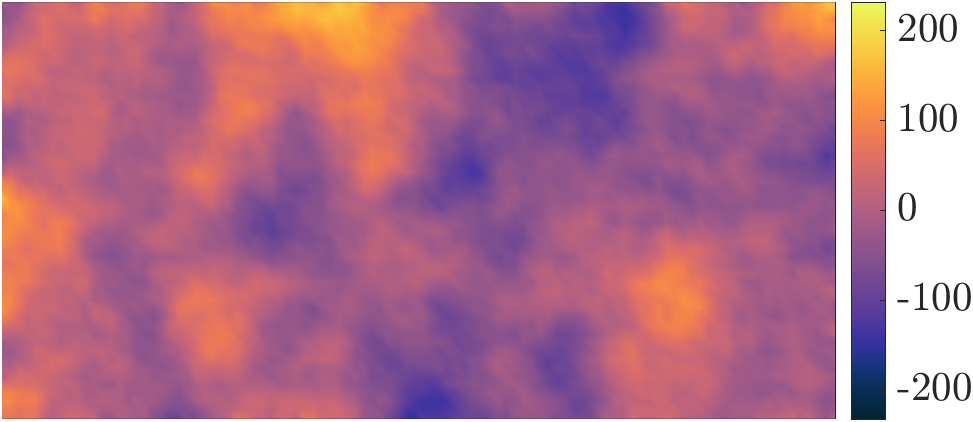}\\

$\w \text{ (Example 1 (b))}$ \hrulefill
\vspace{.1cm}

\includegraphics[height=0.15\textwidth]{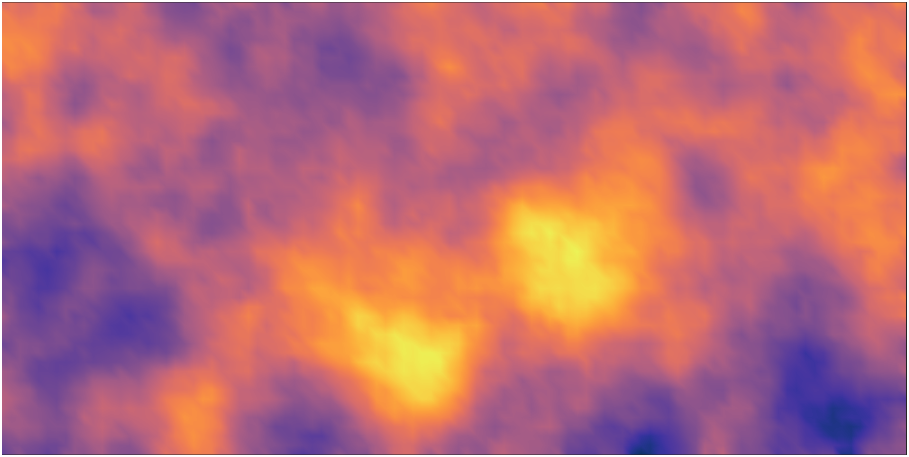}
\hfill
\includegraphics[height=0.15\textwidth]{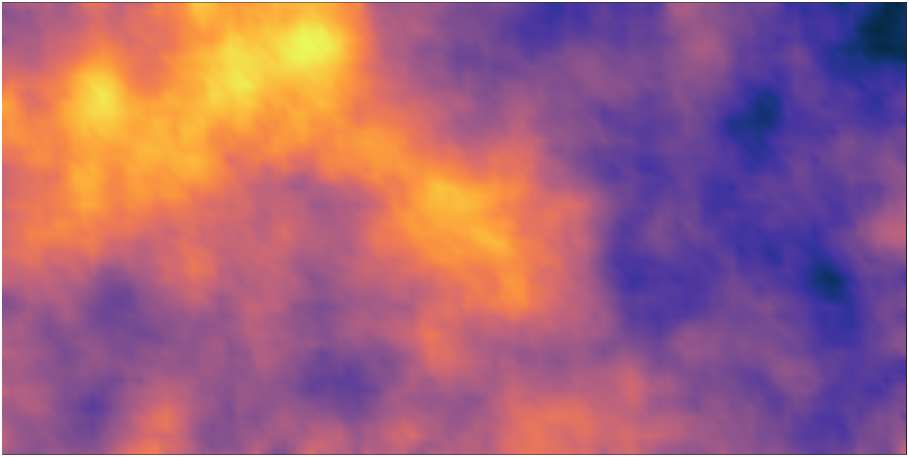}
\hfill
\includegraphics[height=0.15\textwidth]{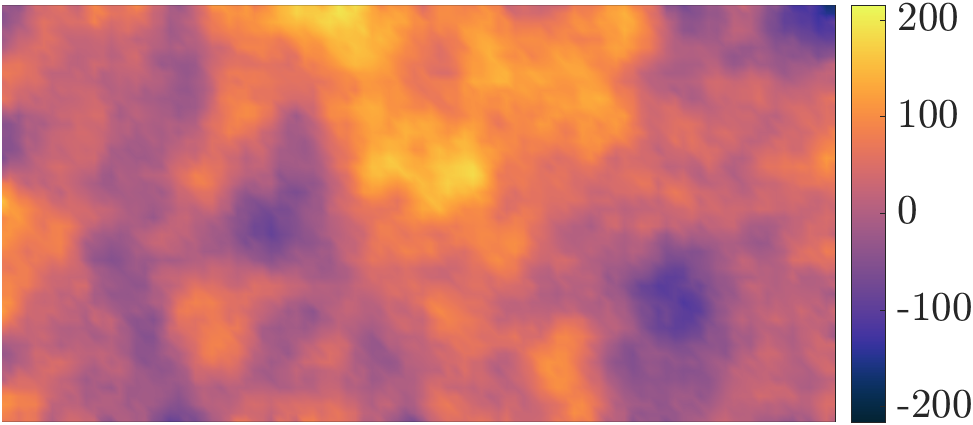}

 \caption{Samples from the joint priors in Example 1. The first row shows three samples of $\z$, the second row shows three corresponding samples of $\w$, where the desired correlation structure is encoded using $c(x)=0.999$ (Example 1 (a)), and the third row shows three corresponding samples of $\w$, where the desired correlation is encoded using $c(x)=0.999$ for $x\leq1$ and $c(x)=-0.999$ for $x>1$ (Example 1 (b)). As expected, in the first case (Example 1 (a)), the two parameters appear to be strongly positively correlated throughout the domain, while in the second case (Example 1 (b)), the samples of the two parameters appear to be strongly positively correlated in the left half of the domain and strongly negatively correlated in the right half of the domain.}\label{fig: joint2dsamp}
\end{figure}

\paragraph{Example 2}
In the second example we consider a case in which one parameter is distributed over the rectangle $\Omega=[0,2]\times[0,1]$ while the other is distributed over the bottom boundary $B_b=[0,2]\times\{0\}$. We discretise the rectangle  with 10000 nodes, of which 100 lie on $B_b$. As such $\z\in\mathbb{R}^{10000}$ and $\w\in\mathbb{R}^{100}$. In this example we again take $\Gamma_\z$ to be of the form (\ref{eq: pdeprior}), though now we take $a_1=a_2=1$, $a_3=0.125$, and impose an anisotropic smoothness structure by setting $\Theta=\diag(1,0.025)\in\mathbb{R}^{2 \times 2}$. We set $\GGamma_\w$ again as the exponential squared covariance matrix (see (\ref{eq: exp2prior})), but use a correlation length of $\ell=0.1$.
The desired correlation between the restriction of $\z$ to the boundary $B_b$ and $\w$ is set at 0.999 homogeneously. This is done by taking $\CC\in\mathbb{R}^{10000\times 100}$ with $\CC_{ij}=0$ unless $i$ and $j$ correspond to the same boundary node, in which case $\CC_{ij}=0.999$. 

In Figure \ref{fig: jointMdsamp} we show three samples of $\z$ and $\w$ (along with the restriction of $\z$ to the $B_b$, i.e., $\left.\z\right|_{B_b}$) drawn from the joint density $\pi(\z,\w)$. As in the previous example, we see that the approach works as intended, with $\left.\z\right|_{B_b}$ and $\w$ being positively correlated across $B_b$. Moreover, the anisotropic smoothness of $\z$ and higher order smoothness of $\w$ are evident.

\begin{figure}[t!]
\centering
$\z$ \hrulefill

\vspace{.1cm}

\includegraphics[height=0.155\textwidth]{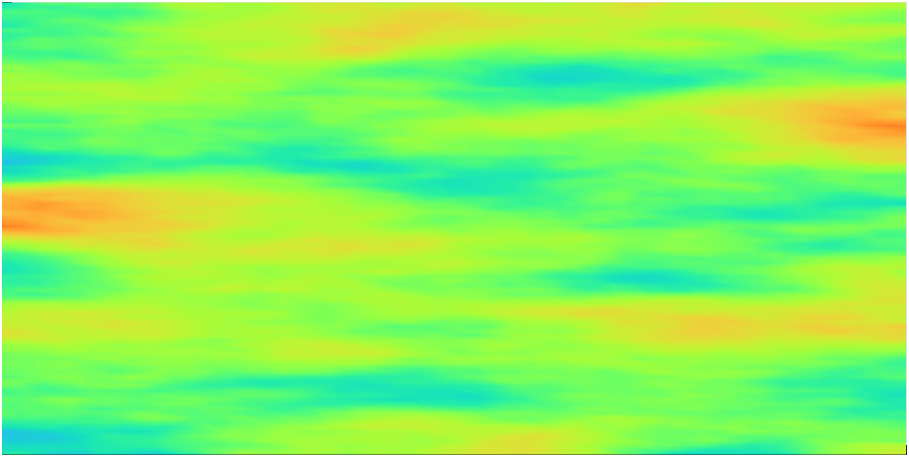}
\hfill
\includegraphics[height=0.155\textwidth]{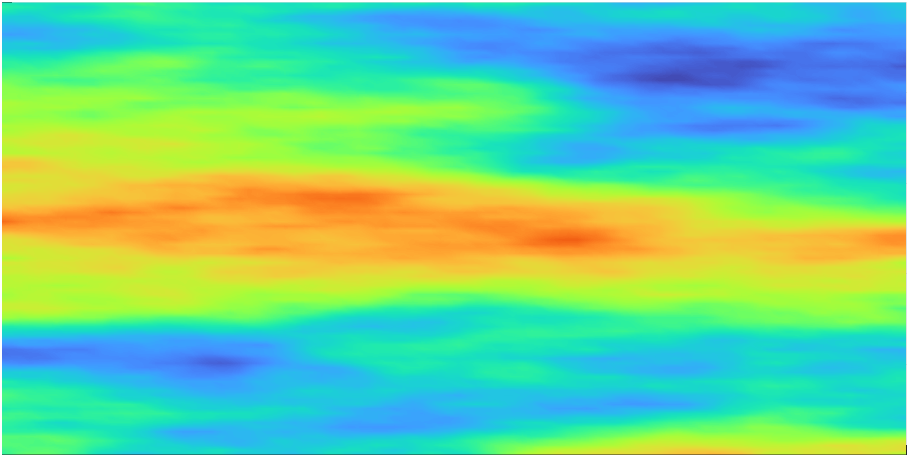}
\hfill
\includegraphics[height=0.155\textwidth]{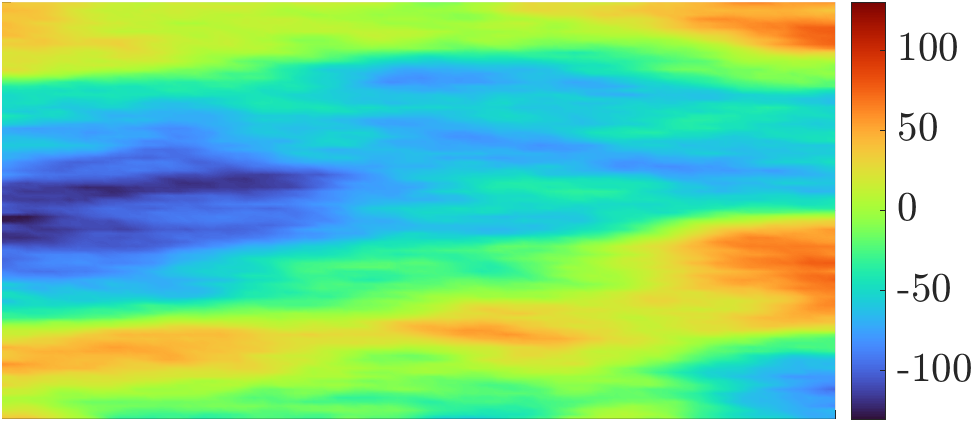}\\

$\w$ and $\left.\z\right|_{B_b}$ \hrulefill
\vspace{.1cm}

\includegraphics[height=0.155\textwidth]{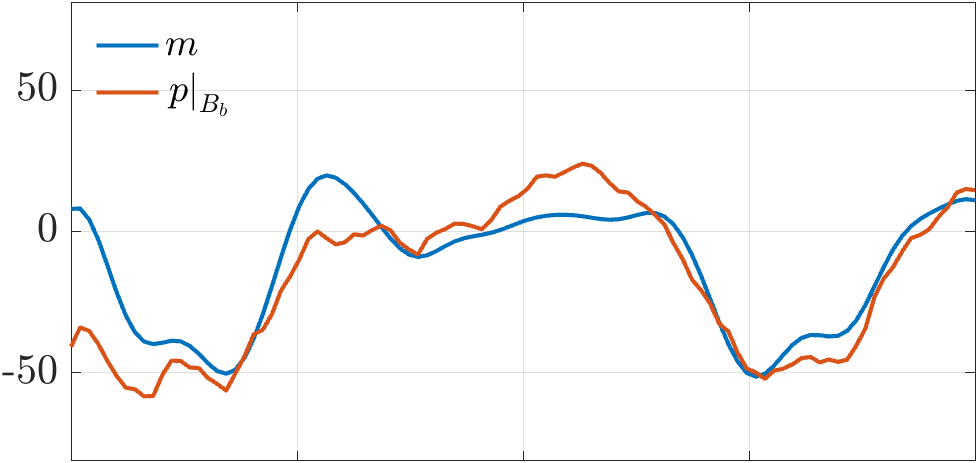}
\hfill
\includegraphics[height=0.155\textwidth]{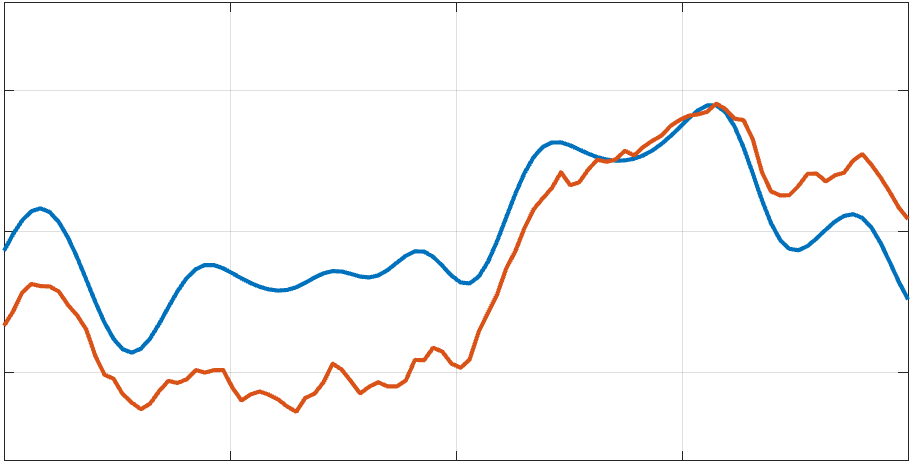}
\hfill
\includegraphics[height=0.155\textwidth]{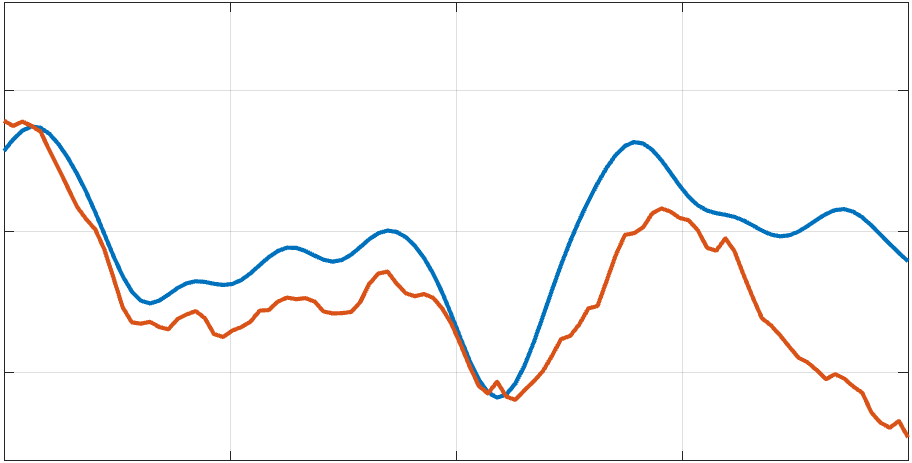}
 \caption{Three samples from the joint priors in Example 2. In the top row we show three samples of $\z$, while on the bottom row we show the three corresponding samples of $\w$ in red as well as the restriction of $\z$ to the boundary, i.e., $\left.\z\right|_{B_{b}}$, in blue. The desired correlation structure is encoded using $c(x)=0.999$ for $x\in B_{b}$. }\label{fig: jointMdsamp}
\end{figure}

\subsection{Choice of factorisation}\label{sec:factor}
The proposed joint covariance matrix $\GGamma$ given in (\ref{eq: CovOut}) is a proper covariance matrix independent of the choice of factorisation used, that is, $\GGamma\in\mathbb{S}^+_n$ for any $\LL_\z^{-1}$ and $\LL_\w^{-1}$ that satisfy $\GGamma_{\z}=(\LL_{\z}^T\LL_{\z})^{-1}$ and $\GGamma_{\w}=(\LL_{\w}^T\LL_{\w})^{-1}$.  However, as pointed out in Remark~\ref{rem:facts}, for different choices of whitening filters, the resulting cross-correlation structure is different. To see this, consider two new factors, say $\hat{\LL}_\z$ and $\hat{\LL}_\w$ satisfying $(\hat{\LL}_{\z}^T\hat{\LL}_{\z})^{-1}=(\LL_{\z}^T\LL_{\z})^{-1}=\GGamma_{\z}$ and $(\hat{\LL}_{\w}^T\hat{\LL}_{\w})^{-1}=(\LL_{\w}^T\LL_{\w})^{-1}=\GGamma_{\w}$, but where $\LL_{\z}\neq\hat{\LL}_{\z}$, and $\LL_\w\neq\hat{\LL}_\w$. Then in general,
\begin{align}
\LL_{\z}^{-1}\CC\LL_{\w}^{-T}\neq \hat{\LL}_{\z}^{-1}\CC\hat{\LL}_{\w}^{-T},
\end{align} 
which leads to differences in the joint statistics, in particular, the actual resulting cross-correlation. 

There is an infinite number of choices for each of the whitening filters. The most frequently used whitening filters in PDE-constrained inference are the (upper or lower) Cholesky factor, and the principal square root, both of which are unique~\cite{golub2013matrix}.  However, as we demonstrate below, use of the Cholesky factor typically induces some possibly unwanted structure into the cross-correlation matrix $\Phi$, due to its triangular (causal) form. We therefore would generally advocate for using the principal square root (if computationally feasible). 

To shed some light on the effects of using the different whitening filter we consider a one dimensional example where the two parameters $\z$ and $\w$ are defined over a regular discretisation of $\Omega=[0,1]$. Suppose further that the marginal densities are identical, i.e., $\z\sim\mathcal{N}(0,\GGamma_\z)$ and $\w\sim\mathcal{N}(0,\GGamma_\w)$ with $\GGamma_\z=\GGamma_\w$ given by the exponential squared covariance matrix (see (\ref{eq: exp2prior})) with correlation length $\ell=0.1$.
We attempt to encode strong positive correlation in the left half of the domain and strong negative correlation in the right half of the domain by taking $c(x)=0.999$ for $x\leq\frac{1}{2}$ and $c(x)=-0.999$ for $x>\frac{1}{2}$. 

The resulting cross-correlation and several samples from the joint distributions are shown in Figure~\ref{fig: factorChoice} when taking the principal square root factorisation and when taking the Cholesky factorisation. As alluded to above, use of the Cholesky factor induces some (spatial) skewness towards the right in the cross-correlation matrix, while the  principal square root factor does not. In this example, this leads to samples from the joint distribution being positively correlated over most of the domain. On the other hand, samples from the joint distribution based on the principal square root are equally positively and negatively correlated over the domain.

It is also worth pointing out at this point, for many Bayesian inverse problems the prior covariance matrices used  are smoothing operators~\cite{RoininenHuttunenLasanen14,Stuart10,KaipioKolehmainenVauhkonenEtAl99,BuiGhattasMartinEtAl13}. Thus the left- and right-conditioning of $\CC$ by $\LL_\z^{-1}$ and $\LL_\w^{-1}$, respectively, tends to smooth the intended cross-correlation embedded in $\CC$, as seen in Figure \ref{fig: factorChoice}.


\begin{figure}[t!]
\centering

\begin{tikzpicture}

\node[inner sep=0pt] (a) at (0,.04)
{\includegraphics[width=0.3225\textwidth]{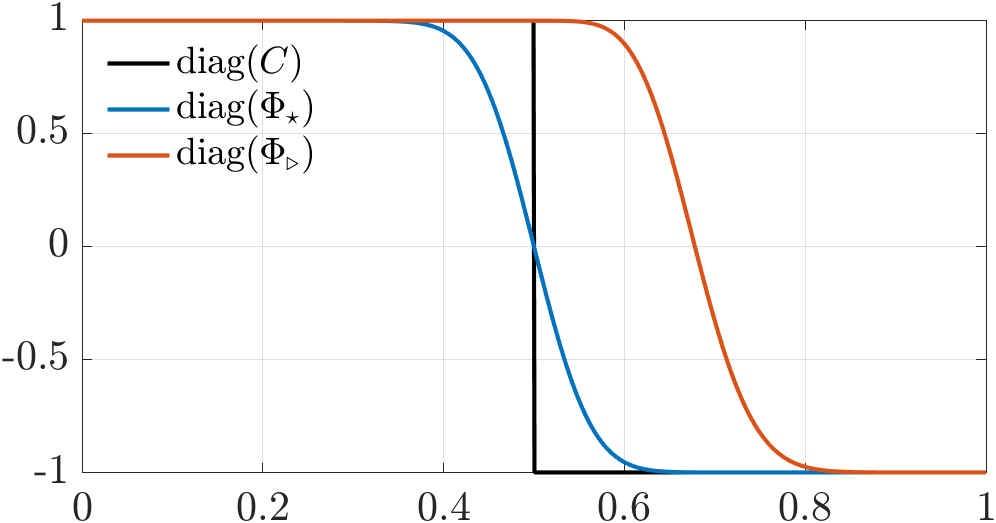}};

\node[inner sep=0pt] (a) at (5.5,0)
{\includegraphics[width=0.3\textwidth]{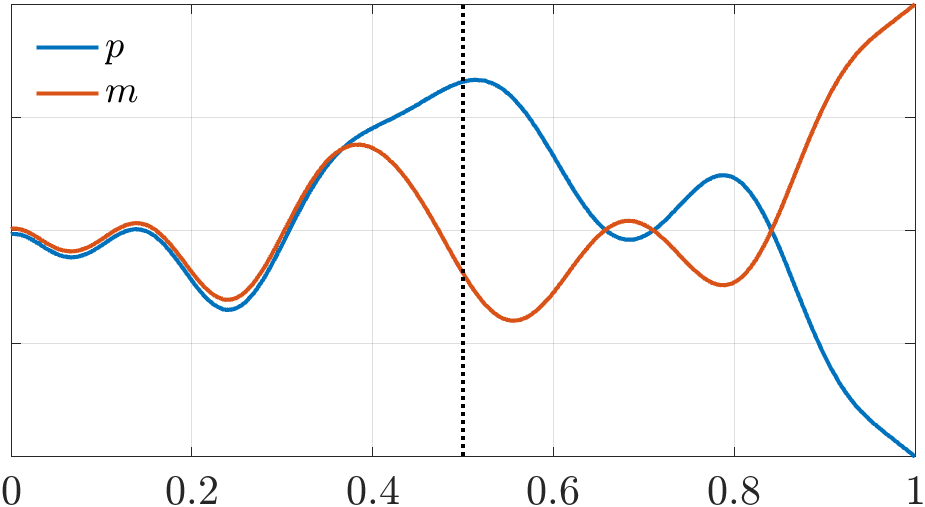}};

\node[inner sep=0pt] (a) at (11,0)
{\includegraphics[width=0.3\textwidth]{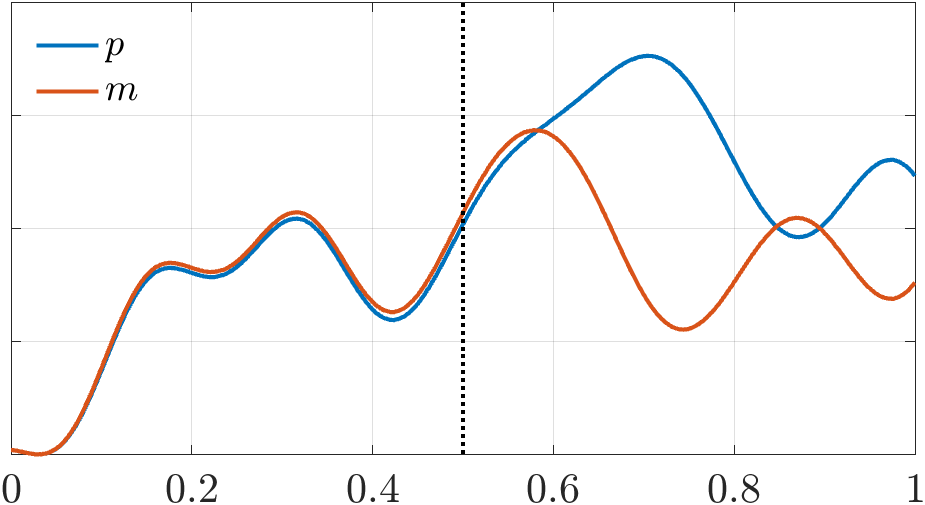}};

\node at (0,1.75) {Correlation};
\node at (5.5,1.75) {Principal square root};
\node at (11,1.75) {Cholesky factor};
\end{tikzpicture}    
 \caption{Demonstration of the effects of different factorisations of $\Gamma_\z$ and $\Gamma_\w$. On the left we show the desired correlation structure encoded in $\diag(C)$ (in black), along with the actual resulting diagonal of the correlation matrices $\Phi_\star$ (in blue) and $\Phi_\triangleright$ (in red) found  using the principal square root and the Cholesky factor, respectively. In the centre and on the right we show samples from the joint prior of $\z$ (in blue) and $\w$ (in red) using the principal square root factor and Cholesky factor, respectively.}
\label{fig: factorChoice}\end{figure}

\section{Ignorance isn't always bliss}\label{sec: Ignorance}
If there is no prior knowledge of any relationship  (correlation or otherwise) between parameters, then the default approach is to treat them as independent, see e.g.,~\cite{recinos2023framework,petra2012inexact,NicholsonNiskanen21,hanninen2020application} among many. In the case of a joint Gaussian prior this amounts to setting the cross-correlation to 0, i.e., taking the prior covariance matrix as block diagonal. However, in the Bayesian framework, setting the correlation to 0 implies the correlation is {\em known} a priori to be 0. This can at best result in overly optimistic uncertainty estimates, and at worst can result in significant overconfidence in heavily biased estimates.

To help remedy this situation one should, at least attempt to, model the uncertainty in the correlation. This can be done by taking $\CC$ as a random variable with an associated prior density $\pi(\CC)$. If no prior beliefs on the cross-correlation are held, a natural starting point is the uniform prior over contractions, i.e., $\pi(\CC)=\mathcal{U}(\mathbb{A}^{n_1\times n_2})$. Independent of the choice of prior taken on $\CC$, the full posterior can be written as
\begin{align}\label{eq: fullpost}
\pi(\z,\w,\CC\vert\data)\propto\pi(\data\vert\z,\w,\CC)\pi(\z,\w,\CC)=\pi(\data\vert\z,\w)\pi(\z,\w\vert\CC)\pi(\CC),
\end{align}
since the likelihood is conditionally independent of $\CC$.

Incorporating cross-correlation into the prior through the proposed methodology means the joint prior of $\pi(\z,\w,\CC)$ can be decomposed in a straightforward manner. Specifically, taking $s=(\z,\w)\in\mathbb{R}^n$ we have 
\begin{align}
\pi(s,\CC)&=\pi(s\vert\CC)\pi(\CC)\\
&\propto\frac{1}{\vert\GGamma\vert^{\frac{1}{2}}}\exp\left\{-\frac{1}{2}\norm[\GGamma^{-1}]{s-s_\ast}^2\right\}\pi(\CC)\\
&=\exp\left\{-\frac{1}{2}\left(\norm[\GGamma^{-1}]{s-s_\ast}^2+\ln{\vert\GGamma\vert}\right)\right\}\pi(\CC)\\
&\propto\exp\left\{-\frac{1}{2}\left(\norm[\GGamma^{-1}]{s-s_\ast}^2+\ln{\vert I-\CC\CC^T\vert}\right)\right\}\pi(\CC),
\end{align}
where the last line follows from the fact that 
\begin{align}
\vert\GGamma\vert=\vert \GGamma_\z-(\LL_\z^{-1}\CC\LL_\w^{-T})\GGamma_\w^{-1}(\LL_\w^{-1}\CC^T\LL_\z^{-T})  \vert \cdot\vert\GGamma_\w\vert=\vert\LL_\z^{-1}\vert \cdot \vert I-\CC\CC^T\vert \cdot\vert\LL_\z^{-T}\vert \cdot\vert\GGamma_\w\vert\propto \vert I-\CC\CC^T\vert.\nonumber
\end{align}
It is worth noting that for cases in which $n_1>n_2$, for computational ease, 
Sylvester’s determinant theorem can be used, i.e.,  $\vert I-\CC\CC^T\vert=\vert I-\CC^T\CC\vert$. Furthermore, if $\CC$ is diagonal we have $\vert I-\CC\CC^T\vert=\prod_{i=1}^{\min\{n_1,n_2\}}(1-C_{ii}^2)$.

\subsection{Computational considerations}\label{sec: CompCon}
We will be particularly interested in accurately characterising the full posterior using Markov chain Monte Carlo (MCMC) methods (Section \ref{sec: ID} provides motivation for this). To define the likelihood used throughout the numerical experiments, we consider the additive error model, i.e., a relationship between the data and parameters of the form in~(\ref{eq: genjoint}). We assume the observational errors are centred Gaussian and independent of the unknown parameters, i.e.,
$\e\sim\mathcal{N}(0,\Gamma_e)$ with $(\z,\w)\perp \e$.
The likelihood is then~\cite{KaipioSomersalo05}
\begin{align}
 \pi(\data|\z,\w)\propto\exp\left\{-\tfrac{1}{2}\norm[\Gamma_e^{-1}]{\data-\f(\z,\w)}^2\right\}.   
\end{align}
 
 For PDE-based inverse problems with higher-dimensional unknown parameters (the problems of interest here) the computational costs of characterising the posterior are dominated by repeated forward solves, while statistical efficiency is dominated by how well the sampler mixes. To address these challenges, we propose (i) a Metropolis-within-Gibbs (MwG) approach to the sampling to leverage conditional distributions available from the joint prior, and (ii)  a truncated Karhunen-Lo\`eve (KL) expansion of the parameters to reduce the dimensions.

\paragraph{An adaptive Metropolis-within-Gibbs sampler}
The MwG sampler is well-suited to Bayesian inverse problems involving hyper-parameters and has been widely applied to such problems, see for example, \cite{dunlop2017hierarchical,robert1999monte,agapiou2014analysis,roininen2019hyperpriors,bardsley2019metropolis}. Setting $s=(\z,\w)\in\mathbb{R}^n$, the approach is based on generating samples from the joint posterior $\pi(s,\CC|\data)$ by alternately updating $s$ conditional on $\CC$, and $\CC$ conditional on $s$. More precisely, starting from the current state
\[s^{(k)}\sim\pi(s|C^{(k)},\data)\propto\exp\{-\tfrac{1}{2}\alpha(s,C^{(k)})\},\]
where
\[\alpha(s,C)=\norm[\Gamma_e^{-1}]{\data-\f(s)}^2+\norm[\GGamma^{-1}(\CC)]{s-s_\ast}^2,\] 
and then
\[C^{(k+1)}\sim\pi(C|s^{(k)},\data)\propto\exp\{-\tfrac{1}{2}\beta(s^{(k)},C)\},\] 
where 
\[\beta(s,C)=\norm[\GGamma^{-1}(\CC)]{s-s_\ast}^2+\ln{\vert I-\CC\CC^T\vert}-2\ln{\pi(\CC)}.\]

In practice, when direct sampling from these conditional distributions is not available, the updates are carried out using Metropolis–Hastings steps. In the present work we restrict attention to Gaussian proposal distributions $q(\cdot|\cdot)$, although a range of more sophisticated proposal mechanisms and sampling strategies could also be employed. The resulting adaptive MwG scheme is summarised in Algorithm~\ref{alg: MwG}.

\begin{algorithm}[h!]
\caption{Adaptive Metropolis-within-Gibbs}\label{alg: MwG}
\begin{algorithmic}[1]
\State Choose an initial state $(s^{(0)},C^{(0)}) \in \mathbb{R}^n \times \mathcal{A}^{n_1\times n_2}$, and set $k=0$
\For{$k=0,1,2,\dots,M-1$}
    \State Propose $s' \sim q_s^{(k)}(\,\cdot\,|s^{(k)})$
    \State Compute
    \[
    a_s=\min\left\{1,\,
    \frac{\pi(s'|C^{(k)},d)\,q_s^{(k)}(s^{(k)}|s')}
         {\pi(s^{(k)}|C^{(k)},d)\,q_s^{(k)}(s'|s^{(k)})}
    \right\}
    \]
    \State Set $s^{(k+1)} = s'$ with probability $a_s$; otherwise set $s^{(k+1)}=s^{(k)}$
    
    \State Propose $C' \sim q_C^{(k)}(\,\cdot\,|C^{(k)})$
    \State Compute
    \[
    a_C=\min\left\{1,\,
    \frac{\pi(C'|s^{(k+1)},d)\,q_C^{(k)}(C^{(k)}|C')}
         {\pi(C^{(k)}|s^{(k+1)},d)\,q_C^{(k)}(C'|C^{(k)})}
    \right\}
    \]
    \State Set $C^{(k+1)} = C'$ with probability $a_C$; otherwise set $C^{(k+1)}=C^{(k)}$
    
    \State Update the proposal parameters defining $q_s^{(k)}$ and/or $q_C^{(k)}$ if adaptation is used
\EndFor
\end{algorithmic}
\end{algorithm}

\begin{rmk}\label{rm: MwG}
    The Metropolis-within-Gibbs approach is particularly well suited for cases in which the forward model is linear, i.e., $\f(s)=Gs$ for some $G\in\mathbb{R}^{q\times n}$, since in such cases $\pi(s|\CC,\data)$ is Gaussian. Consequently, updating $s$ can be done directly by sampling from $\pi(s|d,\CC)$, avoiding the need for an additional Metropolis step, i.e., the approach simplifies to exact Gibbs updates for $s$, with only the updates for $\CC$ requiring Metropolis-Hastings.
\end{rmk}
\paragraph{Dimension reduction based on truncated Karhunen-Lo\`eve expansion}

A standard approach used to reduce the parameter dimension in Bayesian inverse problems is to use a truncated KL expansion of the parameters~\cite{Jolliffe05,LipponenSeppanenKaipio13,peherstorfer2018survey}. First, we consider the eigenvalue decomposition of each of the prior covariance matrices;
\begin{align}\label{eq: covs}
\GGamma_\z=\sum_{i=1}^{n_1}\lambda_iv_iv_i^T=V\Lambda V^T,\quad \GGamma_\w=\sum_{j=1}^{n_2}\sigma_ju_ju_j^T=U\Sigma U^T.
\end{align} 
The truncated KL expansions of each of the parameters are then given by
\begin{align}\label{eq:TKL}
\tilde\z=\z_\ast+\sum_{i=1}^{k_\z}\sqrt{\lambda_i}\hat{\z}_iv_i,\quad \tilde\w=\w_\ast+\sum_{j=1}^{k_\w}\sqrt{\sigma_j}\hat{\w}_ju_j,
\end{align}
where we take $k_\z\ll n_1$ and $k_\w\ll n_2$. The goal of the joint inference problem is then to estimate
$\hat{\z}\in\mathbb{R}^{k_\z}$ and $\hat{\w}\in\mathbb{R}^{k_\w}$, which have associated prior distributions $\hat{\z}\sim\mathcal{N}(0,I)$ and $\hat{\w}\sim\mathcal{N}(0,I)$. Furthermore, for a given desired cross-correlation encoded in $C\in\mathbb{A}^{n\times m}$, the joint prior covariance for $\hat{\z}$ and $\hat{\w}$ is (using (\ref{eq: covs}))
\begin{align}
\hat{\Gamma}&=\begin{pmatrix}\Lambda^{-\frac{1}{2}} V^T & 0\\0 & \Sigma^{-\frac{1}{2}} U^T \end{pmatrix}\Gamma\begin{pmatrix}V\Lambda^{-\frac{1}{2}} & 0\\0 & U\Sigma^{-\frac{1}{2}} \end{pmatrix}=\begin{pmatrix}I & V^T\CC U\\U^T\CC^TV & I \end{pmatrix},\nonumber
\end{align} 
while the joint prior mean is $(\hat{\z}_\ast,\hat\w_\ast)=(0,0)$.

\subsection{Parameterisation, identifiability and estimability of $\CC$}\label{sec: ID}
In this section we briefly outline some important features related to the parameterisation, identifiability, and  estimability of the matrix $\CC$. 

\paragraph{Parameterisation and identifiability} First, we note that there are several standard choices of prior distributions used when estimating covariance and correlation matrices, e.g., Wishart priors and/or Lewandowski-Kurowicka-Joe (LKJ) priors~\cite{gelman1995bayesian}. However,  in the current study the marginal prior covariance matrices should not be altered, thus a natural (non-informative) choice for the prior on $\CC$  is $\pi(\CC)=\mathcal{U}(\mathbb{A}^{n_1\times n_2})$, i.e., the uniform distribution over $n_1\times n_2$ strict contractions. Then, by Theorem \ref{Thm2}, for any choice of $\CC$ from $\pi(\CC)$, the joint covariance matrix $\GGamma$ is a proper covariance matrix. Thus, in theory, given $\pi(\CC)$ and measured data we could attempt to estimate the $n_1\times n_2$ elements of $\CC$. However, without strong regularisation or additional structure the problem of estimating $\CC$ is significantly under-determined. In the current paper we attempt to remedy this situation by (a) postulating only correlation structures based on diagonal matrices $\CC$ and (b) we enforce strong spatial structure within $\CC$. The choice of restricting to diagonal matrices $\CC$ can be seen as prescribing (or estimating) only the canonical correlations themselves (recall the spectrum of $\Gamma_\z^{-\frac{1}{2}}\GGamma_{\z\w}\Gamma_\w^{-\frac{1}{2}}=\CC$ is the canonical correlations) and leads to a uniform prior of the form
\[\pi(\CC)=\Pi_{i=1}^{\min\{n_1,n_2\}}\pi(\CC_{ii})=\Pi_{i=1}^{\min\{n_1,n_2\}}\mathcal{U}(-1,1).\]
In the examples considered in the current work the parameters of interest $\z$ and $\w$ represent spatially distributed geophysical quantities. Nonspatial cross-correlations of $\z$ and $\w$ are usually estimated from samples obtained from various locations, for example, from boreholes, and treated as spatially homogeneous, suggesting $\text{corr}(p(x),m(x)) \equiv c$ over $x\in\Omega_k$ for a specific subsurface type. On the other hand, the (marginal) spatial correlation structures are supposed to be specified first. Accordingly, in the numerical examples considered here we restrict attention to diagonal contractions. To ensure $|\CC_{ii}|<1$, we reparameterise each of the correlation coefficients as $\CC_{ii}=\tanh(\gamma_i)$, with $\gamma_i\in\mathbb{R}$ having the prior induced by the transformation, i.e., $\pi(\gamma_i)=\frac{1}{2}\text{sech}^2(\gamma_i)$.
\paragraph{Estimability} 

The maximum a posteriori (MAP) estimate, defined as the point\footnote{See~\cite{helin2015maximum,dashti2013map} for an infinite-dimensional interpretation} that maximises the posterior, is often used as a computationally efficient estimator of the parameters. However, like any other point estimate, the MAP estimate can poorly represent the posterior (see \cite[Section 3.1.1]{KaipioSomersalo05}). Indeed, for the approach proposed here, when estimating both the parameters and the correlation (encoded via) $\CC$, the joint MAP estimate can be arbitrarily misleading, due to the unbounded form of the joint prior. We therefore use MCMC methods in this work and use the conditional mean (CM) estimate as the representative point estimate. This issue is illustrated in the Monod example of the next section, while more insights are given in Appendix \ref{sec: appMAP}. 

\subsection{A preliminary inference example}\label{sec:MONOD}
Here, we demonstrate the applicability of the proposed approach for inference on a simple
algebraic model from microbiology and microbial
ecology, the Monod model \cite{Koch98}. The Monod model is used to model bacterial population $\mu$ as a function of substrate concentration $S$ in the form
\begin{align}
\mu=\frac{\z S}{\w+S},
\end{align}
where $\z$ denotes the maximum population, and $\w$ the half-velocity constant. The Monod model has been used in the literature previously to demonstrate proposed methods in Bayesian inference, see \cite{BardsleySolonenHaarioEtAl14}. The parameters of interest $(\z,\w)$ are to be estimated based on several measurements of the population at different substrate concentrations. Specifically, we take measurements at $S=[28, 55, 83, 110, 138, 225, 375]$ which are corrupted by centred additive {\em iid} Gaussian noise.

We consider two slightly different approaches to the problem. First, to illustrate the dangers of mis-modelling the correlation, we consider the estimation problem using different values for the cross-correlation. Secondly, we consider the problem when the cross-correlation is also taken to be a random variable. In all cases the prior means are taken as $\z_\ast=0.4$ and $\w_\ast=40$, while the marginal prior standard deviations are set at $\sigma_\z=0.1$ and $\sigma_\w=10$. In the first of these cases we solve the problem using $c\in\mathcal{C}:=\{-0.99,-0.85,0,0.85,0.99\}$, and for noise levels of standard deviation $\delta_e=0.1$ and $\delta_e=0.03$. That is, the prior and noise models are given by
\[\pi(\z,\w)=\mathcal{N}\left(\begin{pmatrix}0.4\\40\end{pmatrix},\begin{pmatrix}0.1^2 & c\\c & 10^2 \end{pmatrix}\right),\quad \pi(e)=\mathcal{N}(0,\delta_e^2I),\]
and for the second approach ($c$ being unknown) we set $\pi(c)=\mathcal{U}(-1,1)$ and consider $\delta_e=0.03$ only.

Level set plots of each of the prior distributions and resulting posterior distributions for each $c\in\mathcal{C}$ and both noise levels are shown in Figure~\ref{fig: MonodFixed}. The truth is taken as $(\z_{\rm true},\w_{\rm true})=(0.7,65)$, so that we expect the joint inference to perform better (i.e., give posterior densities which support the truth better) when using the positively correlated priors. This is indeed the case, particularly in the case with larger noise. In both cases the posterior found when using a prior with $c=0.85$ gives the best results.

\begin{figure}[t!]
\centering
\begin{tikzpicture}
\node[inner sep=0pt,above] (a) at (0,0)
    {\includegraphics[width=0.3175\textwidth]{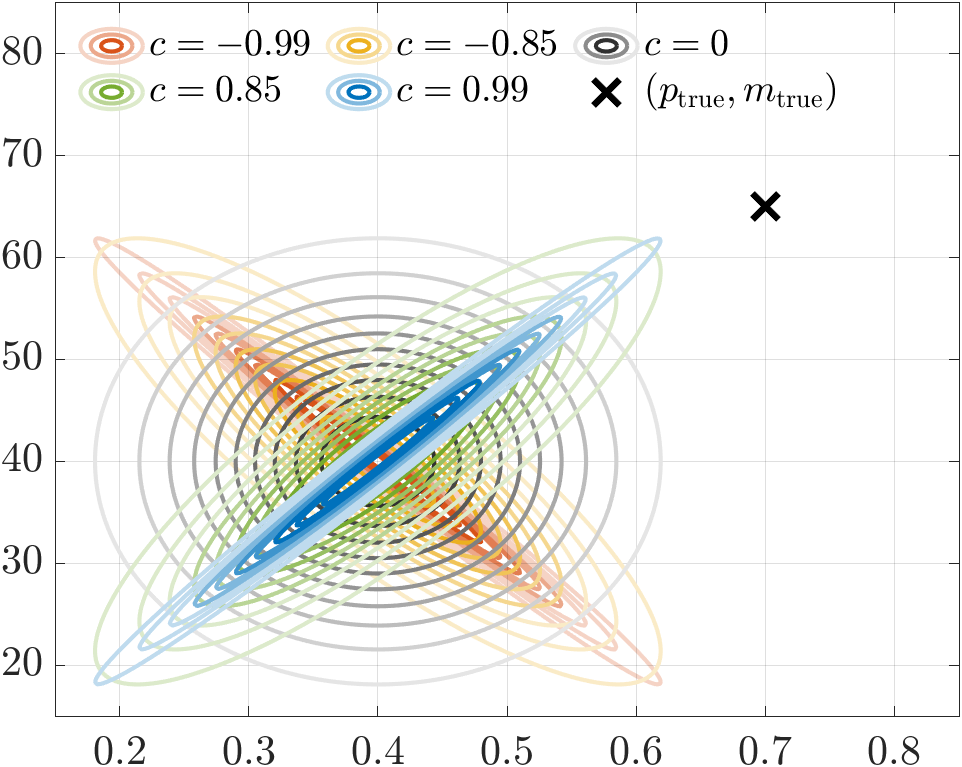}};
    \node at (-2.9,2) {$\w$};
    \node at (.125,-.2) {$\z$};
\node[inner sep=0pt,,above] (b) at (5.42,0)
    {\includegraphics[width=0.3\textwidth]{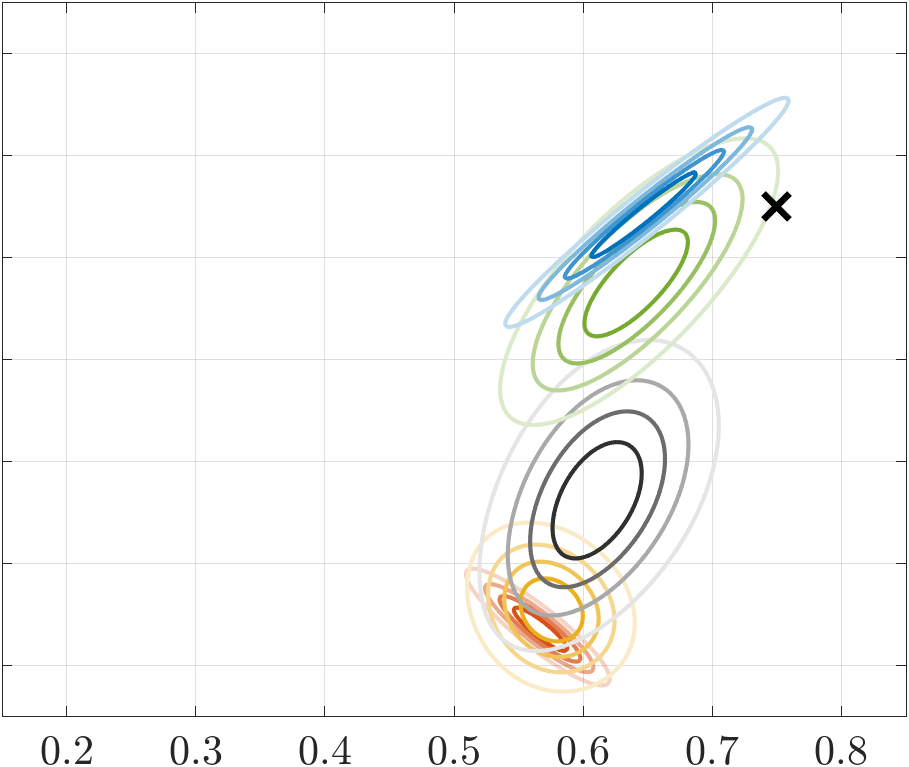}};

        \node at (5.4,-.2) {$\z$};
    \node[inner sep=0pt,above] (c) at (10.7,0)
    {\includegraphics[width=0.3\textwidth]{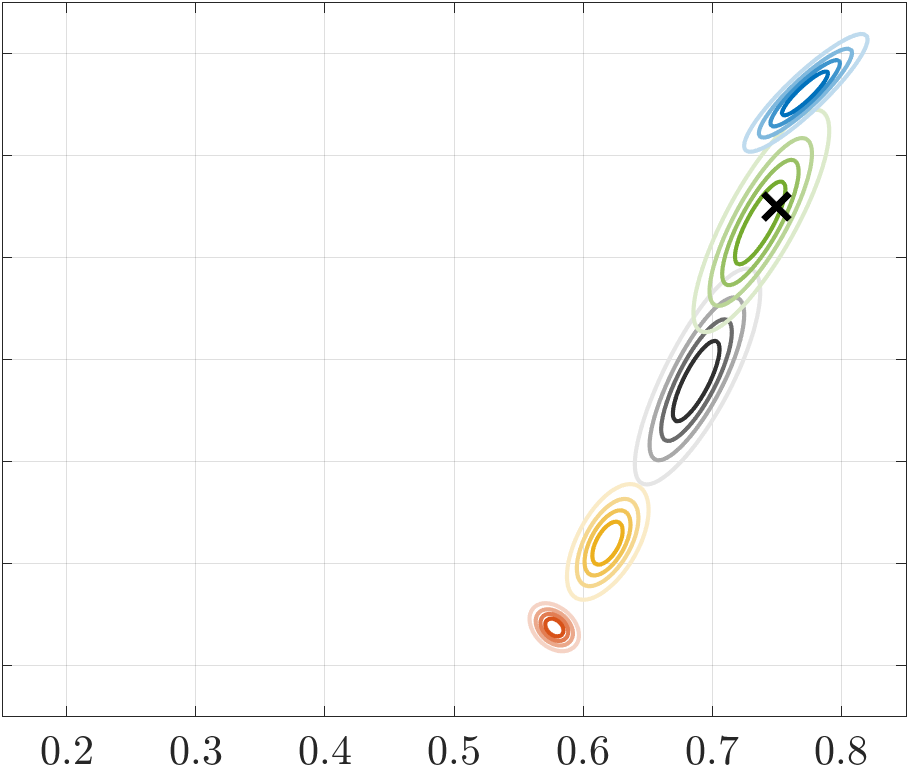}};

        \node at (10.7,-.2) {$\z$};

        \node at (0,4.75) {Prior};
\node at (5.5,4.75) {Posterior $(\delta_e=0.1)$};
\node at (11,4.75) {Posterior 
 $(\delta_e=0.03)$};
\end{tikzpicture}
\caption{Joint distributions for the Monod example with fixed values of $c$. On the left are the joint priors, in the middle are the joint posteriors for $\delta_e=0.1$, while on the right are the joint posteriors for $\delta_e=0.03$. In all cases the colors red, yellow, black, green, blue are used for the cases $c=-0.99$, $c=-0.85$, $c=0$, $c=0.85$, and $c=0.99$, respectively. Furthermore,  in all cases the black cross represents the truth.}\label{fig: MonodFixed}
\end{figure}

In the second approach to the Monod model we take $\z$, $\w$ and $c$ as unknown and compute the joint posterior $\pi(\z,\w,c|\mu)$. A corner plot showing the one and two parameter marginal posterior distributions is shown in Figure~\ref{fig:MONODfull}. The approach seems to perform fairly well; The marginal densities of the parameters $\z$ and $\w$ contain the true values well, and the posterior for $c$ indicates that a positive correlation is more likely than a negative one. The two dimensional marginals and the marginal distribution of $c$ illustrate the potential pitfalls of using an optimisation-based inference approach. When $c\to1$ the posterior becomes unbounded and poorly represents the truth, see Appendix~\ref{sec: appMAP} for more insight into this.
\begin{figure}[h!]
\centering
\begin{tikzpicture}
\node[inner sep=0pt] (a) at (0,0)
    {\includegraphics[width=0.3\textwidth]{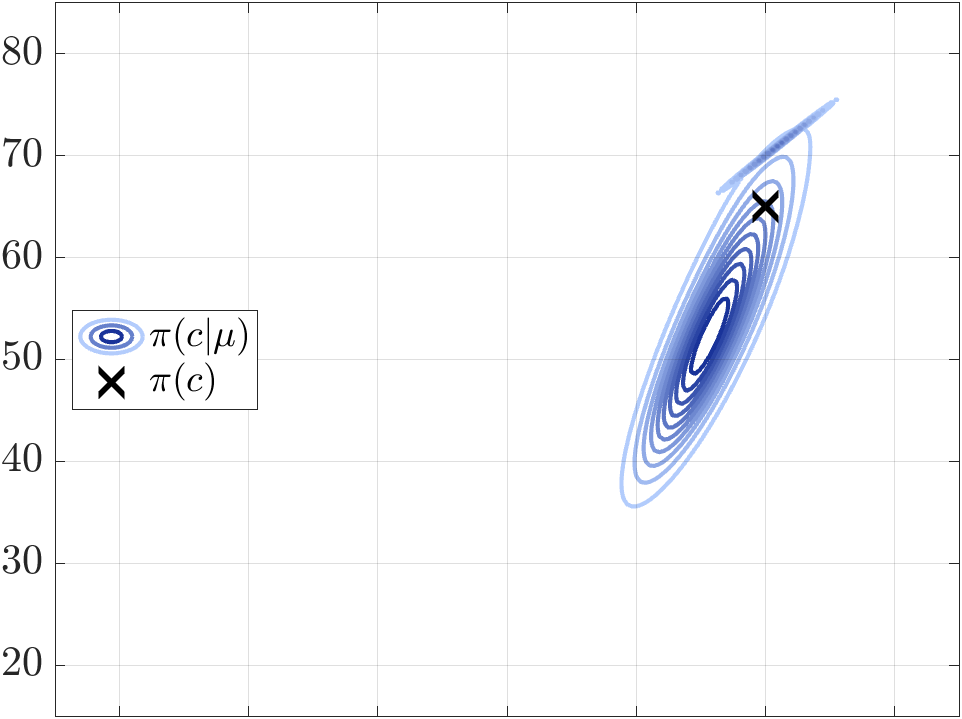}};

        \node at (-2.7,0) {$\w$};
    \node at (0,-6.5) {$\z$};
\node[inner sep=0pt] (b) at (-.07,-4.25)
    {\includegraphics[width=0.309\textwidth]{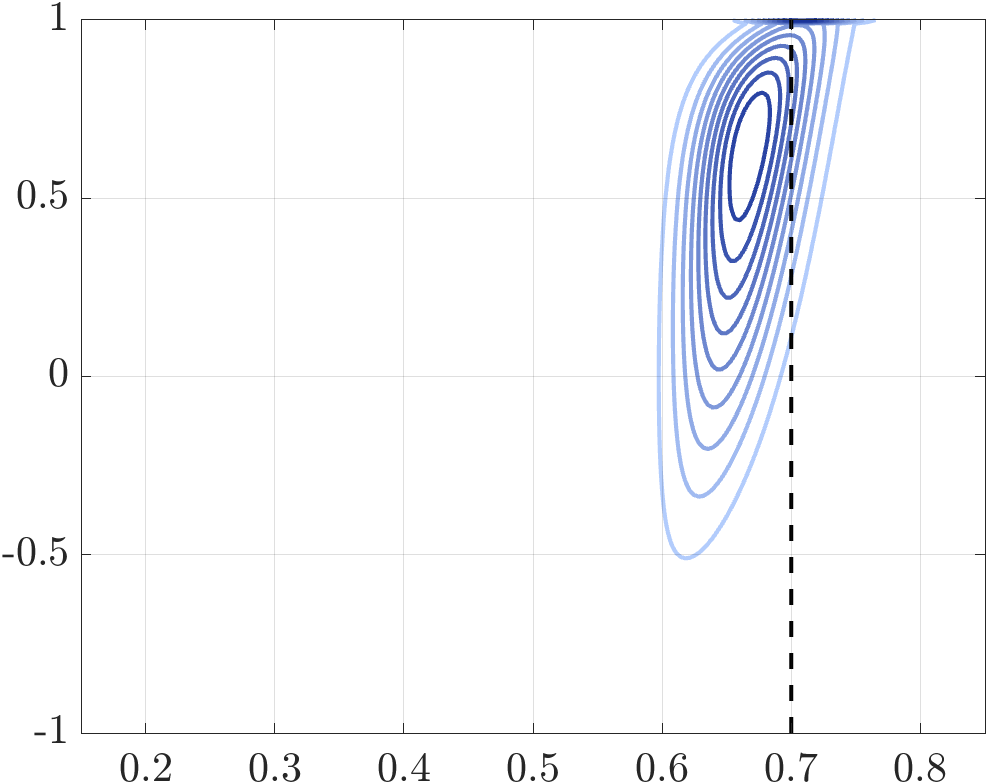}};
            \node at (3,0) {$c$};
    
    \node[inner sep=0pt] (c) at (5.62,-4.29)
    {\includegraphics[width=0.285\textwidth]{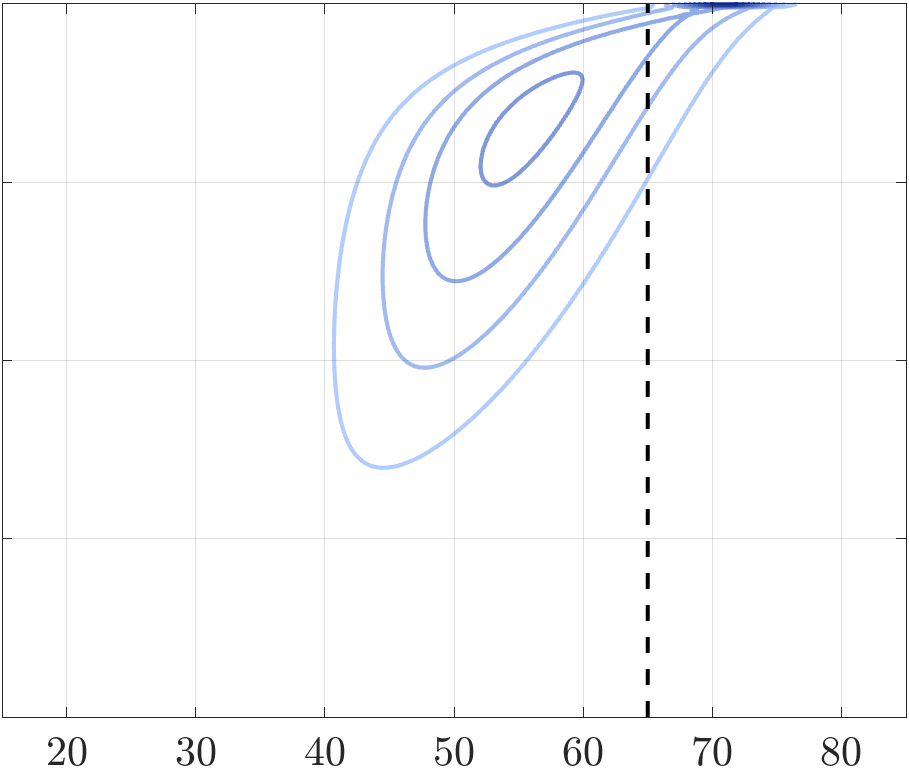}};
            \node at (-2.7,-4.29) {$c$};
    \node at (5.62,-6.5) {$\w$};
    
         \node[inner sep=0pt] (a) at (0,4.25)
    {\includegraphics[width=0.3\textwidth]{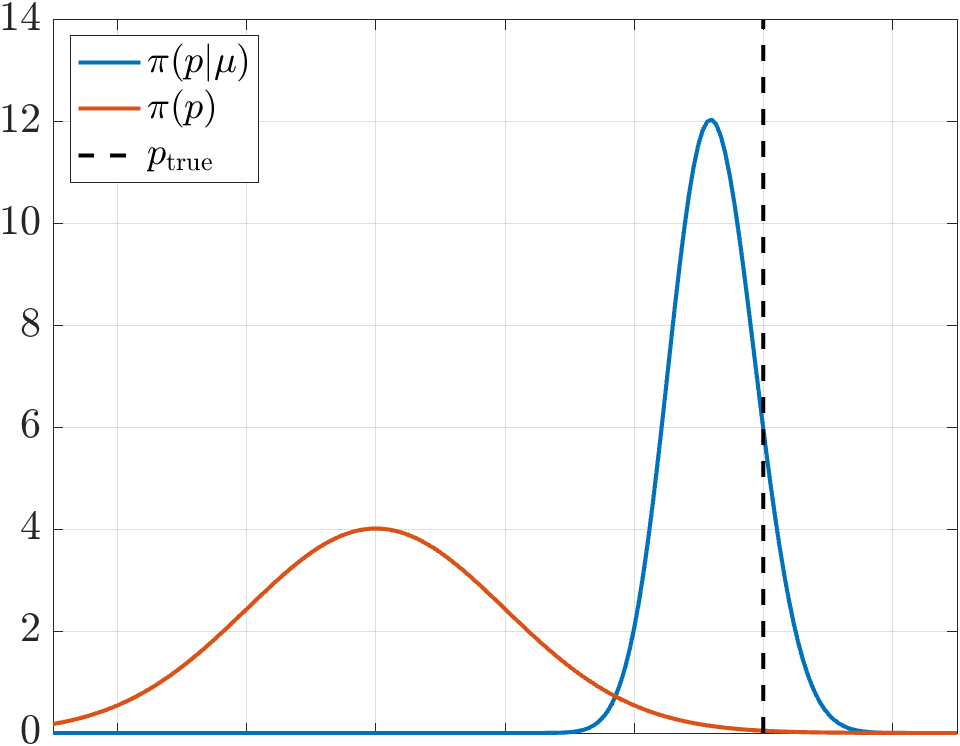}};

     \node[inner sep=0pt] (a) at (5.4,0)
    {\includegraphics[width=0.31\textwidth]{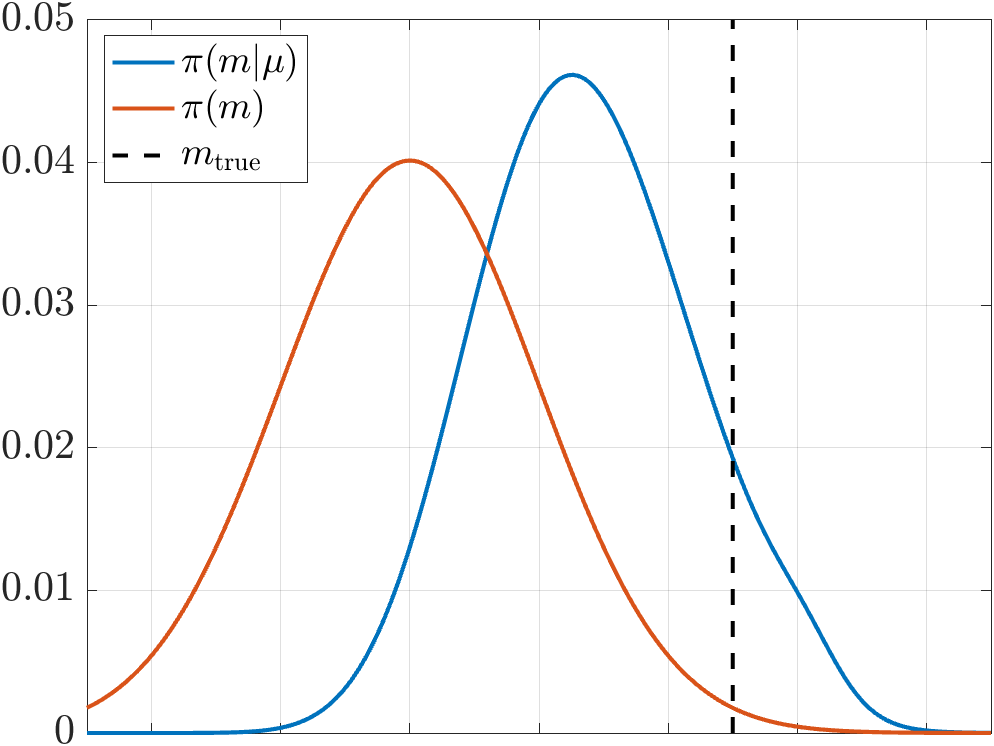}};
    
    \node[inner sep=0pt] (a) at (10.8,-4.25)
    {\includegraphics[width=0.306\textwidth]{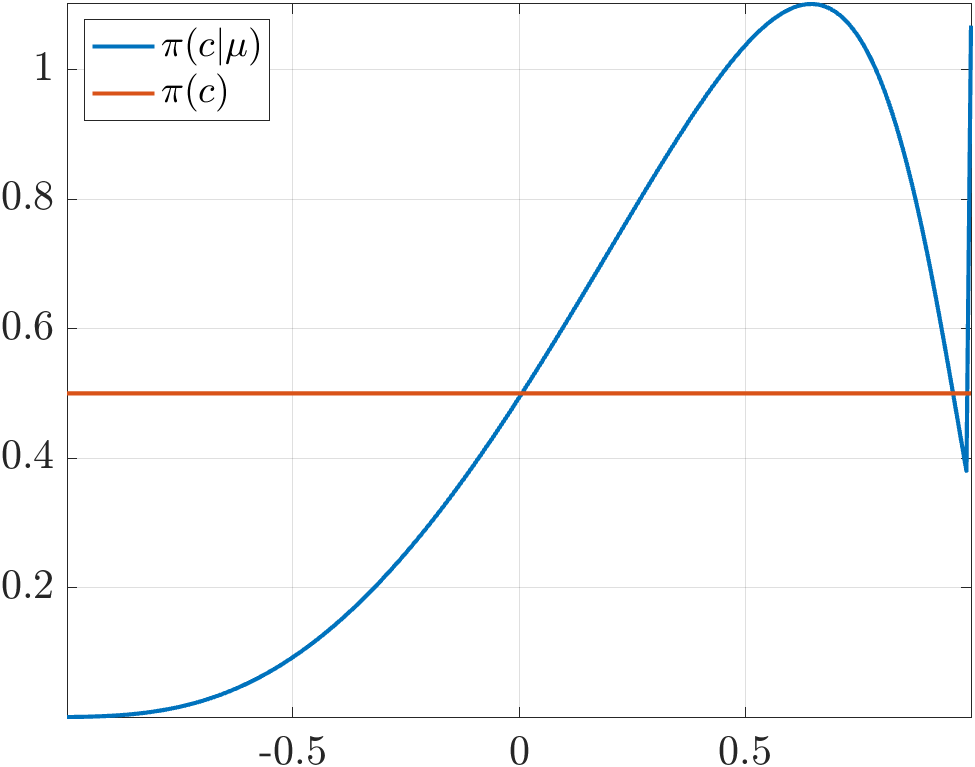}};

\node at (10.9,-6.5) {$c$};

\end{tikzpicture}
 \caption{Corner plot for the posterior of the Monod example where $c$ is also treated as an unknown and estimated. The true parameters are $(\z_{\rm true},\w_{\rm true})=(0.7,65)$. }\label{fig:MONODfull}
\end{figure}

\section{Higher-dimensional computational experiments}\label{sec: NumEx}
In this section we consider two examples motivated by geophysical situations in which the (discretised) parameters  are significantly higher dimensional than the previous example. We first consider an example of co-Kriging, and secondly a subsurface flow problem where we estimate the permeability and recharge of an aquifer.

In both examples the computational domain is taken as $\Omega=[0,2]\times[0,1]$. Furthermore, in both examples we consider a finite element method (FEM) discretisation of the parameters. For simplicity, both parameters, $\z$ and $\w$, are represented using the same FEM basis functions, i.e., $\z=\sum_{i=1}^{n_1}\z_i\phi(x)$
and $\w=\sum_{i=1}^{n_2}\w_i\psi(x)$ (note we take $n_1=n_2$). The mesh consists of 1250 nodes, and 2352 triangular elements, thus (with a slight abuse of notation) the parameters of interest are the $\z,\w\in\mathbb{R}^{1250}$ coefficients of the FEM basis functions.

In line with the rest of the paper, the PDE-based covariance matrix (Equation~(\ref{eq: pdeprior})) and the exponential-squared covariance matrix (Equation~(\ref{eq: exp2prior})) are  used as the marginal prior covariance matrices. Specifically, for $\Gamma_\z$ we use an exponential-squared covariance matrix with a correlation of $\ell=0.3$, while for $\Gamma_\w$ we use the PDE-based covariance matrix with $a_1=1.5$, $a_2=30$, and $a_3=7.5$. Furthermore, in both examples we use our proposed approach to estimate $\z$, $\w$ as well as $\CC$. We compare the results against those found when we assume a priori that the parameters $\z$ and $\w$ are independent. In the first example we also show results found when using fixed values of the $\CC$.

\paragraph{Metrics}
To investigate the performance of the proposed joint approach, we compare the results against those obtained under the standard assumption that the two parameters are a priori independent. Specifically, we consider the relative errors in the CM estimates and the true parameter values,
\begin{align}
    \mathtt{E}(\z)=\norm{\z_{\rm true}-\z_{\ast|\data}}/\norm{\z_{\rm true}},\quad     \mathtt{E}(\w)=\norm{\w_{\rm true}-\w_{\ast|\data}}/\norm{\w_{\rm true}}.
\end{align}
Furthermore, to quantify the uncertainty reduction (from prior to posterior), we introduce a measure of relative posterior uncertainty
\begin{align}
  \mathtt{U}(\z)=\tr{\Gamma_{\z|\data}}/\tr{\Gamma_{\z}},\quad \mathtt{U}(\w)=\tr{\Gamma_{\w|\data}}/\tr{\Gamma_{\w}}, 
\end{align}
such that $\mathtt{U}(\cdot)\in[0,1]$, with smaller values indicating a larger reduction in uncertainty. We note that the aim is not necessarily to obtain smaller posterior variances, but to accurately model, incorporate, and quantify the uncertainty in the unknown parameters and the correlation. 

Finally, we also examine the difference in the point-wise posterior standard deviations
\begin{align}\label{eq:diffs}
  \mathbb{R}^{n_1} \ni\mathtt{D}_\z=\diag{(\Gamma_\z^{\rm I})}^{\frac{1}{2}}-\diag{(\Gamma_\z^{\rm J})}^{\frac{1}{2}},\quad
  \mathbb{R}^{n_2}\ni\mathtt{D}_\w=\diag{(\Gamma_\w^{\rm I})}^{\frac{1}{2}}-\diag{(\Gamma_\w^{\rm J})}^{\frac{1}{2}},
\end{align} 
where $\Gamma_\z^{\rm I}$ (resp. $\Gamma_\w^{\rm I}$) and $\Gamma_\z^{\rm J}$ (resp. $\Gamma_\w^{\rm J}$) are the posterior covariance matrices of $\z$ 
(resp. $\w$) found using independent inference and joint inference, respectively.

\subsection{Computational example 1: Co-Kriging}\label{sec ex1}
For the first example we consider a classic co-kriging inspired problem. Specifically, we let $\Omega$ denote the computational domain of interest and assume we have direct (noisy) measurements of $\z$ and of $\w$ at different sets of measurement locations. The full inference problem is then to compute the joint posterior $\pi(\z,\w,\CC|\data_1,\data_2)$ given the measurements $\data_1$ and $\data_2$ which are assumed to be of the form
\begin{equation}
\begin{aligned}
\data_1&=\mathcal{B}_1\z+\ee\\
\data_2&=\mathcal{B}_2\w+\eee,
\end{aligned}
\end{equation}
with $\mathcal{B}_1$ and $\mathcal{B}_2$ the point-wise restriction/observation operators which evaluate $\z$ and $\w$, respectively, at the observation locations, see Figure \ref{fig:P1setup}.

It is worth pointing out that as both forward models are linear, for a fixed correlation between $\w$ and $\z$, the posterior is Gaussian and can be computed analytically \cite{KaipioSomersalo05}. However, when the goal is to also estimate the correlation, the inference problem is no longer linear, which can significantly increase the computational overheads.

\subsection{Computational details for Example 1}
In this example we parameterise the unknown correlation using $C=cI$, where $c$ is a scalar with a uniform prior distribution, i.e., $c\sim\mathcal{U}(-1,1)$. That is, the correlation is assumed to be homogeneous throughout $\Omega$. The total number of unknowns is thus $1250+1250+1=2501$. On the other hand, the data consists of 60 point-wise measurements of $\z$ in the right half of the domain and 32 point-wise measurements of $\w$ in the top half of the domain, see Figure \ref{fig:P1setup}.  To generate the true parameters (and data) we set a true correlation of $c_{\rm true}=-0.9$ and generate $\z_{\rm true}$ and $\w_{\rm true}$ as samples from the joint conditional distribution $\pi(\z,\w|c=c_{\rm true})$. The resulting true parameters as well as the true potential induced by $\z_{\rm true}$ are shown in Figure~\ref{fig:P1setup}. Synthetic data is then generated by applying the observation operators and adding white noise of the form $\ee\sim\mathcal{N}(0,\delta^2_1I)$ and 
$\eee\sim\mathcal{N}(0,\delta^2_2I)$. For this example we set  $\delta_1=1/100\times(\max(\mathcal{B}_1\z_{\rm true})-\min(\mathcal{B}_1\z_{\rm true}))$ and $\delta_2=1/100\times(\max(\mathcal{B}_2\w_{\rm true})-\min(\mathcal{B}_2\w_{\rm true}))$. That is, both sets of synthetic data are corrupted by 1\% noise.

To further illustrate the importance of correctly accounting for the (uncertainty in the) correlation, for this example we also compute the posterior for two additional fixed choices of $c$: $c=c_{\rm true}=-0.9$ and $c=-c_{\rm true}=0.9$. For each of the three cases of fixed correlations, $c\in\{-0.9,0,0.9\}$, the CM estimates and conditional covariances of $\z$ and $\w$ are calculated using the well known formulae \cite[Eqs.~(3.14) and~(3.15)]{KaipioSomersalo05}. On the other hand, to compute the full joint posterior $\pi(\z,\w,c|d_1,d_2)$ we use the MwG approach outlined in Algorithm~\ref{alg: MwG}. However, as noted in Remark~\ref{rm: MwG}, for cases in which the forward model is linear, which is the case for Example 1, the MwG approach simplifies considerably, with only the correlation $\CC$ requiring a Metropolis step for updating, and $\z$ and $\w$ updated by directly sampling from $\pi(\z,\w|d,\CC)$. Since in Example 1 we have $\CC=cI$, the correlation update reduces to a one-dimensional Metropolis step for the scalar parameter $c$. As noted in Section~\ref{sec: CompCon}, we reparameterise the correlation as $c=\tanh{(\gamma)}$. We initialize the chain for $\gamma$ at $\gamma=c=0$ and update using a Gaussian proposal centred at the current state with unit variance, i.e., 
\[\gamma'=\gamma^{(k)}+\xi^{(k)},\quad\xi^{(k)}\sim\mathcal{N}(0,1).\]
The MCMC is run for a total of $10^5$ samples, with the first $10^3$ discarded as {\em burn-in}\footnote{The length of the burn-in may seem small, however, only the scalar parameter $c$ is updated via a Metropolis step, while $\z$ and $\w$ are sampled directly from their conditional posterior distribution.}. The results found using each of fixed correlations and for the joint approach are discussed in the next section, while a discussion on the computational costs and effectiveness of the MCMC is left to Section~\ref{sec: CompCost}.

\begin{figure}[t!]
\centering
\begin{tikzpicture}
\node[inner sep=0pt] (a) at (0,0)
{\includegraphics[height=0.19\textwidth]{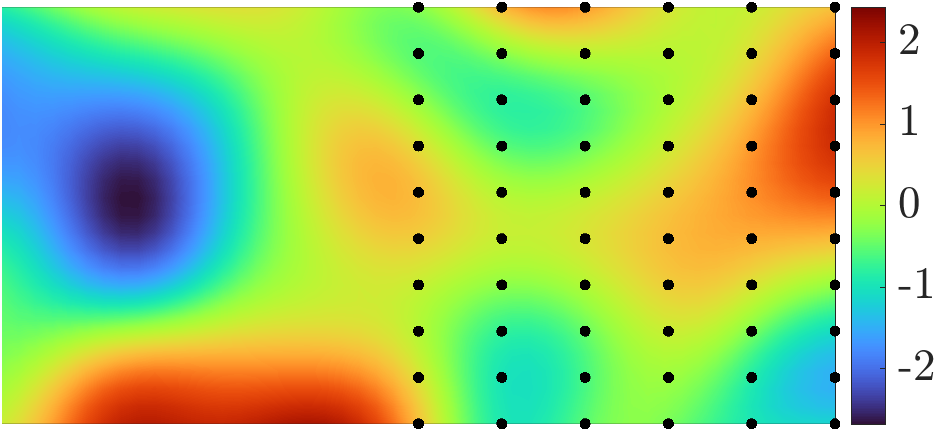}};

\node[inner sep=0pt] (a) at (8.5,0)
{\includegraphics[height=0.19\textwidth]{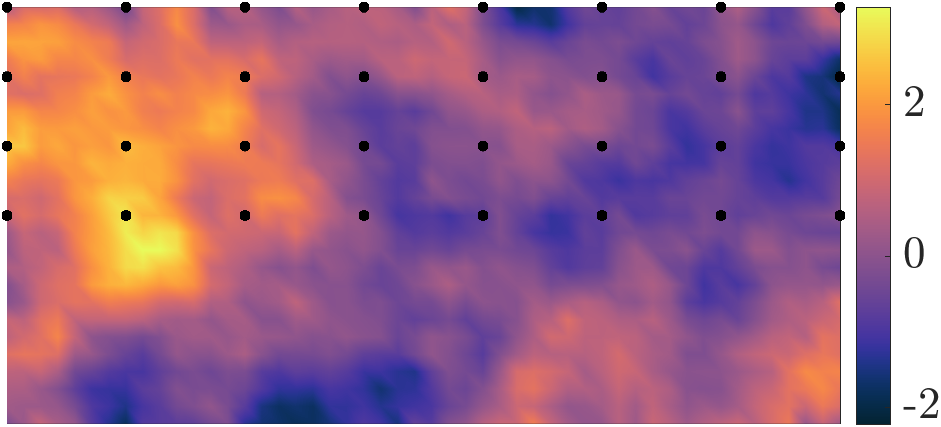}};

\node at (-.5,2) {$\z_{\rm true}$}; 
\node at (8,2) {$\w_{\rm true}$};

\end{tikzpicture}
\caption{The true parameters $\z_{\rm true}$ (left) and $\w_{\rm true}$ (right). In each plot the black dots indicate the locations in which the noisy (direct) point-wise measurements are taken. }\label{fig:P1setup}
\end{figure}

\subsection{Results for Example 1}\label{sec: res1}
For Example 1, the CM estimates of both $\z$ and $\w$ obtained using independent and joint inference are shown in Figure~\ref{fig:Ex1means}, along with the true parameters used to generate the data. As expected, we see that when using independent inversion, the CM estimates represent the truth fairly well near the respective measurement locations. However, away from the measurement locations, the estimates found using independent inversion are less insightful and converge spatially towards the prior mean. On the other hand, when using the proposed joint approach the CM estimates improve significantly compared to the independent estimates as data from both models is taken into account when estimating both parameters. This is, of course, an intended benefit of carrying out the inversion jointly. For this example, the error in the CM estimates using the independent approach is $\mathtt{E}(\z)=0.852$ and $\mathtt{E}(\w)=0.629$, whereas the proposed joint approach reduces these to $\mathtt{E}(\z)=0.513$ and $\mathtt{E}(\w)=0.589$.

\begin{figure}[t!]
\centering
$\z$ \hrulefill\vspace{5pt}

{\includegraphics[height=0.15\textwidth]{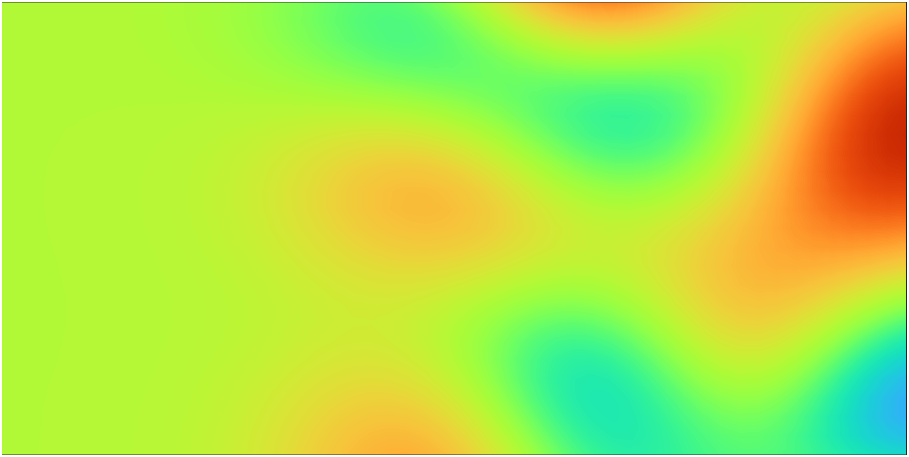}}
\hfill
{\includegraphics[height=0.15\textwidth]{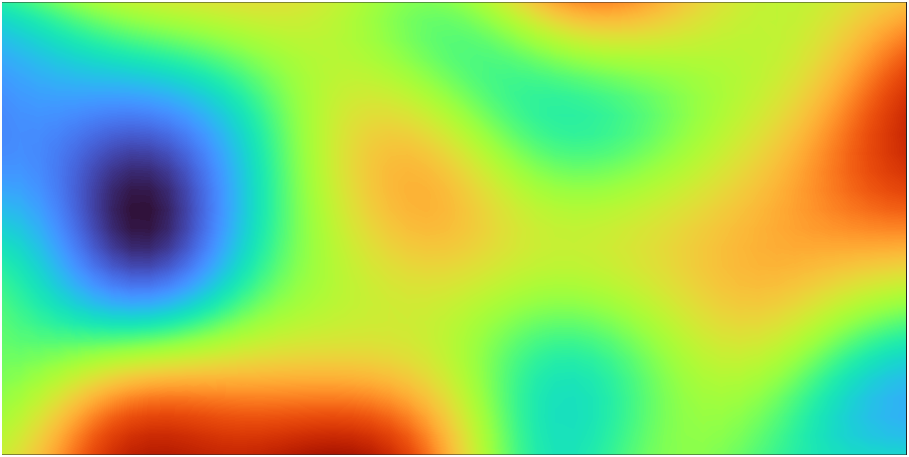}}
\hfill
{\includegraphics[height=0.15\textwidth]{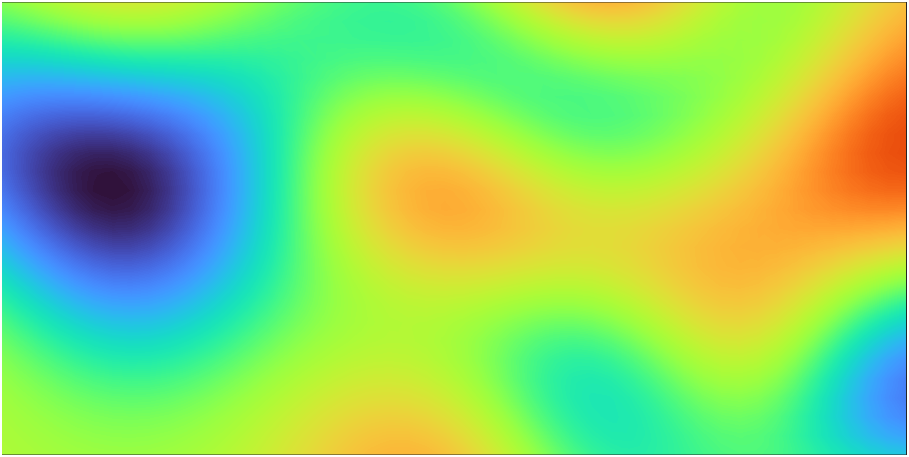}}
{\includegraphics[height=0.15\textwidth]{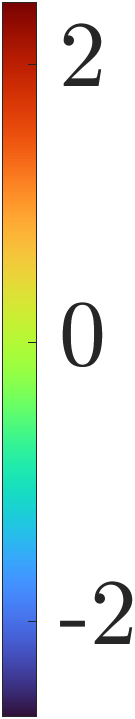}}\\
$\w$ \hrulefill\vspace{5pt}

{\includegraphics[height=0.15\textwidth]{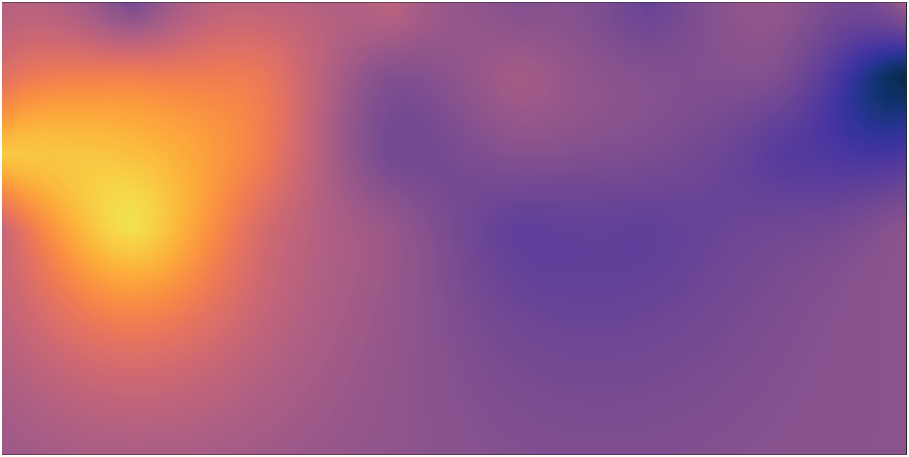}}
\hfill
{\includegraphics[height=0.15\textwidth]{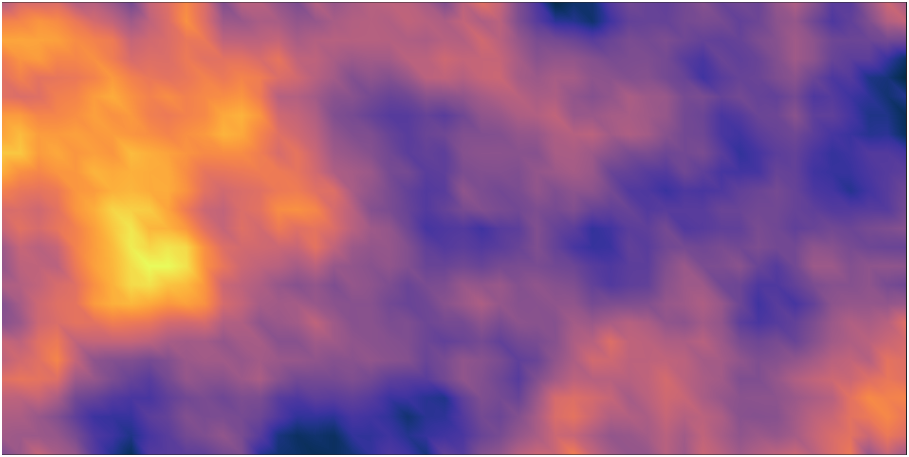}}
\hfill
{\includegraphics[height=0.15\textwidth]{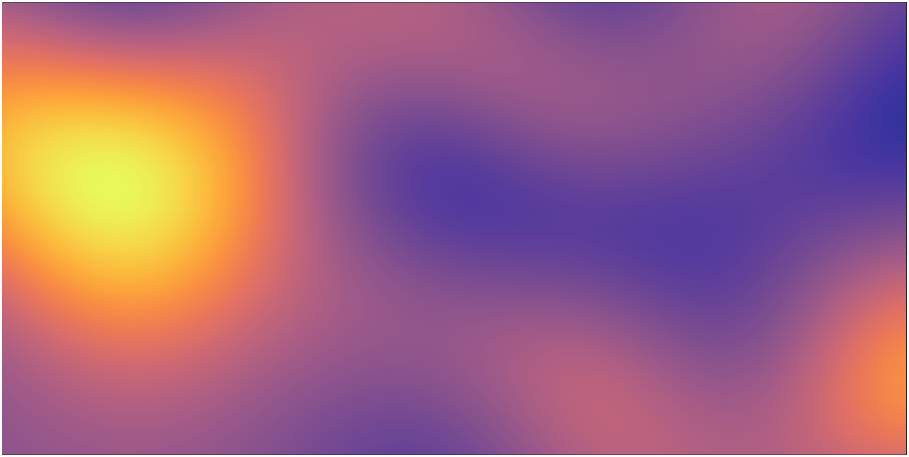}}
{\includegraphics[height=0.15\textwidth]{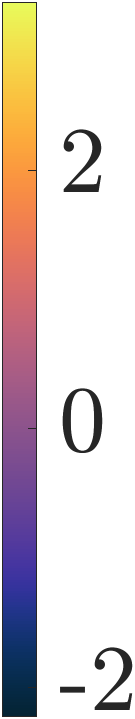}}
\caption{True parameters and CM estimates for Example 1. In the top row we show the CM found using independent inference (left), the true parameter $\z_{\rm true}$ (centre), and the CM found using joint inference (right). On the bottom row the corresponding plots for $\w$.}\label{fig:Ex1means}
\end{figure}

In Figure \ref{fig:Prob1UQ} we show the point-wise posterior standard deviations obtained using independent and joint inference, together with their difference (defined (\ref{eq:diffs})). As expected, the largest reduction in uncertainty for each of the parameters is near the measurement locations. However, as discussed above for the CM estimates, using the joint approach incorporates both sets of data into both estimates. Consequently, we see that when using the proposed joint approach, the posterior uncertainty in $\z$ is significantly reduced where $\w$ is measured, and that the posterior uncertainty in $\w$ is significantly reduced where $\z$ is measured. These differences are made apparent in the plots of the differences in the point-wise posterior standard deviation shown in Figure \ref{fig:Prob1UQ}. Quantitatively, the relative posterior uncertainties are $\mathtt{U}(\z)=0.401$ and $\mathtt{U}(\w)=0.411$ using independent inference, and $\mathtt{U}(\z)=0.313$ and $\mathtt{U}(\w)=0.291$ when using the joint approach. That is to say, in this example using the joint approach significantly reduces the  posterior uncertainty for both parameters.

\begin{figure}[t!]
\centering
$\z$ \hrulefill\vspace{5pt}

{\includegraphics[height=0.145\textwidth]{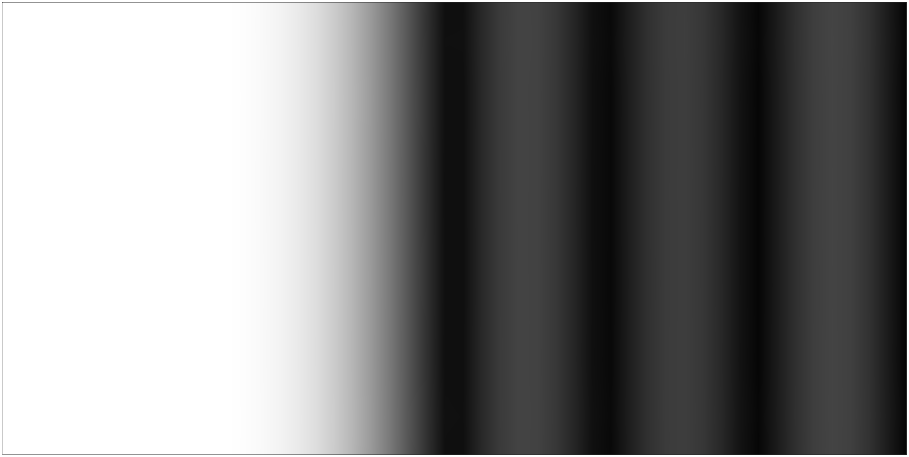}}
\hfill
{\includegraphics[height=0.145\textwidth]{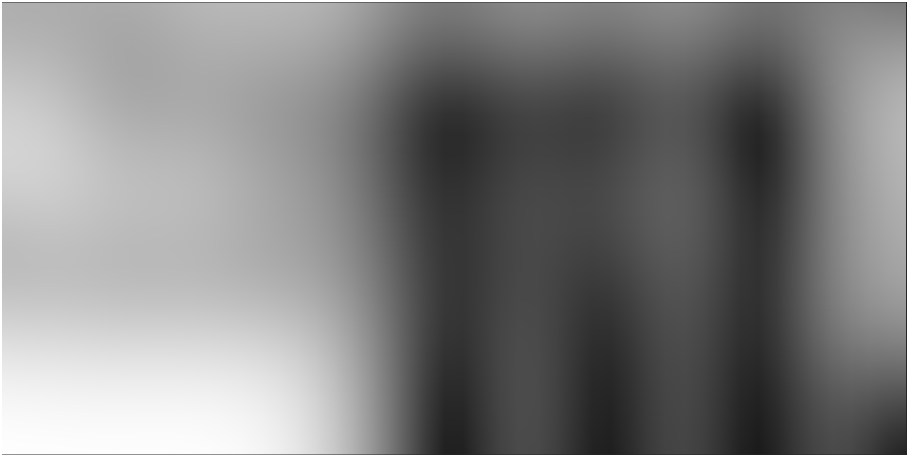}}
{\includegraphics[height=0.145\textwidth]{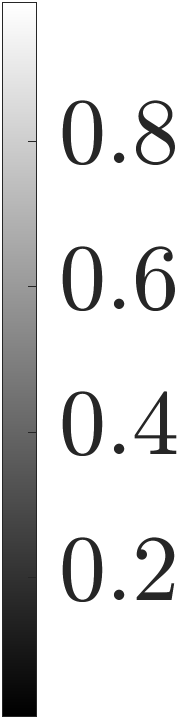}}
\hfill
{\includegraphics[height=0.145\textwidth]{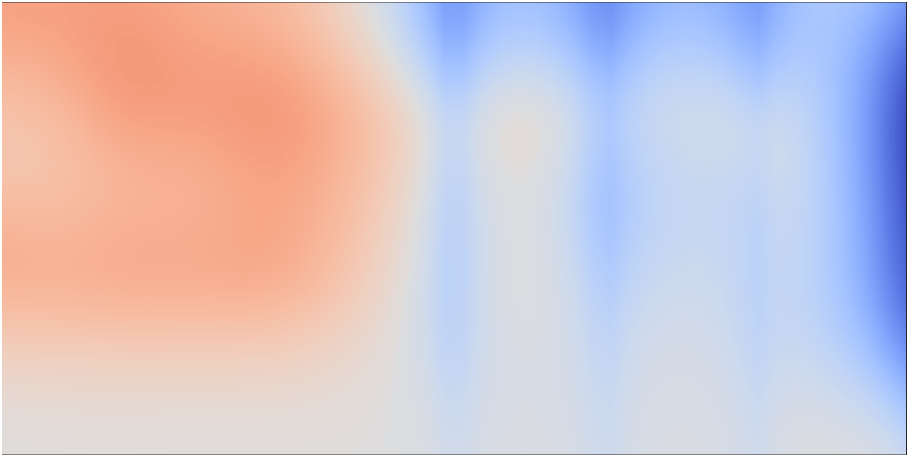}}
{\includegraphics[height=0.145\textwidth]{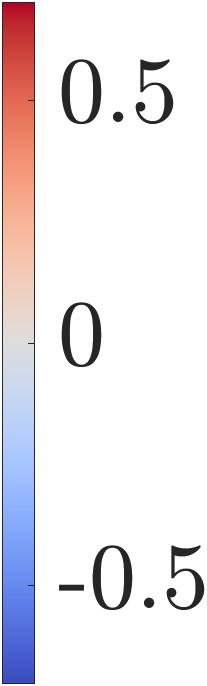}}\\
$\w$ \hrulefill\vspace{5pt}

{\includegraphics[height=0.145\textwidth]{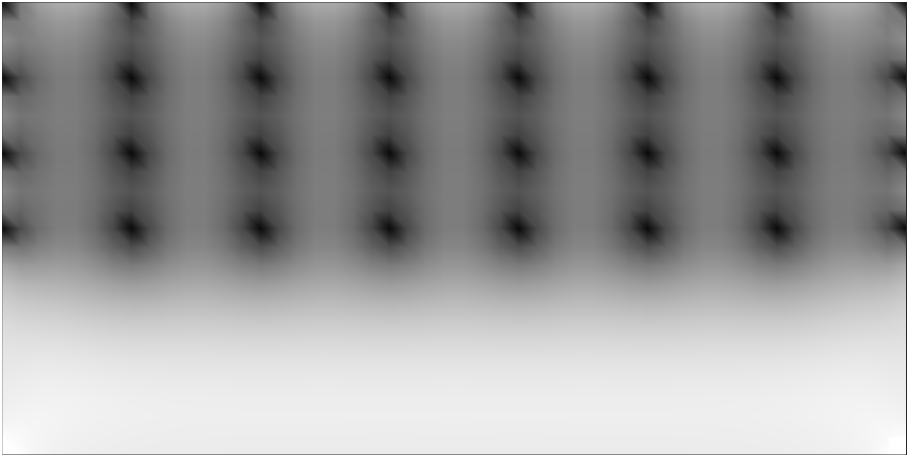}}
\hfill
{\includegraphics[height=0.145\textwidth]{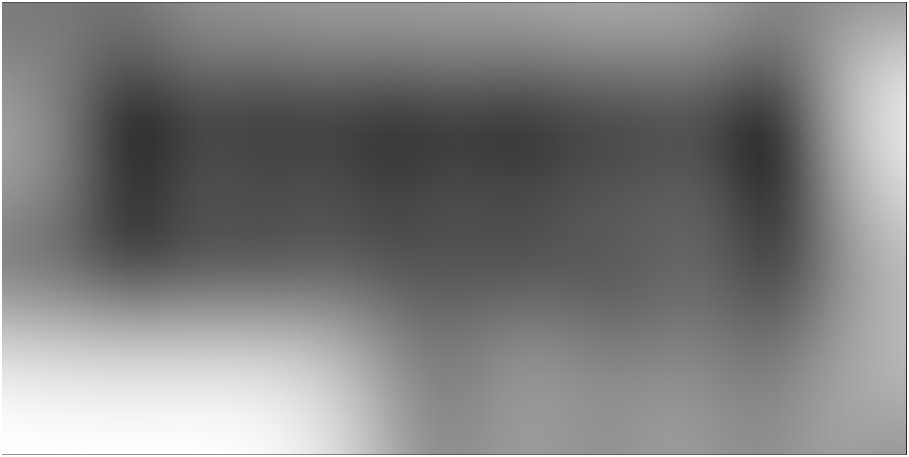}}
{\includegraphics[height=0.145\textwidth]{figures/p1COVCB.png}}
\hfill
{\includegraphics[height=0.145\textwidth]{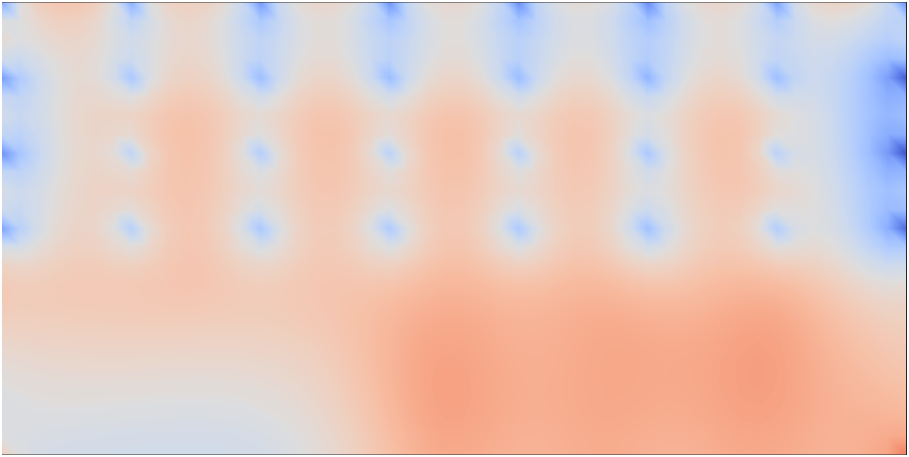}}
{\includegraphics[height=0.145\textwidth]{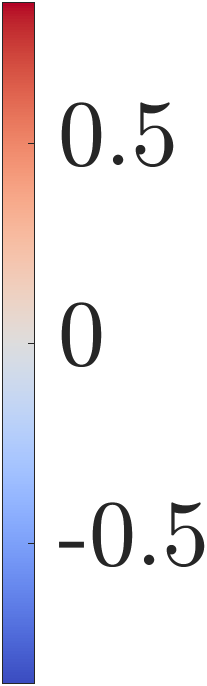}}

\caption{The reduction in uncertainty for Example 1. On the top row we show the point-wise marginal posterior standard deviation for $\z$ found using independent inference (left), the point-wise marginal posterior standard deviation for $\z$ found using joint inference (centre), and the difference of the two $\mathtt{D}_\z$ (see Equation \ref{eq:diffs}). On the bottom row we show the equivalent figures for $\w$.}\label{fig:Prob1UQ}
\end{figure}

A key benefit of the proposed joint approach is the ability to estimate the correlation also. In Figure \ref{fig:C1} we show the marginal histogram of the samples of $c$ (recall $\CC=cI$ for this example), the trace of the MCMC samples of $c$, and the autocorrelation plot. The posterior places most of its mass on negative values of $c$, with the true value, $c_{\rm true}=-0.9$ being well contained within the bulk of the posterior. This is consistent with the improved reconstructions obtained using the proposed joint inference approach, and indicates that the data are informative about the sign and magnitude of the correlation. The trace plot and sample autocorrelation suggest that the chain mixes reasonably well, although a discussion of computational costs is deferred to Section~\ref{sec: CompCost}.
\begin{figure}[t!]
\centering
\includegraphics[height=0.3\textwidth]{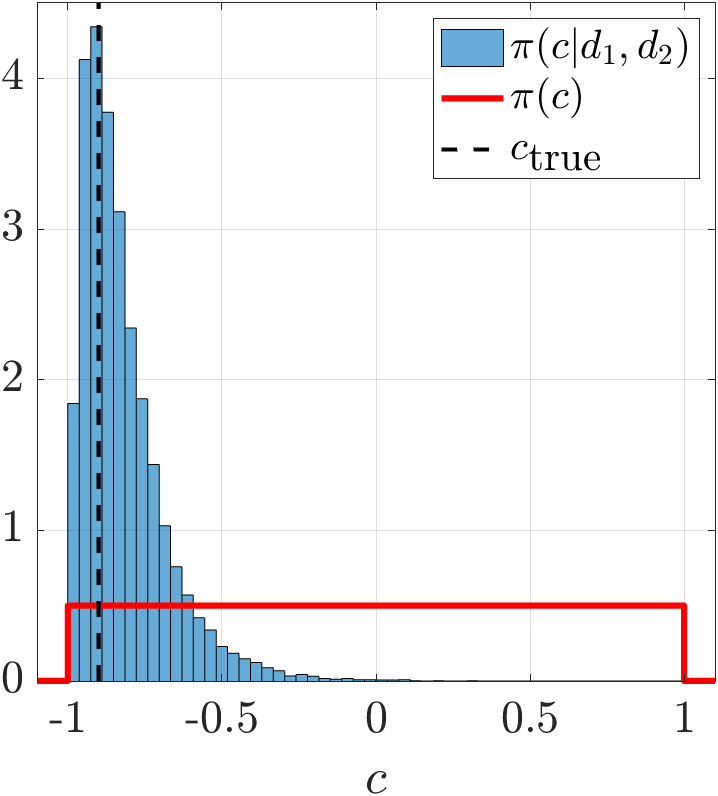}
\hfill
\includegraphics[height=0.3\textwidth]{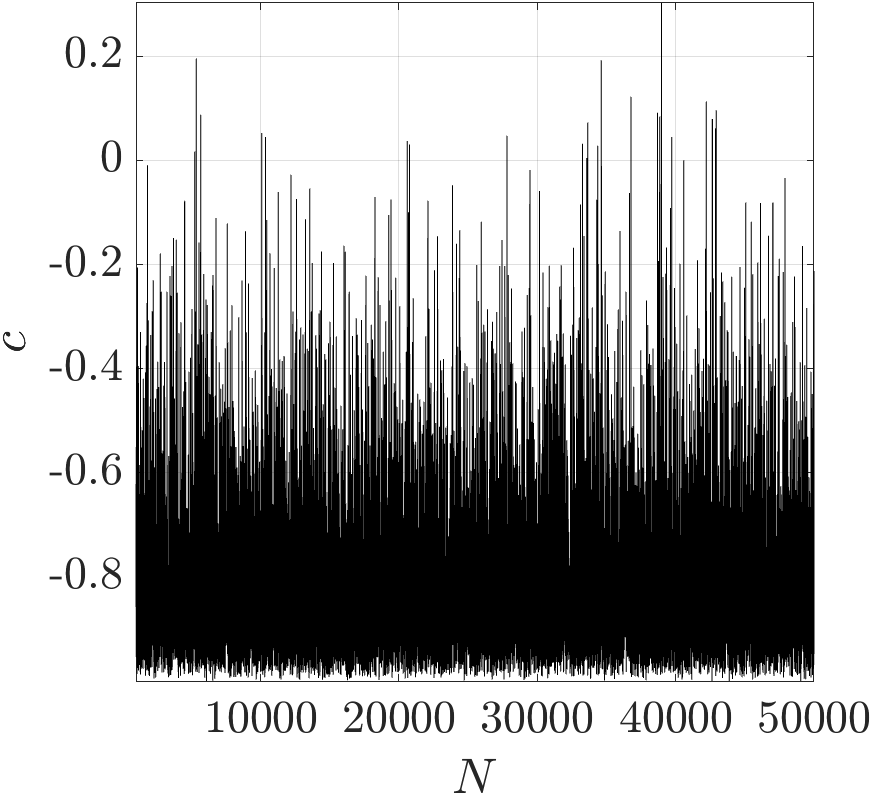}
\hfill
\includegraphics[height=0.3\textwidth]{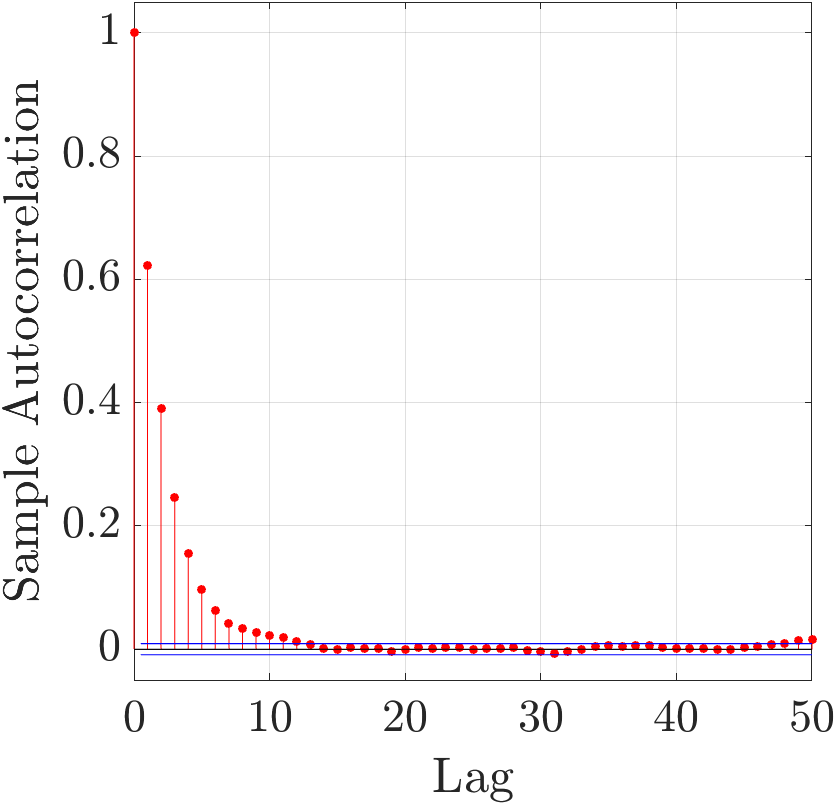}

\caption{Results for estimating the correlation $c$ in Example 1. Shown on the left is the histogram of samples from the marginal posterior $\pi(c|d)$, the prior $\pi(c)$ (red line), and the truth $c_{\rm true}$ (black dashed line), while in the centre we show the trace of the samples for $c$ and on the right the sample autocorrelation.}\label{fig:C1}
\end{figure}

To further illustrate the importance of correctly accounting for the correlation, we also compute the posterior for two fixed choices of $c$: $c=c_{\rm true}=-0.9$ and $c=-c_{\rm true}=0.9$. The resulting CM estimates and posterior uncertainty are shown in Figure~\ref{fig:Ex1fixedCs}. Fixing $c=c_{\rm true}$ results in CM estimates which agree well with the true parameters, while fixing $c=-c_{\rm true}$ gives misleading estimates. The largest reductions in marginal posterior uncertainty occur near the measurement locations associated with the parameters. However, as the correlation is strong, measurements of one parameter also reduce the uncertainty in the other. As shown in Appendix~\ref{app:sign}, the marginal posterior covariance is invariant under the sign change $c\mapsto-c$ in the present linear-Gaussian setting. Consequently, an incorrect sign of the correlation can yield a concentrated posterior in the wrong region of parameter space, so that the truth is poorly supported.
\begin{figure}[b!]
\centering
$\z$ \hrulefill\vspace{5pt}

\includegraphics[height=0.15\textwidth]{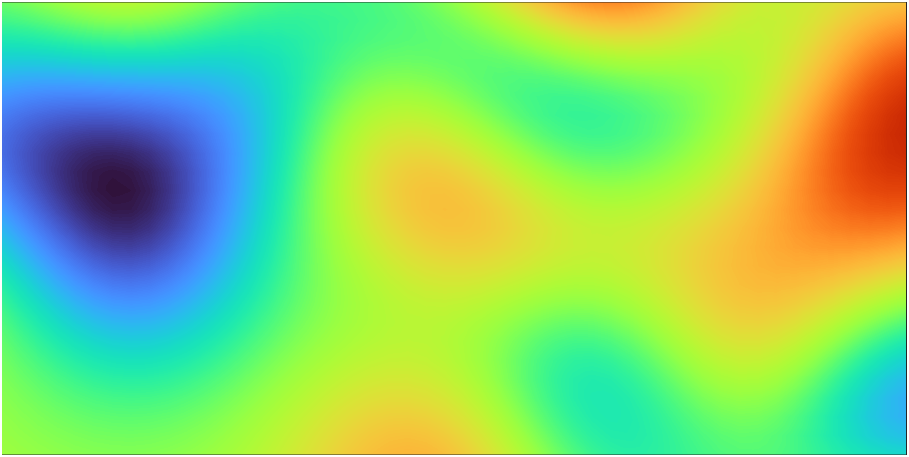}
\hfill
\includegraphics[height=0.15\textwidth]{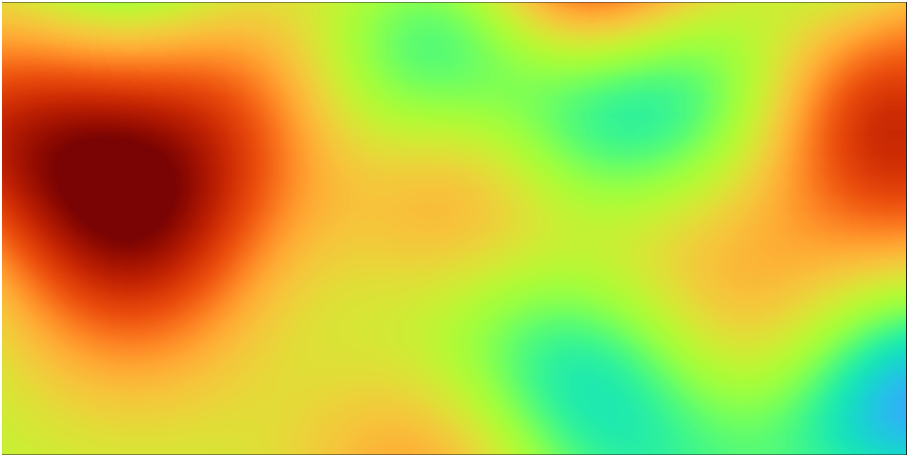}
\includegraphics[height=0.15\textwidth]{figures/Prob1P2CB.png}
\hfill
\includegraphics[height=0.15\textwidth]{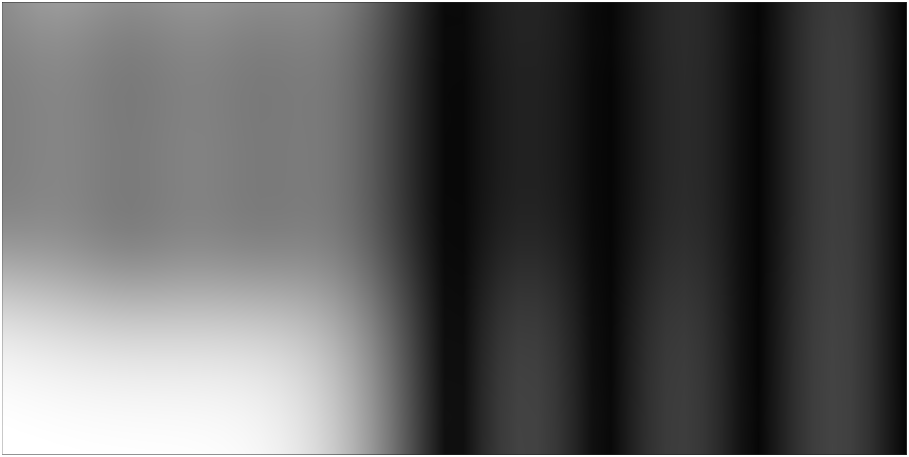}
\includegraphics[height=0.15\textwidth]{figures/p1COVCB.png}
\\
$\w$ \hrulefill\vspace{5pt}

\includegraphics[height=0.15\textwidth]{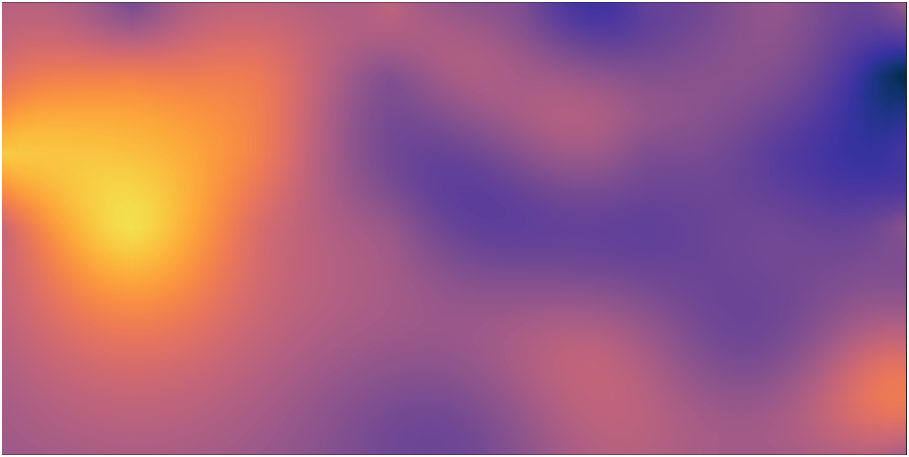}
\hfill
\includegraphics[height=0.15\textwidth]{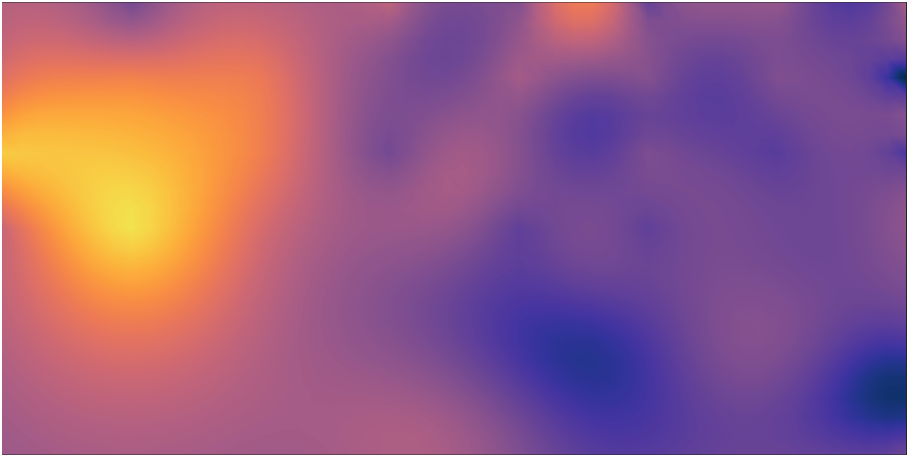}
\includegraphics[height=0.15\textwidth]{figures/Prob1P1CB.png}
\hfill
\includegraphics[height=0.15\textwidth]{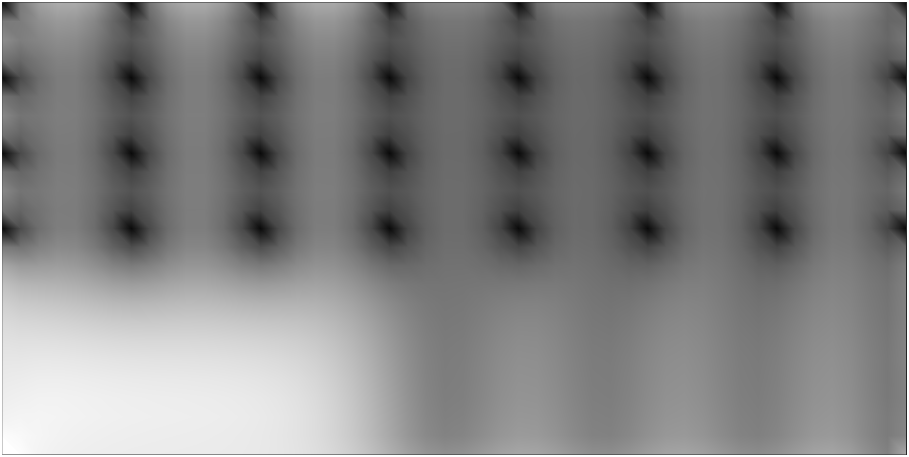}
\includegraphics[height=0.15\textwidth]{figures/p1COVCB.png}
\caption{Results for Example 1 using fixed values of $c$. On the top row we show the CM estimate of $\z$ using $c=c_{\rm true}=-0.9$ (left), the CM estimate of $\z$ using $c=-c_{\rm true}=0.9$ (centre), and the point-wise marginal posterior standard deviation of $\z$ using $c=c_{\rm true}$ or $c=-c_{\rm true}$ (right). Recall the point-wise marginal posterior standard deviation is invariant to the sign of $c$. In the bottom row we show the corresponding plots for $\w$.}\label{fig:Ex1fixedCs}
\end{figure}

\newpage
\subsection{Computational example 2: Inversion for aquifer permeability and recharge}

We now consider the more realistic subsurface flow problem of estimating the permeability and recharge of an aquifer based on measurements of hydraulic-head. This problem leads to a joint inverse problem governed by a single physical model. Specifically, taking the permeability $\kappa=\exp(\z)$ and the recharge $r=\exp(\w)$, we assume the hydraulic-head $u$ satisfies
\begin{equation}\label{eq: Darcy}
\begin{aligned}
-\nabla\cdot(\exp(\z)\nabla u)&=\exp(\w),\quad{\text{in }}\Omega\\
u&=0,\quad{\text{on }}\partial\Omega.
\end{aligned}
\end{equation}
As well as measurements of the hydraulic-head, we assume we also have direct point-wise measurements (e.g., core samples or borehole measurements) of the permeability. As such, the goal in the second problem is to characterise  the joint posterior $\pi(\z,\w,\CC|\data_1,\data_2)$ given the measurements of the form
\begin{equation}
\begin{aligned}
\data_1&=\mathcal{B}_1u(\z,\w)+\ee\\
\data_2&=\mathcal{B}_2\z+\eee,
\end{aligned}
\end{equation}
where $\mathcal{B}_1$ and $\mathcal{B}_2$ denote the point-wise observation operators. We assume the hydraulic-head is measured at shallow locations throughout the domain, while the direct measurements of $\z$ are taken along several well-paths, see Figure~\ref{fig:prob2truths}.

\begin{figure}[t!]
\centering
\begin{tikzpicture}
\node[inner sep=0pt] (a) at (0,0)
{\includegraphics[height=0.144\textwidth]{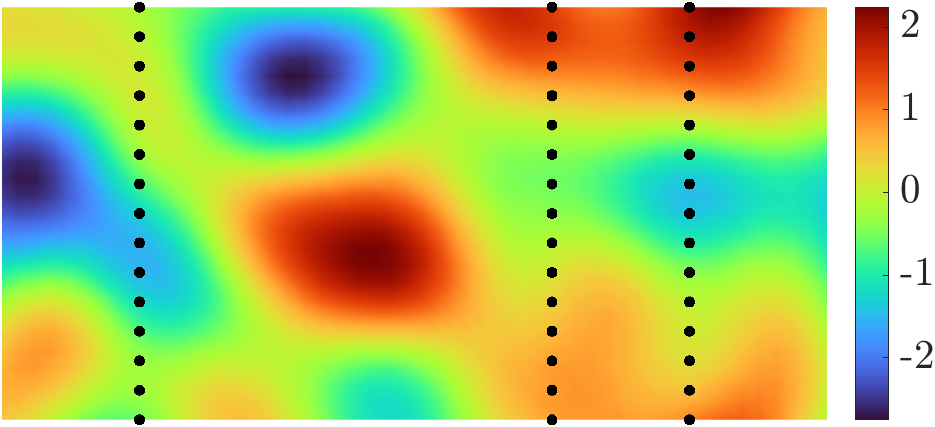}};

\node[inner sep=0pt] (a) at (5.4,0)
{\includegraphics[height=0.144\textwidth]{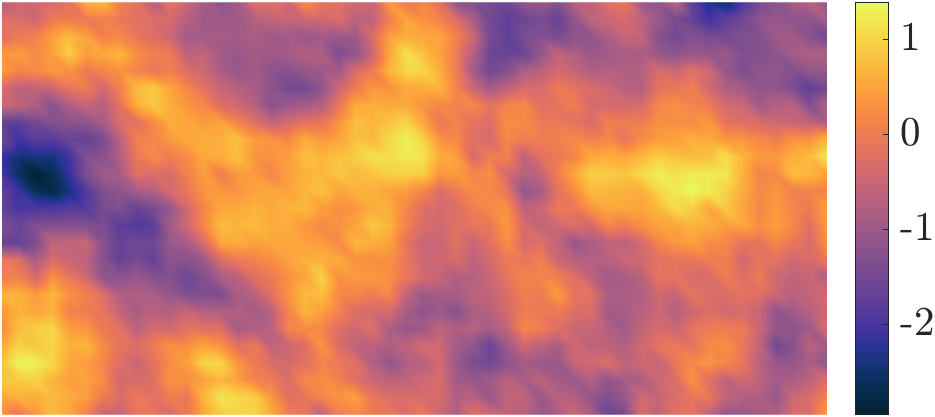}};

\node[inner sep=0pt] (a) at (11,0)
{\includegraphics[height=0.144\textwidth]{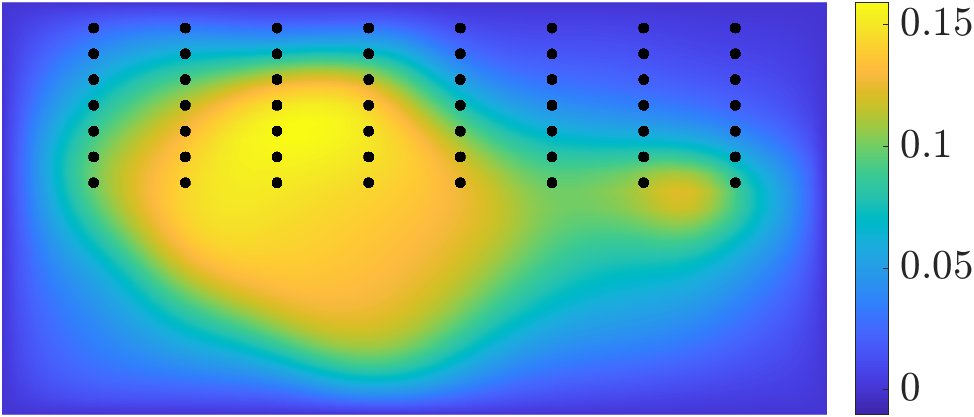}};
\node at (-.25,1.5) {$\z_{\rm true}$}; 
\node at (5.,1.5) {$\w_{\rm true}$}; 
\node at (10.55,1.5) {$u$};

\end{tikzpicture}
\caption{True parameters and data for Example 2. The true log-permeability $\z_{\rm true}$ (left), the true log-recharge $\w_{\rm true}$ (centre), as well as the resulting hydraulic-head $u$  (right). In each plot the black dots indicate the locations of the measurements.}\label{fig:prob2truths}
\end{figure}

In this example, we allow for the possibility of different correlations (between $\z$ and $\w$) in different subdomains. Specifically, we divide the domain $\Omega$ into two subdomains, $\Omega_1=\{(x,y)\in\Omega\,|\, x\leq1\}$ and $\Omega_2=\{(x,y)\in\Omega\,|\, x>1\}$, and allow for different correlations in each of these subdomains, i.e., we take 
\begin{align}
C=c(x)I,\quad c(x)=
\begin{cases}
c_1, & \text{for } x\leq1\\
c_2, & \text{for } x>1,
\end{cases}
\end{align}
where $|{c_1}|<1$ and $|c_2|<1$ are a priori unknown\footnote{Although not considered here, it is worth noting that both the number and geometry of the subdomains could also be treated as unknown, and estimated. This, however, would significantly increase the associated computational costs, and is beyond the scope of the current study.}

\subsection{Computational details for Example 2}
 We solve (\ref{eq: Darcy}) for $u$ using a Galerkin finite element method (FEM). More specifically, we represent the potential $u$ and (initially) the permeability $\z$ and recharge $\w$ using 1250 continuous piecewise linear basis function on a regular mesh of 1250 nodes and 2352 triangular elements. To reduce the computational costs associated with the MCMC sampling, we use a truncated KL expansion of the parameters, as outlined in Section \ref{sec: CompCon}. Specifically, we truncate the expansions for $\z$ and $\w$ after ${k_\z}=50$ and ${k_\w}=100$ (see Equation (\ref{eq:TKL})), respectively, which captures approximately $99\%$ and $94\%$  of the variation in each parameter, respectively. This results in a total number of parameters to be inferred of $50+100+2=152$. Similarly to the previous example, the true parameters are generated by setting a true correlation (in this case $c_{\rm true}=(0.8,-0.9)$) and then sampling $\z_{\rm true}$ and $\w_{\rm true}$ from the joint conditional distribution $\pi(\z,\w|c=c_{\rm true})$. Lastly, the data consists of 56 regularly spaced measurements of $u$ and 45 measurements of $\z$, see Figure \ref{fig:prob2truths}.  Additive noise  of the form $\ee\sim\mathcal{N}(0,\delta^2_1I)$ and 
$\eee\sim\mathcal{N}(0,\delta^2_2I)$ corrupts the measurements, where, for this example, we set  $\delta_1=5/100\times(\max(\mathcal{B}_1u(\z_{\rm true},\w_{\rm true}))-\min(\mathcal{B}_1u(\z_{\rm true})))$ and $\delta_2=2/100\times(\max(\mathcal{B}_2\z_{\rm true})-\min(\mathcal{B}_2\z_{\rm true}))$. That is, the hydraulic-head measurements are corrupted by 5\% noise while the direct measurements of $\z$ are corrupted by 2\% noise. The true parameters and resulting measurements are shown in Figure \ref{fig:prob2truths}.

To characterise the posterior distributions for Example 2 we use MCMC for both the independent and joint approaches, since the forward model is nonlinear. In the independent case we sample the KL coefficients only, while in the joint approach we sample the KL coefficients and $c_1$ and $c_2$, which as described in Section \ref{sec: CompCon}, we reparameterise as $c_1=\tanh(\gamma_1)$ and $c_2=\tanh(\gamma_2)$. The parameters $\gamma_1$ and $\gamma_2$ are updated using an uncorrelated Gaussian random-walk proposals, i.e., we propose 
\[\gamma_i'=\gamma_i^{(k)}+\xi_i,\quad \xi_i^{(k)}\sim\mathcal{N}(0,1),\quad i\in\{1,2\}.\]
As updating the correlations is cheap (compared to updating $\z$ and $\w$), in this example we take 100 steps of Metropolis-Hastings updates for the correlation parameters $\gamma_1$ and $\gamma_2$ before each update of $\z$ and $\w$. Since the dimensions of 
$\hat\z
$ and 
$\hat\w$ are significantly larger, we update them using an adaptive Metropolis step for improved computational efficiency~\cite{haario01}. Specifically letting $\hat s=(\hat\z,\hat\w)\in\mathbb{R}^{k_\z+k_\w}$ we use proposals of the form
\[\hat s'=\hat s^{(k)}+\tau^{(k)} A^{(k)} \zeta^{(k)},\quad \zeta^{(k)}\sim\mathcal{N}(0,I),\]
where $A^{(k)}$ is the Cholesky factor of the sample covariance matrix of the accepted samples, and $\tau^{(k)}$ is a scaling parameter. In line with~\cite{andrieu2008tutorial,haario01}, the scaling parameter is adapted during the sampling to push the acceptance ratio of the $(\z,\w)$ samples towards $0.23$ which has been shown to be optimal for high-dimensional applications of random-walk Metropolis in which the target and proposal distributions are Gaussian~\cite{gelman1997weak,andrieu2008tutorial}. 

For both the independent and joint approaches we generate $10^6$ samples and discard the first $2\times10^5$ as {\em burn-in}. Moreover, for both approaches we use a  `warm-start' for the MCMC. That is, we initialise the MCMC chains for $\hat\z$ and $\hat\w$ at the projection of joint MAP estimate found using $c_1=c_2=0$ of $\z$ and $\w$ onto to $\hat V$ and $\hat U$, respectively, where $\hat V=[v_1,v_2,\dots,v_{k_\z}]$ and $\hat U=[u_1,u_2,\dots,u_{k_\w}]$ (see Equations (\ref{eq: covs}) and (\ref{eq:TKL})). That is, we set 
\begin{align}
    (\hat\z^{(0)},\hat\w^{(0)})=&(\hat V^T(\z_{\rm MAP}-\z_\ast),\hat U^T(\w_{\rm MAP}-\w_\ast))\nonumber\\
    (\z_{\rm MAP},\w_{\rm MAP}):=&\arg\min_{(\z,\w)\in\mathbb{R}^n} \frac{1}{2}\norm[\Gamma_e^{-1}]{d-\f(\z,\w)}^2+\frac{1}{2}\norm[\Gamma_\z^{-1}]{\z-\z_\ast}^2+\frac{1}{2}\norm[\Gamma_\w^{-1}]{\w-\w_\ast}^2.\nonumber
\end{align}
On the other hand, we initialise the proposal covariance as a scaled version of the (Gauss-Newton) approximate posterior covariance matrix, i.e.,
\[A^{(0)}=L_{\rm post},\quad L_{\rm post}L_{\rm post}^T=\Gamma_{\rm post}:=(J(\z_{\rm MAP},\w_{\rm MAP})^T\Gamma_e^{-1}J(\z_{\rm MAP},\w_{\rm MAP})+\Gamma_0^{-1})^{-1},\]
where $\Gamma_0$ is the block diagonal joint prior covariance matrix (i.e., with $\CC=0$) and $J(\z_{\rm MAP},\w_{\rm MAP})$ is the Jacobian of $\f$ evaluated at the joint MAP estimate. The MAP estimates and approximate marginal posterior uncertainty are shown in Figure~\ref{fig:prob2LAP}. 
\begin{figure}[t!]
\centering
$\z$ \hrulefill\vspace{5pt}

\includegraphics[height=0.145\textwidth]{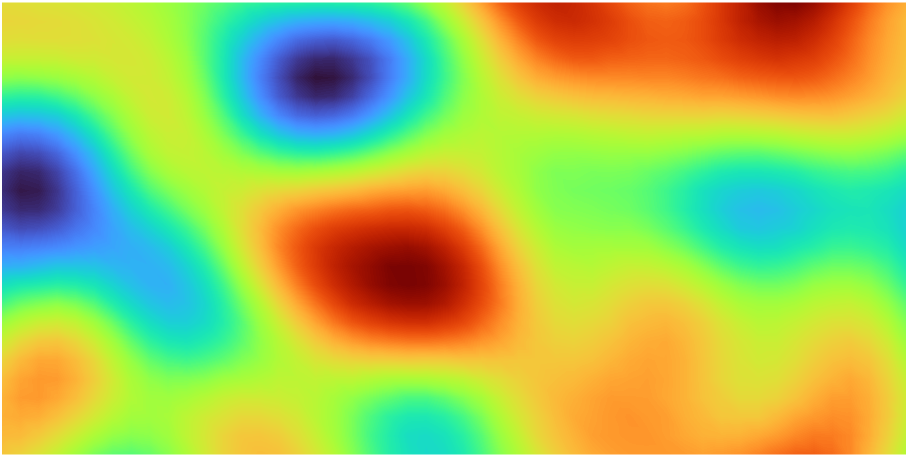}
\hfill
\includegraphics[height=0.145\textwidth]{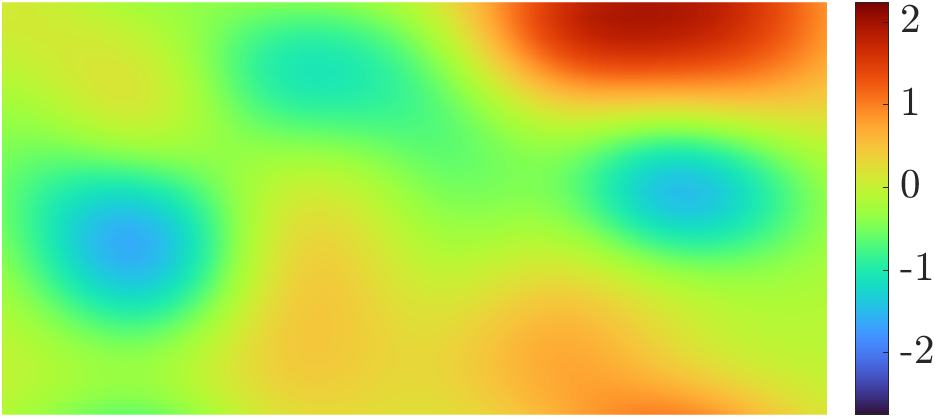}
\hfill
\includegraphics[height=0.145\textwidth]{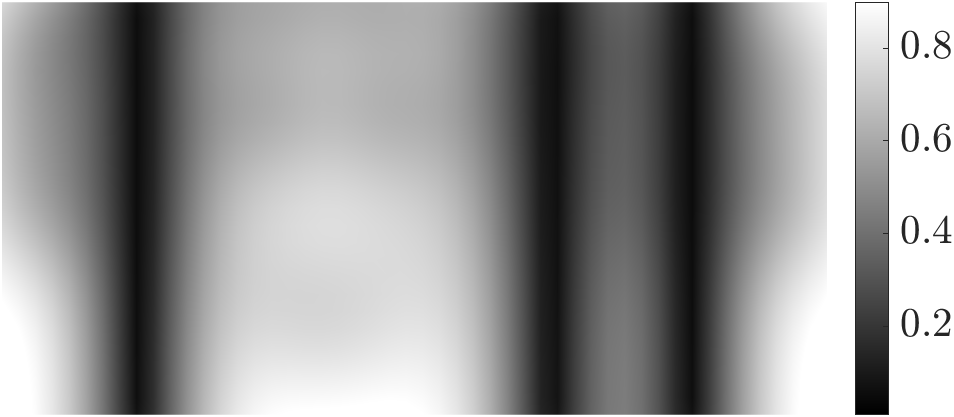}\\
$\w$ \hrulefill\vspace{5pt}

\includegraphics[height=0.145\textwidth]{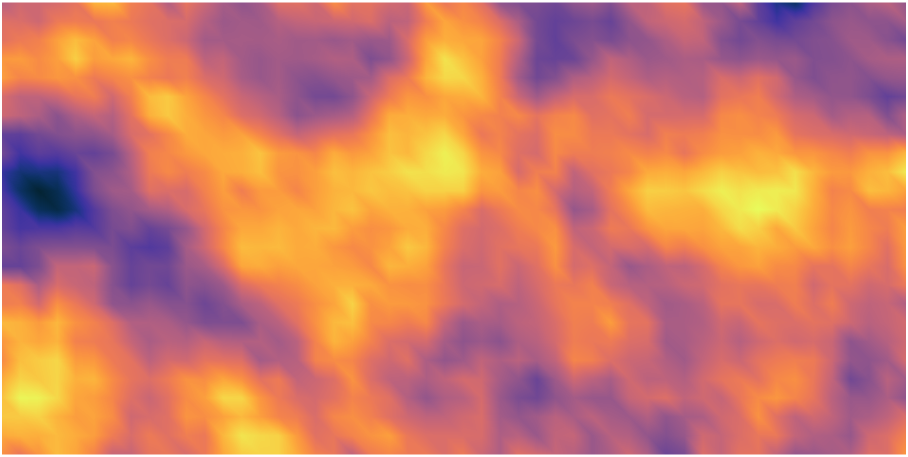}
\hfill
\includegraphics[height=0.145\textwidth]{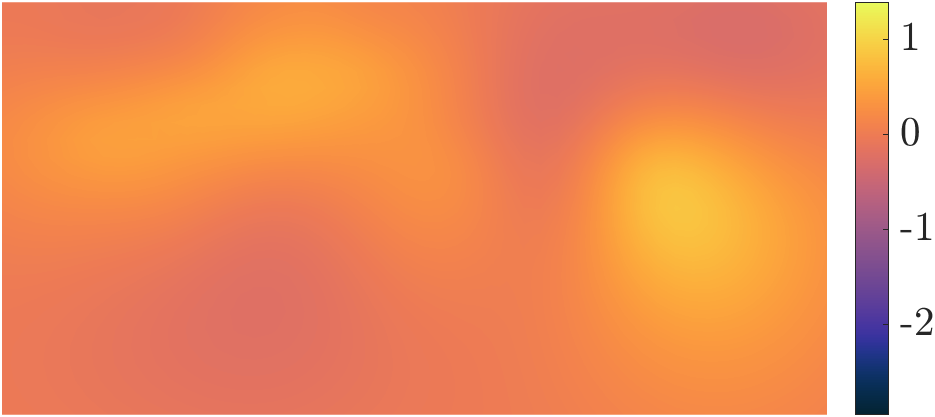}
\hfill
\includegraphics[height=0.145\textwidth]{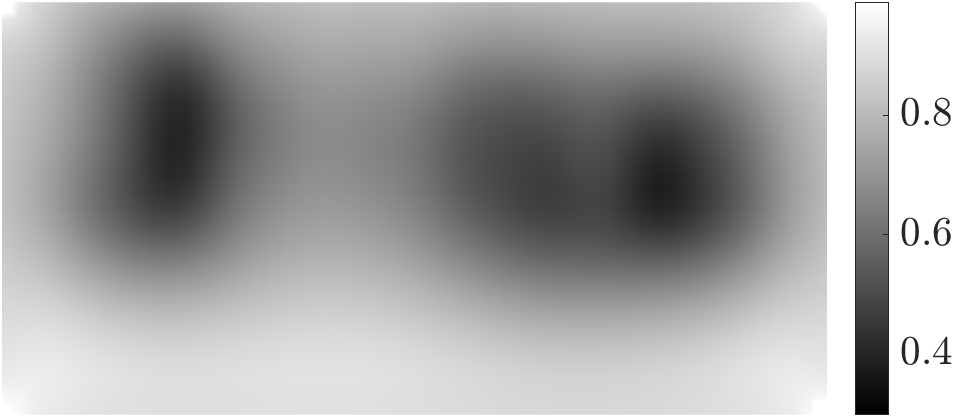}
\caption{Laplace approximation used to `warm-start' the MCMC sampling for Example 2. In the top row we show the true log-permeability $\z_{\rm true}$ (left), the MAP estimate (centre) and the approximate point-wise marginal posterior standard deviation (right).}\label{fig:prob2LAP}
\end{figure}

\subsection{Results for Example 2}\label{sec: res2}

In Figure \ref{fig:prob2CMs} we show the true log-permeability and log-recharge together with the conditional mean estimates obtained using independent and joint inference. As in Example 1, the independent approach recovers the parameters primarily in regions directly informed by the data, namely near the well paths for $\z$
and through the hydraulic-head observations for the coupled system. By contrast, the proposed joint approach allows information to be transferred between the two parameters through the inferred prior correlation, leading to improved reconstructions of both $\z$ and 
$\w$, particularly for the recharge field $\w$ at the well locations (where $\z$ was measured directly).

\begin{figure}[t!]
\centering
$\z$ \hrulefill\vspace{5pt}

\includegraphics[height=0.15\textwidth]{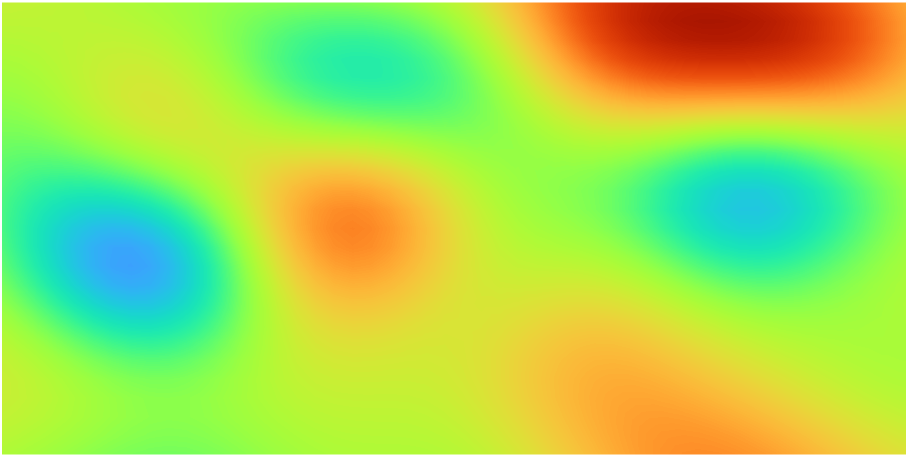}
\hfill
\includegraphics[height=0.15\textwidth]{figures/GeoSigmaNoCB.png}
\hfill
\includegraphics[height=0.15\textwidth]{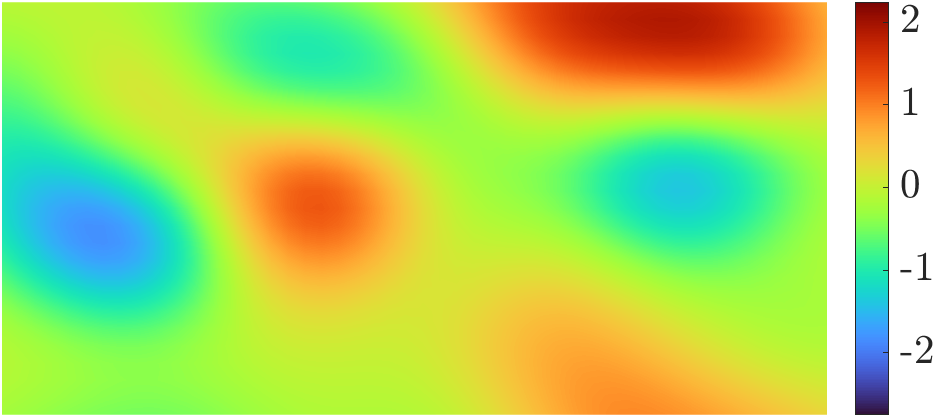}\\
$\w$ \hrulefill\vspace{5pt}

\includegraphics[height=0.15\textwidth]{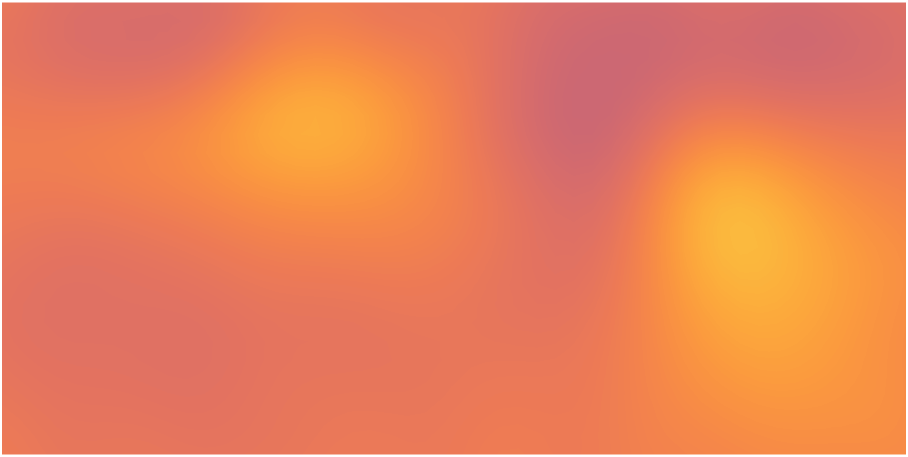}
\hfill
\includegraphics[height=0.15\textwidth]{figures/GeoKappaNoCB.png}
\hfill
\includegraphics[height=0.15\textwidth]{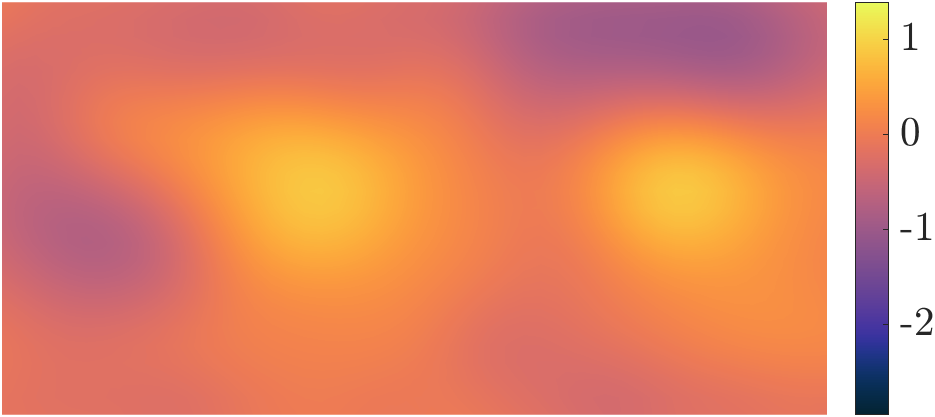}
\caption{True parameters and CM estimates for Example 2. In the top row we show the CM found using independent inference (left), the true log-permeability $\z_{\rm true}$ (centre), and the CM found using joint inference (right). On the bottom row we show the corresponsing plots for the log-recharge $\w$.}\label{fig:prob2CMs}
\end{figure}

\begin{figure}[t!]
\centering
$\z$ \hrulefill\vspace{5pt}

\includegraphics[height=0.146\textwidth]{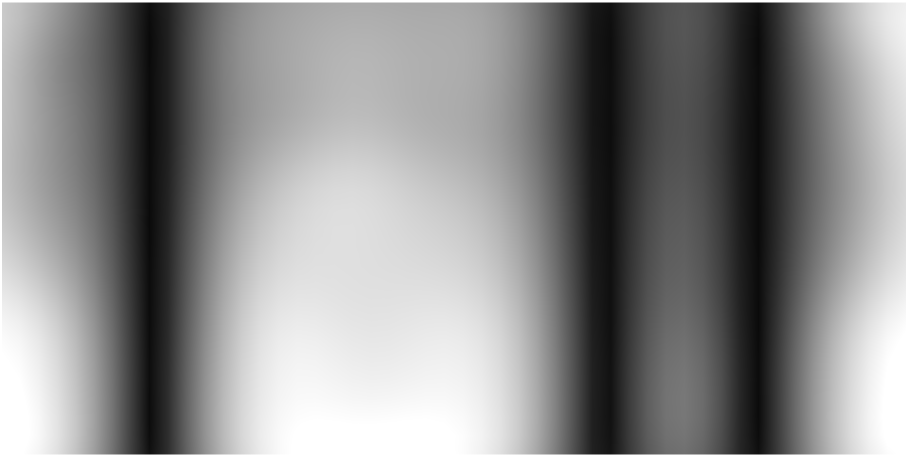} 
\hfill
\includegraphics[height=0.146\textwidth]{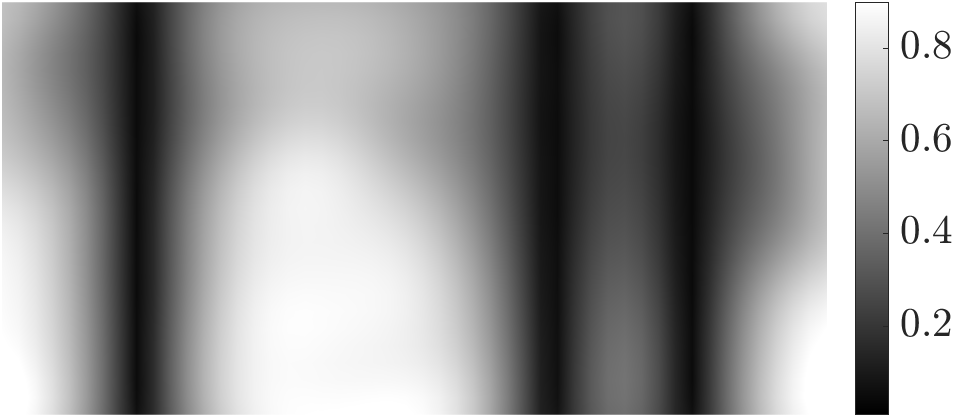}
\hfill\includegraphics[height=0.146\textwidth]{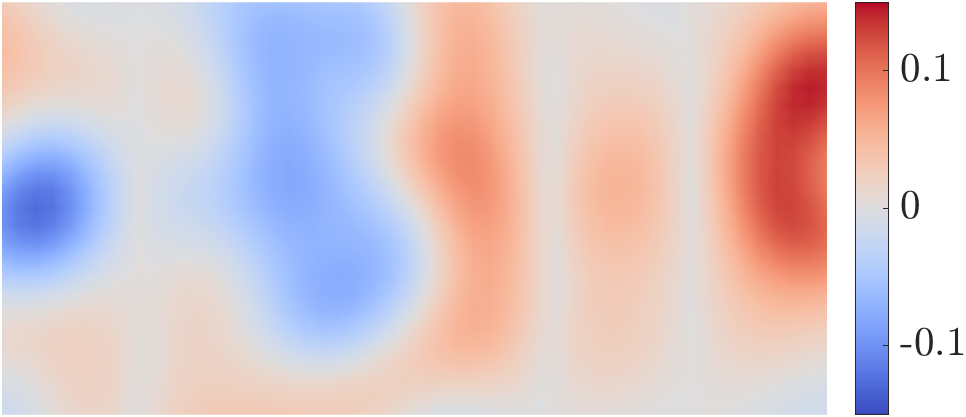}
        \\

      $\w$ \hrulefill\vspace{5pt}
      
\includegraphics[height=0.146\textwidth]{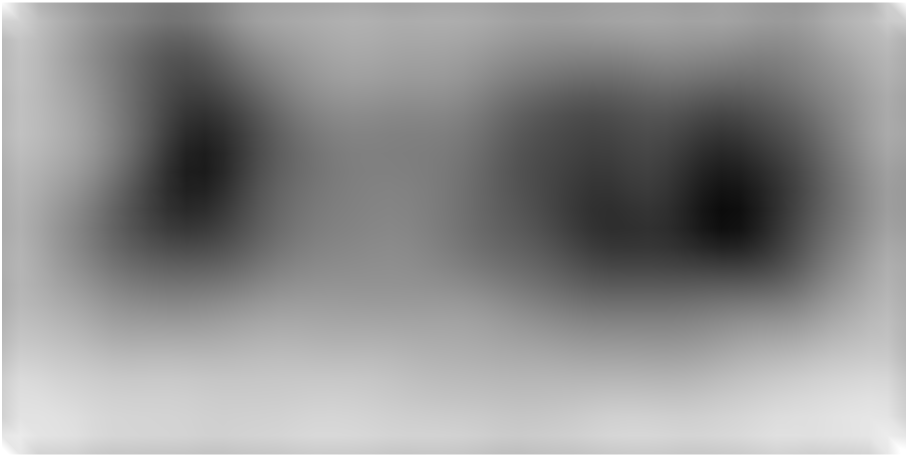}
\hfill
\includegraphics[height=0.146\textwidth]{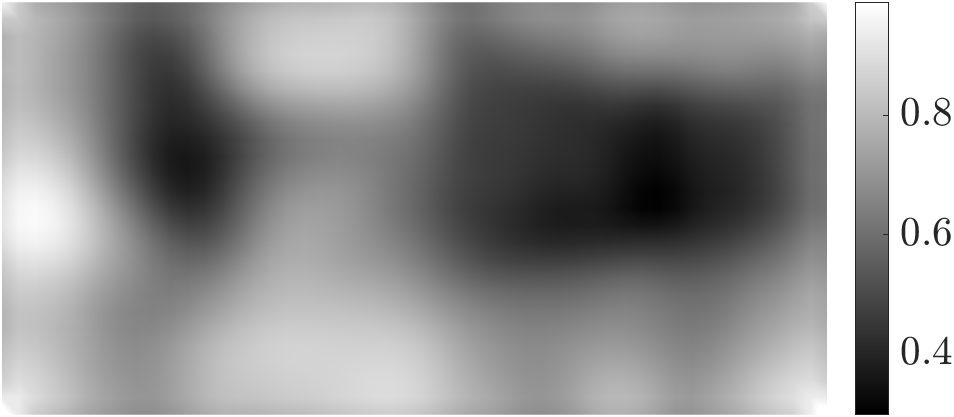}
\hfill
\includegraphics[height=0.146\textwidth]{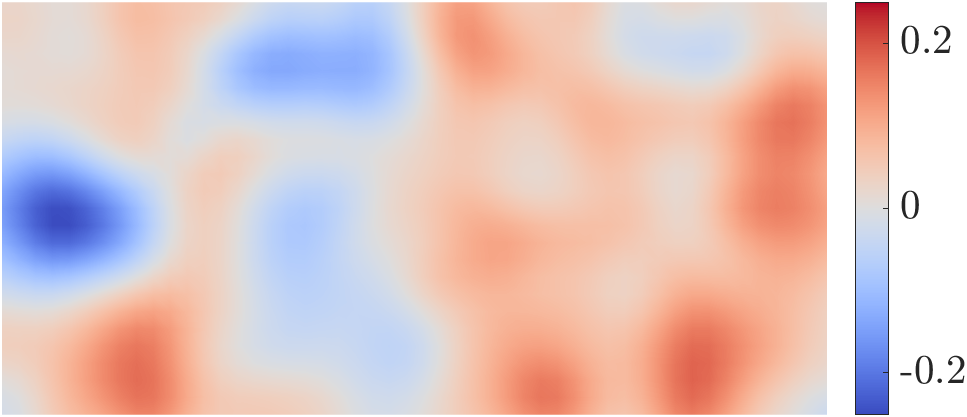}
\caption{The reduction in uncertainty for Example 2. On the top we show the point-wise marginal posterior standard deviation for $\z$ found using independent inference (left), the point-wise marginal posterior standard deviation for $\z$ found using joint inference (centre), and the difference of the two, $\mathtt{D}_\z$ (right). On the bottom row we show the corresponding plots for $\w$.}\label{fig:Prob2UQ}
\end{figure}

In Figure \ref{fig:Prob2UQ} we show the marginal posterior standard deviations obtained using independent and joint inference. In both cases the uncertainty is reduced most strongly near the data locations. However, under joint inference the reduction is not confined to those regions, the direct observations of $\z$ are also informative for (and reduce the uncertainty in) $\w$. In this example we have  $\mathtt{E}(\z)=0.597$ and $\mathtt{E}(\w)=0.975$,  when using  independent inference and $\mathtt{E}(\z)=0.567$ and $\mathtt{E}(\w)=0.790$ when using the joint approach. That is to say, in terms of reconstruction accuracy, the joint approach yields only a modest improvement for $\z$ but a much more substantial improvement for $\w$. This is consistent with the model setup, since 
$\z$ is informed by both the hydraulic-head data and the direct point-wise measurements, whereas 
$\w$ is only observed indirectly through the PDE model and thus benefits more from the inferred correlation between the two parameters.

In terms of uncertainty reduction, we obtain $\mathtt{U}(\z)=0.311$ and $\mathtt{U}(\w)=0.494$ using independent inversion, and 
$\mathtt{U}(\z)=0.314$ and
$\mathtt{U}(\w)=0.459$ when using the joint approach. 
Thus, the joint approach produces a noticeable reduction in posterior uncertainty for 
$\w$, while the uncertainty in 
$\z$ remains at a similar level. This is again consistent with the model setup, i.e.,  
$\w$ benefits from the additional information transferred through the joint model.

Finally, Figure~\ref{fig:C2s} shows the histograms of the inferred correlation parameters $c_1$ and $c_2$. The posterior clearly supports different correlations in the two subdomains $x\leq1$ and $x>1$, with mass concentrated on positive values for $c_1$ and negative values for $c_2$, in agreement with the true values ($c_{\rm true}=(0.8, -0.9)$). The posterior distributions are broader than in Example 1, likely a result of the nonlinearity and smoothing behaviour of the forward problem. The trace plots and  autocorrelation plot in Figure~\ref{fig:C2diag} indicate that both $c_1$ and $c_2$ are fairly well explored throughout the run, although  
$c_2$ appears to exhibits stronger autocorrelation. The next section provides further discussion on the efficiency and effectiveness of the approaches.

\begin{figure}[t!]
\centering
\includegraphics[height=0.3\textwidth]{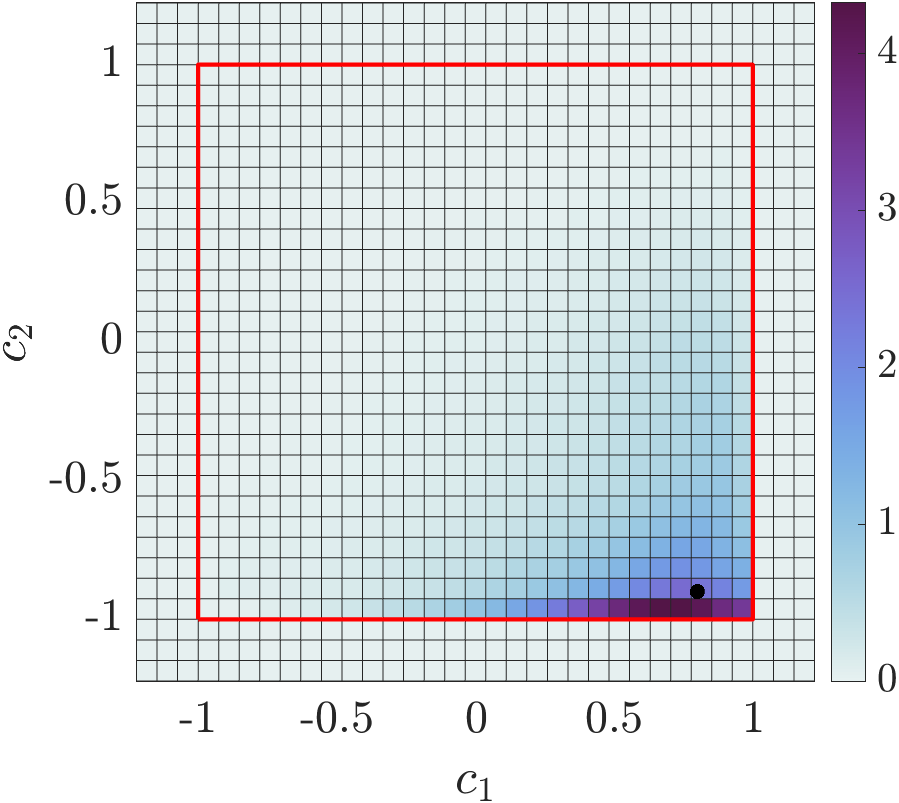}
\hfill
\includegraphics[height=0.3\textwidth]{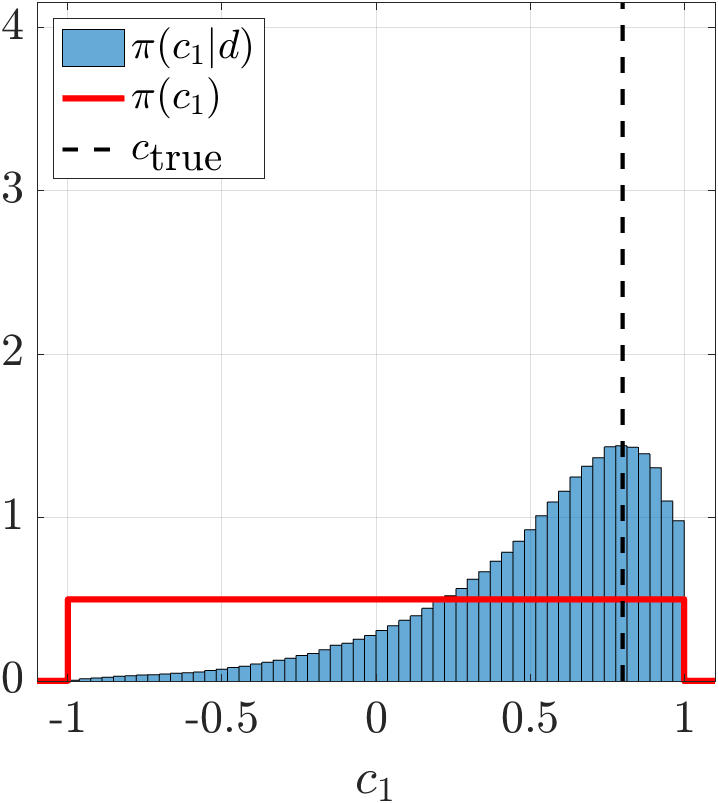}
\hfill
\includegraphics[height=0.3\textwidth]{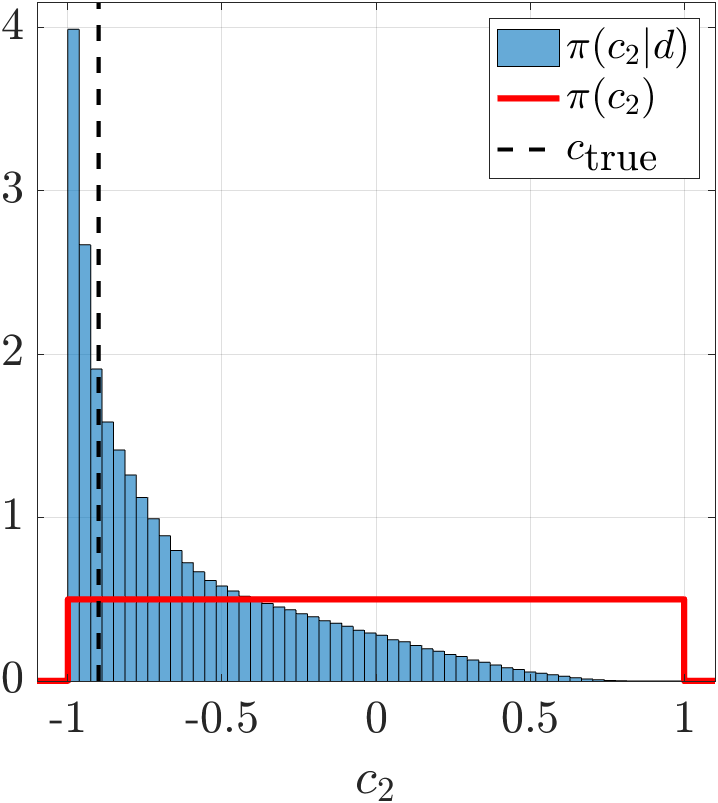}
\caption{Results for estimating the correlation in Example 2. We show the joint marginal posterior $\pi(c_1,c_2|d)$ (left) as well as the marginal posteriors of $\pi(c_1|d)$ (centre) and $\pi(c_2|d)$ (right). In each of the plots the red line is used to indicate the (support) of the relevant prior distribution, while the black dot in the joint marginal plot (right) and black dotted lines in the marginal plots shows the true values.}\label{fig:C2s}
\end{figure}
\begin{figure}[t!]
\centering
\includegraphics[height=0.3\textwidth]{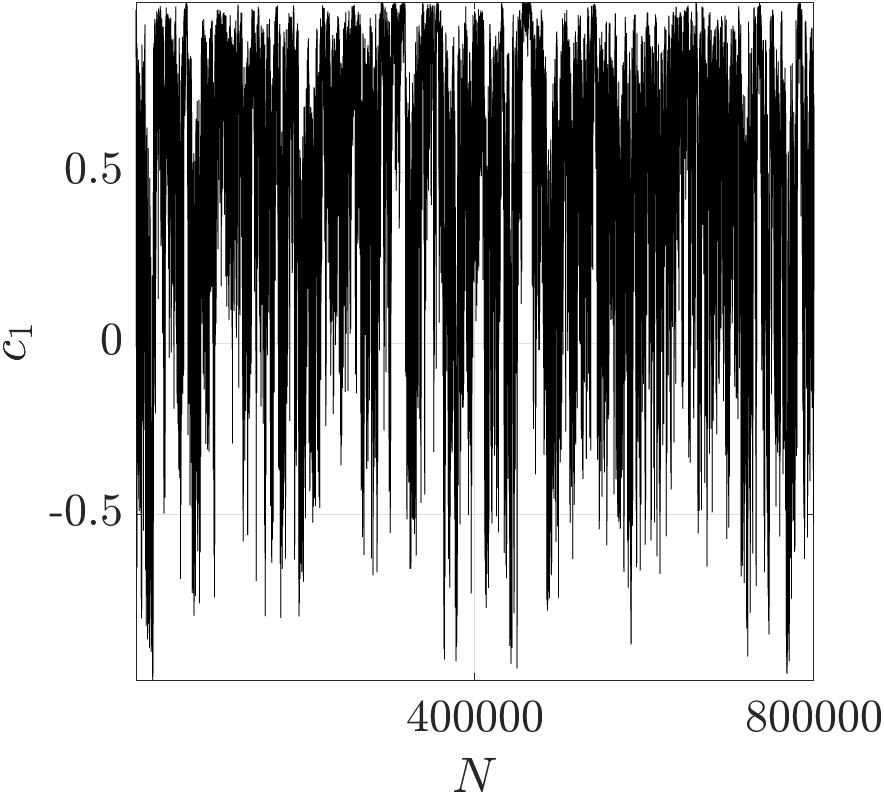}
\hfill
\includegraphics[height=0.3\textwidth]{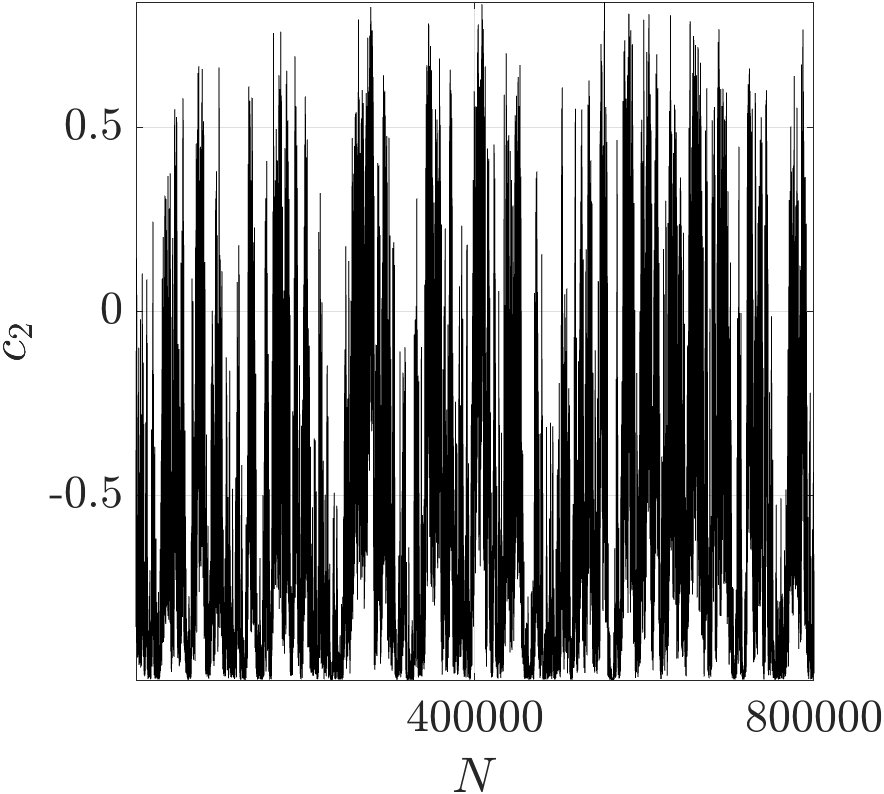}
\hfill
\includegraphics[height=0.3\textwidth]{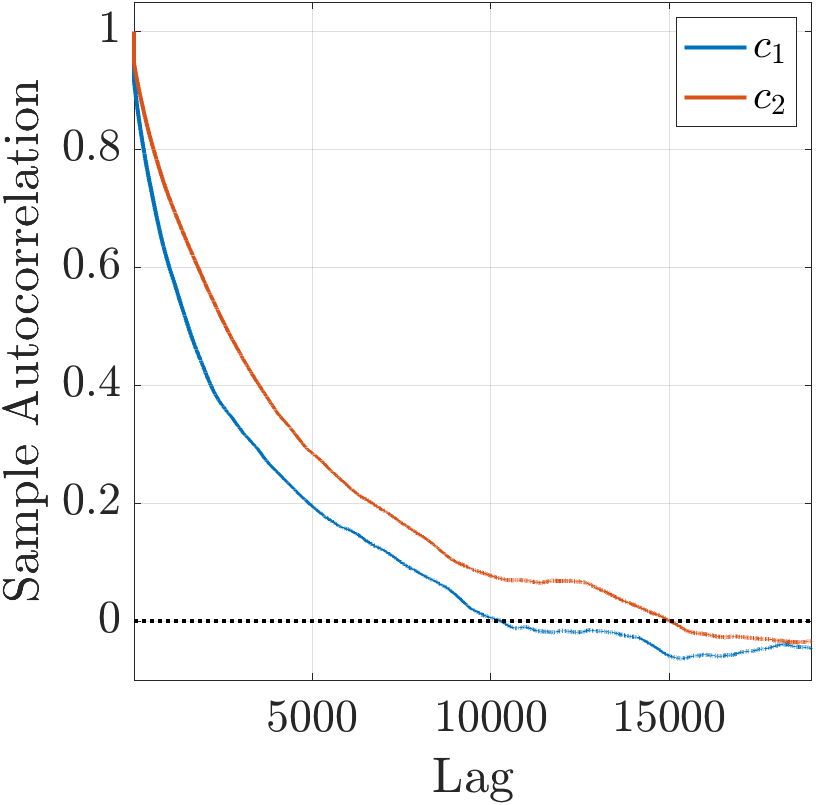}

\caption{MCMC diagnostics for the correlation parameters 
$c_1$ and $c_2$ in Example 2. The left and centre panels show the trace plots for 
$c_1$ and $c_2$, respectively, while on the right we show the corresponding sample autocorrelation functions.}\label{fig:C2diag}
\end{figure}

\begin{figure}[t!]
\centering
Independent \hrulefill\vspace{5pt}

\includegraphics[height=0.3\textwidth]{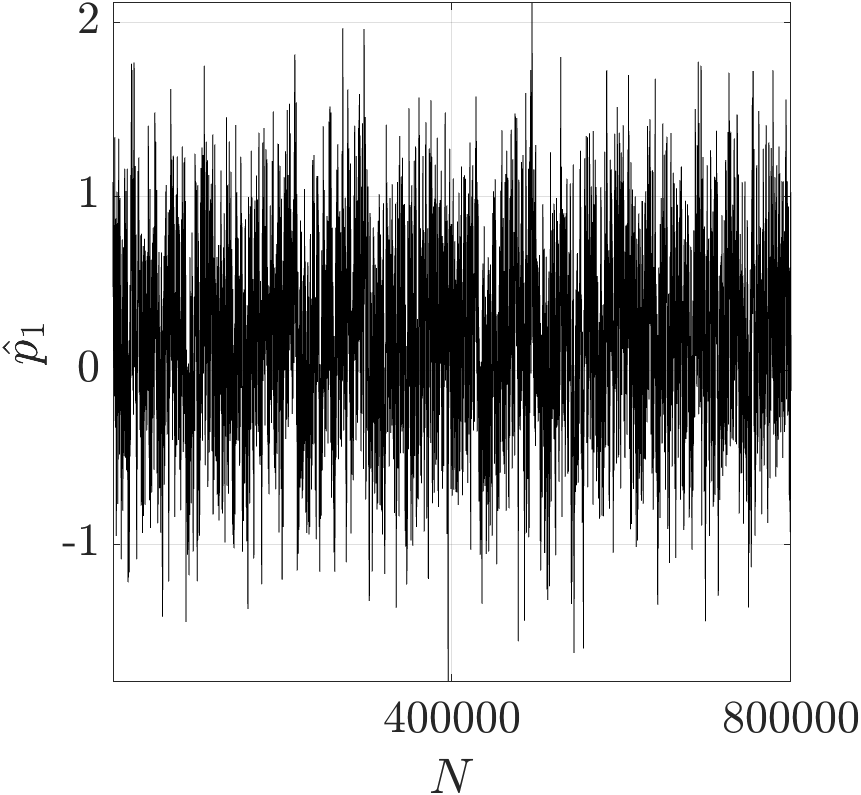}
\hfill
\includegraphics[height=0.3\textwidth]{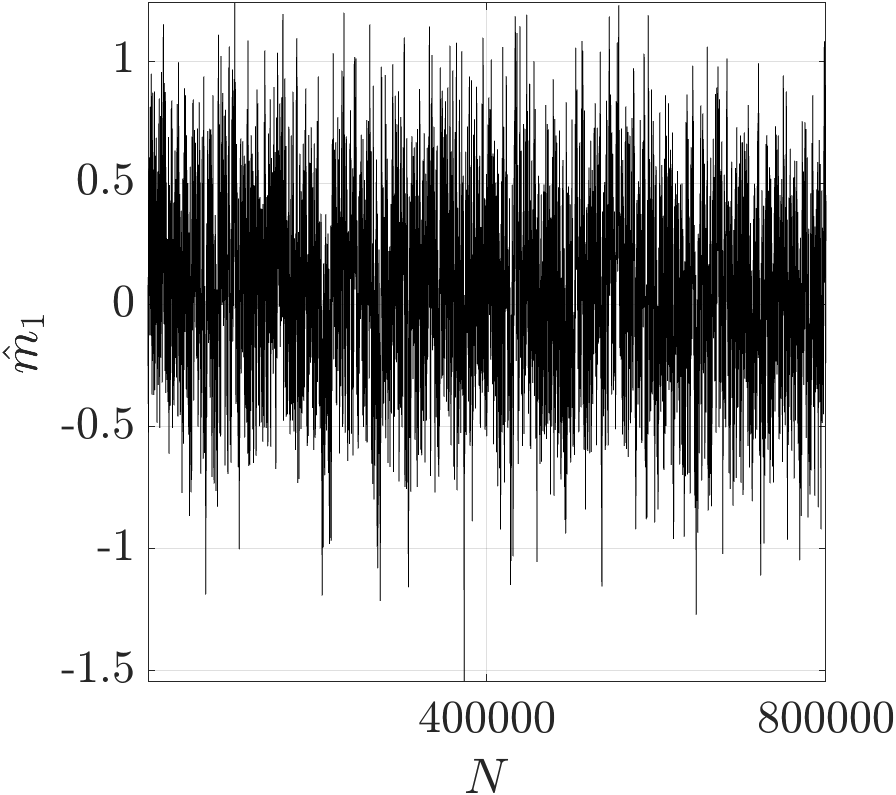}
\hfill
\includegraphics[height=0.3\textwidth]{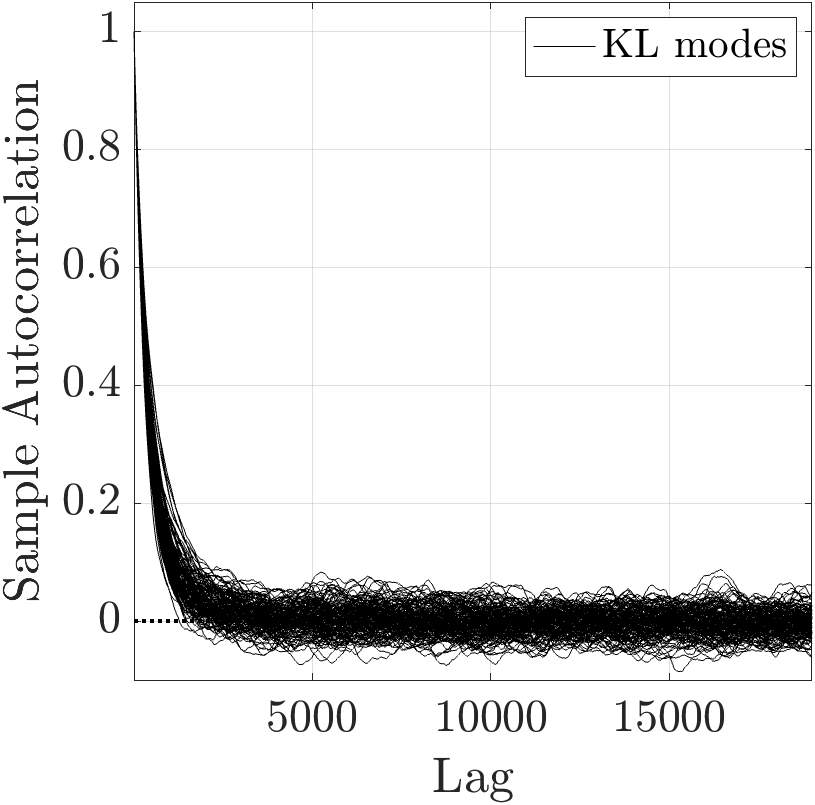}\\
Joint \hrulefill\vspace{5pt}

\includegraphics[height=0.3\textwidth]{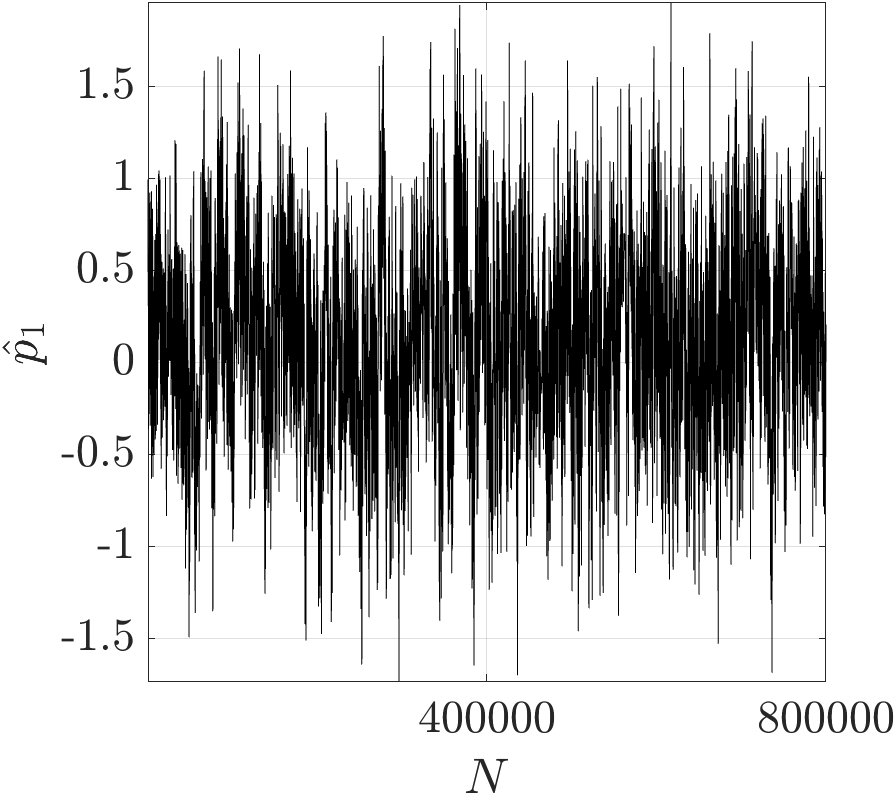}
\hfill
\includegraphics[height=0.3\textwidth]{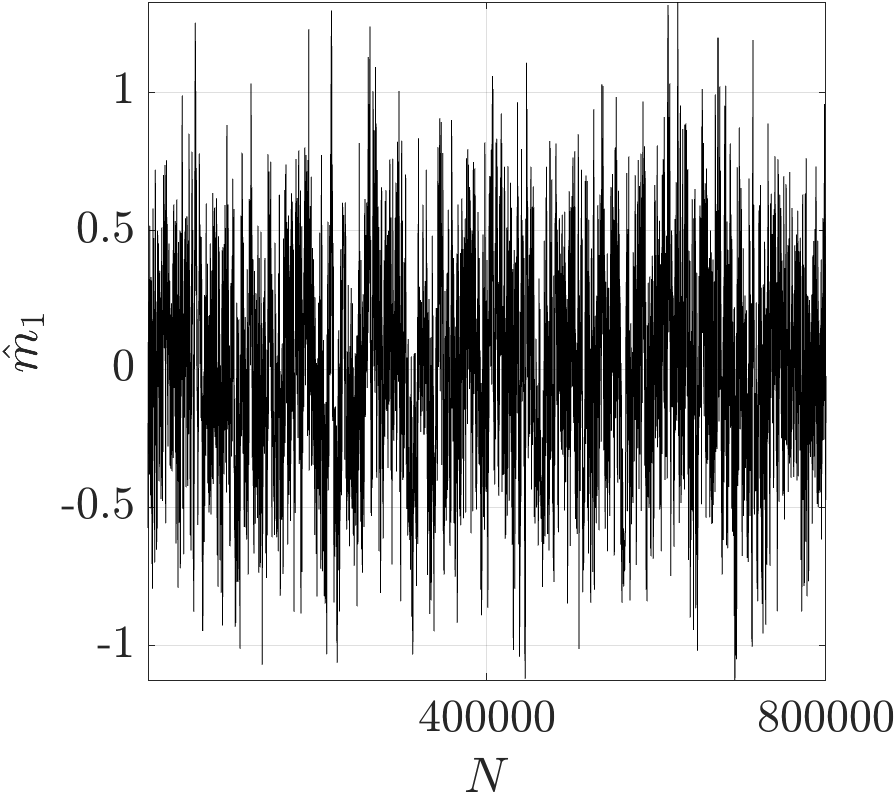}
\hfill
\includegraphics[height=0.3\textwidth]{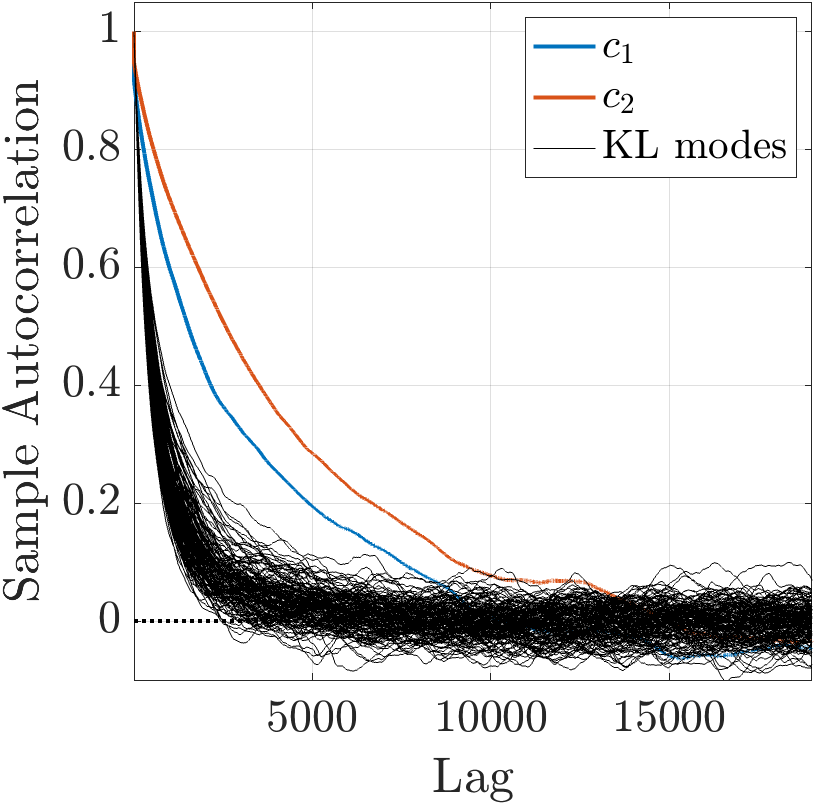}
\caption{MCMC diagnostics for the correlation parameters 
for the KL modes $\hat\z$ and $\hat\w$ in Example 2 using independent and joint inference. On the top row we show the trace plot of the first KL mode for the log-permeability $\hat\z_1$ using independent inversion (left), the trace plot of the first KL mode for the log-recharge $\hat\w_1$ using independent inversion and the sample autocorrelation for all KL modes using independent inversion. On the bottom we show the corresponding plots when using joint inversion and also show the sample autocorrelation for $c_1$ and $c_2$ in the autocorrelation plot (right).}\label{fig:Prob2Traces}
\end{figure}

\subsection{Computational costs}\label{sec: CompCost}

In this section we briefly consider the computational costs for Example 1 and Example 2. Specifically, the costs associated with characterising the posterior distributions using independent inference and joint inference. The key results are summarised in Table \ref{table}. For reference, all experiments were executed using a 10-core Apple M1 Pro with 32 GB RAM, with the forward model in Example 2 taking in the order of $1.0\times 10^{-3}$ seconds to evaluate. 

The additional cost of evaluating the posterior when using the proposed joint approach is negligible compared to the costs of evaluating the (forward models in the) likelihood, which is identical for the joint and independent approaches. That is to say, the cost of evaluating the posterior is (essentially) the same for the joint and independent approaches. Table \ref{table} gives the key details on the results and the computational costs associated  with carrying out the independent and joint inversion for both of the computational examples. As discussed in Sections \ref{sec: res2} and \ref{sec: res1}, for the examples considered here, employing the proposed joint approach results in reductions in the posterior uncertainty and generally improves the CM estimates. To compare the computational costs of the proposed joint approach and independent inversion, in Table \ref{table} we report the effective sample sizes (ESS) for the correlations ($c$ for Example 1 and $c_1$ and $c_2$ for Example 2) as well as the median ESS for $\hat\z$ and the median ESS for $\hat\w$. The effective sample size is given by
\begin{align}\label{eq: ESS}
    \text{ESS}=\frac{M}{1+2\sum_{w=1}^\infty r(w)}
\end{align}
where $M$ is the total number of retained samples (after burn-in is discarded) and $r(w)$ is the estimated autocorrelation function for a lag $w>0$. In practice the autocorrelation typically becomes noisy for large lags, and the summation in~(\ref{eq: ESS}) is truncated -- here we truncate the summation at the first lag for which the autocorrelation goes negative. Since the ESS is defined for each scalar parameter individually, we evaluate it for the correlation parameter(s) and for the coefficients of the first KL modes of both parameters.

For Example 1, as both forward problems are linear, characterising the posterior using independent inference does not require MCMC and we thus do not report an ESS. On the other hand, using the proposed joint approach leads to a non-Gaussian posterior, which we characterise using MCMC. We compute a chain of length $M=9.9\times 10^4$ retained samples which has as an ESS for the correlation $c$ of 14818. The estimated autocorrelation function for $c$ is shown in Figure~\ref{fig:C1}.

For Example 2, we use MCMC for both the independent and joint approaches. For both approaches we compute a chain of length $M=8\times10^5$ retained samples. Using the independent approach, the MCMC chain for coefficient of the first KL mode of $\z$ (resp. $\w$) , i.e., $\hat\z_1$ (resp. $\hat\w_1$), has an ESS of 904 (resp. 804). Using the joint approach resulted in ESS of 455 and 493 for $\hat\z_1$ and $\hat\w_1$, respectively, while for $c_1$ and $c_2$ the ESS are 153 and 106, respectively. The trace plots for $\hat\z_1$ and $\hat\w_1$ and the estimated autocorrelation function for all of the KL-mode coefficients using both approaches are shown in Figure~\ref{fig:Prob2Traces}. The reduction in ESS for the joint approach indicates that, although explicitly accounting for uncertainty in the correlation can lead to more accurate uncertainty quantification, it incurs an additional computational cost, likely due to the increased complexity of the posterior distribution.

\begin{table}[t!]
\caption{Key metrics for Example 1 and 2 when solving independently and jointly using the proposed approach. Note for Numerical Example 1 both forward models are linear, and thus the associated independent inverse problems can be carried out directly (i.e., without sampling). Reported are the relative error for $\z$ and $\w$ ($\mathtt{E}(\z)$ and $\mathtt{E}(\w)$), the relative uncertainty reduction for $\z$ and $\w$ 
 ($\mathtt{U}(\z)$ and $\mathtt{U}(\w)$), the total dimension of the parameters to be inferred (Dim.), the number of MCMC samples after discarding the burn-in (M), the effective sample size for the correlation coefficient(s) (ESS -- $\CC$), and the median effective sample size for $\z$ and $\w$ (ESS -- $\z$ and ESS -- $\w$, respectively). Note for Example 2 ESS -- $\z$ and ESS -- $\w$ are the median effective sample sizes of the KL mode coefficients $\hat\z$ and $\hat\w$.}\label{table}
\begin{center}
\begin{tabular}{ |c|c|c|c|c|c|c|c|c|c|c| }
  \hline 
  & $\mathtt{E}(\z)$ & $\mathtt{U}(\z)$ &  $\mathtt{E}(\w)$ & $\mathtt{U}(\w)$ & Dim. &  $M$ & \multicolumn{2}{|c|}{ESS -- $\CC$} & ESS -- $\z$ & ESS -- $\w$
\\ 
 \hline
 & \multicolumn{10}{c|}{Numerical Example 1}\\
 \hline
  Ind. $\z$ & 0.852 & 0.401 &  - & -  & 1250  & - & \multicolumn{2}{c|}{-} & - & -
\\ 
\hline
  Ind. $\w$  & - & - &  0.629 & 0.411  & 1250  & - & \multicolumn{2}{c|}{-} & - & -
\\ 
\hline
  Joint & 0.513 & 0.313  &   0.589 & 0.291 & 2501  & $9\times 10^4$ &  \multicolumn{2}{c|}{14818} & 29538 & 31965
\\ 
\hline
 & \multicolumn{10}{c|}{Numerical Example 2}\\
 \hline
  Ind. & 0.597 & 0.311 &  0.975 & 0.494 & 150  & $8\times 10^5$ & \multicolumn{2}{c|}{-} & 904 & 804
\\ 
\hline
  Joint  & 0.567 & 0.314 &  0.790 & 0.459 & 152  & $8\times 10^5$ & 153 & 106 & 455 & 493
\\
\hline
\end{tabular}
\end{center}
\end{table}

\section{Discussion and Conclusion}
\label{sec: Conc} In the present work, we proposed a novel approach for carrying out joint inference in the Bayesian framework. The approach is applicable to cases where the marginal prior distribution for each of the parameters is Gaussian and fixed, and we wish to have a jointly normal prior. As such the construction of the joint covariance is the key task. We presented a possible covariance structure which is optimal in the canonical correlation sense, and can easily encode cross-correlation between the two parameters. Before carrying out joint inversion we considered several examples of generating random fields with given cross-correlations, including spatially varying cross-correlation, and random fields in different spatial dimensions.  

We considered joint inversion on a toy Monod model example to highlight the effects of miss-specifying the cross-correlation. This example also served to illustrate that the proposed approach is not, at least in its current form, immediately amenable to optimisation-based approaches such as computing the (joint) MAP estimate. In our case, that is, simultaneous estimation of the correlation, we advocate for the use of sampling-based approaches such as MCMC. Our approach can easily accommodate dimension reduction of the two (or more) parameters of interest via truncated KL expansions which can go some way to reducing the computational overheads associated with carrying out sampling-based approaches.

Finally, we considered two higher-dimensional problems, one of which is governed by a PDE, where we estimated two random fields as well as the cross-correlation. In both examples, allowing for uncertainty in, and estimation of, the cross-correlation led to improved CM estimates and more concentrated marginal posteriors. We stress, however, that the aim is not simply to reduce posterior variance, but to accurately model and quantify the remaining uncertainty in the unknown parameters and their correlations.

The overall conclusion of this work is a reaffirmation of a key component of the Bayesian approach to inverse problems: {\em anything unknown should be treated as a random variable.} In particular, if the correlation between the two parameters is unknown, we should (at least attempt to) model and account for this uncertainty.

\paragraph{Limitations} 
Of course, the approach we have presented also has some limitations. The two main limitations being related to computational overheads: 
\begin{enumerate}
    \item First, as illustrated in the Monod example (see Section \ref{sec:MONOD}) and further discussed in Appendix \ref{sec: appMAP}, the approach proposed here is not immediately amenable to common optimisation-based approaches, i.e., the MAP estimate and Laplace approximation represent the posterior particularly poorly. As such we advocate for using sampling-based methods such as MCMC as a means to accurately characterise the posterior. However it is well known that such methods may in some cases be computationally infeasible. A possible means to reconcile the Laplace approximation approach may be to consider a suitable weighting on the correlation (hyper-) parameters, see for example ~\cite{calvetti2019hierachical} and the references therein.
    \item Secondly, as alluded to in the Example 1 (see Section \ref{sec ex1}), even when the joint prior is Gaussian and the forward models are linear, using the proposed approach to jointly estimate the parameters and the correlation $\CC$ results in a non-Gaussian posterior. Thus, sampling-based methods are required to characterise the posterior which can be computationally prohibitive. This represents a considerable computational cost increase, even when using the proposed MwG approach.
\end{enumerate}

Other than addressing the limitations noted above, future work directions could include using the proposed approach to generate two (or more) correlated level-set functions as a means or encoding structural similarity (rather than correlation) in the Bayesian framework.

\bibliographystyle{unsrt}
\bibliography{jointRefs}

\appendix

\section{Illustration of the pitfall of optimisation when estimating the correlation}\label{sec: appMAP}
Here we provide some further insight into why optimisation-based approaches to estimating both parameters $\z$ and $\w$ as well as the correlation $\CC$ may be misleading. The issue is (possibly un-surprising, as the likelihood is independent of $\CC$) related to the prior. To illustrate this we set $n_1=n_2=1$, i.e., both parameters are scalar $\z,\w\in\mathbb{R}$, and assume  $\pi(\z)=\pi(\w)=\mathcal{N}(0,1)$, with $C=c\in\mathbb{R}$ satisfying $|c|<1$. If no other prior knowledge of the correlation between $\z$ and $\w$ is available, it is natural to postulate the uniform prior $\pi(c)=\mathcal{U}(-1,1)$. In this case the joint prior can be written
\begin{align}
\pi(\z,\w,c)\propto\mathds{1}_{(-1,1)}(c)\exp\left\{-\frac{1}{2}
\begin{pmatrix}
    \z & \w 
\end{pmatrix}
\begin{pmatrix}
    1 & c\\ c &1 
\end{pmatrix}^{-1}
\begin{pmatrix}
    \z \\ \w 
\end{pmatrix}-\frac{1}{2}\ln(1-c^2)\right\},
\end{align}
where $\mathds{1}_{(-1,1)}$ denotes the indicator function over the interval $(-1,1)$. Consider now the log-prior;
\[V(\z,\w,c):=\ln\left(\pi(\z,\w,c)\right)\propto-\frac{1}{2}
\begin{pmatrix}
    \z & \w 
\end{pmatrix}
\begin{pmatrix}
    1 & c\\ c &1 
\end{pmatrix}^{-1}
\begin{pmatrix}
    \z \\ \w 
\end{pmatrix}-\frac{1}{2}\ln(1-c^2),\quad |c|<1,\]
which has gradient
\begin{align}
    \nabla V=
    \frac{1}{(c^2-1)}\begin{pmatrix}
    (\z-c\w) \\ 
    (\w-c\z)\\
(c^3-c+(\z - c\w)(\w - c\z))/(c^2-1)
\end{pmatrix},\quad |c|<1.
\end{align}
The only stationary point of $V$ is at $(0,0,0)$ which is a saddle point, the Hessian being:
\begin{align}
\nabla^2 V(0,0,0)=\begin{pmatrix}
    -1 & 0 & 0 \\ 
    0 & -1 & 0 \\
0 & 0 & 1
\end{pmatrix}.
\end{align}
For any value of $|c|<1$, the level curves of $V$ (and/or $\pi$) are ellipses, while fixing $\z$ and $\w$ we can always increase $V$ (and thus $\pi$) by taking $c\to\pm 1$. Thus, if the data of the forward model(s) is not informative {\em enough}, an optimisation-based approach will return $c=\pm 1$ (or arbitrarily close to it).

\section{Invariance of the marginal posteriors to sign of correlation}\label{app:sign}

To clarify the behaviour observed in Example 1 in the linear-Gaussian setting (see Section~\ref{sec: res1}), we examine how the sign of the cross-correlation matrix 
$\CC$ affects the joint prior precision matrix. In particular, we show that replacing 
$\CC$ by $-\CC$ leaves the diagonal blocks of the prior precision matrix unchanged, while reversing the sign of the off-diagonal blocks. This symmetry underlies the corresponding invariance of the marginal posterior covariance matrices in the separable linear setting with block-diagonal noise covariance.
\begin{lemma}\label{lem: pri}
    Let $\z$ and $\w$ be jointly Gaussian with joint covariance matrix
\begin{align}\label{eq: appGam}
    \mathbb{S}^+_n\ni\GGamma=\begin{pmatrix}
\GGamma_\z & \LL_\z^{-1}\CC\LL_\w^{-T}\\
 \LL_\w^{-1}\CC^{T}\LL_\z^{-T} & \GGamma_\w
 \end{pmatrix},\quad C\in\mathbb{A}^{n_1\times n_2}
.
\end{align}
Then the diagonal blocks of the prior precision operator are invariant under the transformation $\CC\mapsto-\CC$, while the off-diagonal blocks reverse sign.
\end{lemma}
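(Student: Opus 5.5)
We already have an explicit expression for the prior precision. Theorem~\ref{ThmWhite} gives the whitening filter $\LL$ in (\ref{eq: bigWhite}) satisfying $\GGamma=(\LL^T\LL)^{-1}$. Hence $\GGamma^{-1}=\LL^T\LL$, and the plan is to compute this product blockwise and see how each block depends on $\CC$.

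Write $E:=I-\CC^T\CC=\DD\DD^T$, which is invertible because $\CC$ is a strict contraction. Multiplying out $\LL^T\LL$ gives
\begin{align}
\GGamma^{-1}=\begin{pmatrix}\LL_\z^T\left(I+\CC E^{-1}\CC^T\right)\LL_\z & -\LL_\z^T\CC E^{-1}\LL_\w\\ -\LL_\w^T E^{-1}\CC^T\LL_\z & \LL_\w^T E^{-1}\LL_\w\end{pmatrix}.\nonumber
\end{align}
In this computation the products $\DD^{-T}\DD^{-1}$ combine to $(\DD\DD^T)^{-1}=E^{-1}$. That is the only place where the non-uniqueness of the defect operator $\DD$ could matter, and it drops out.

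Next, observe that $E$ is an even function of $\CC$: replacing $\CC$ by $-\CC$ leaves $(-\CC)^T(-\CC)=\CC^T\CC$ unchanged. Consequently:
\begin{itemize}
\item the $(2,2)$ block $\LL_\w^TE^{-1}\LL_\w$ is unchanged;
\item the $(1,1)$ block is unchanged, because $\CC E^{-1}\CC^T$ is quadratic in $\CC$;
\item the off-diagonal blocks, each linear in $\CC$ times a factor depending only on $E$, change sign.
\end{itemize}
This proves the lemma.

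One subtle point needs care. The defect operator for $-\CC$ may be chosen as the same $\DD$, since $\DD\DD^T=I-\CC^T\CC$ holds for both. Even if a different $\DD$ were chosen, the precision matrix depends only on $\GGamma$, so it is unique regardless. Stating this explicitly removes any apparent dependence on the choice of factorisation.

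As an alternative check, one can apply the block inverse (Schur complement) formula directly to (\ref{eq: appGam}). Conjugating $\GGamma$ by $\diag(I,-I)$ maps $\CC\mapsto-\CC$, so
\[
\GGamma(-\CC)^{-1}=\diag(I,-I)\,\GGamma(\CC)^{-1}\diag(I,-I).
\]
This flips the off-diagonal signs and fixes the diagonal blocks, which gives a one-line proof.

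I expect no real obstacle. The only step requiring care is the bookkeeping in the blockwise product, and the conjugation argument sidesteps it entirely. I would present the conjugation argument as the main proof and the explicit blocks as a confirmation.
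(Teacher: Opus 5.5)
Your proof is correct, and your main argument takes a different route from the paper. The paper writes down the block inverse without derivation,
\[
\Gamma^{-1}=\begin{pmatrix}L_p^T(I-CC^T)^{-1}L_p & -L_p^TC(I-C^TC)^{-1}L_m\\ -L_m^T(I-C^TC)^{-1}C^TL_p & L_m^T(I-C^TC)^{-1}L_m\end{pmatrix},
\]
and then reads off that the diagonal blocks are even in $C$ and the off-diagonal blocks are odd.

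Your ``confirmation'' computation is essentially this route. The difference is that you derive the blocks as $L^TL$ from Theorem~\ref{ThmWhite}. Your $(1,1)$ block agrees with the paper's through the push-through identity $(I-CC^T)^{-1}=I+C(I-C^TC)^{-1}C^T$, although the parity is visible in either form.

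Your conjugation argument, $\Gamma(-C)=S\,\Gamma(C)\,S$ with $S=\diag(I,-I)=S^{-1}$, is genuinely different. It needs no inverse formula, no factorisation and no defect operator, only invertibility. It applies to any invertible block matrix whose off-diagonal blocks are odd in the parameter.

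It also gives the subsequent corollary at once. Write the posterior precision as $H(C)=G^T\Gamma_e^{-1}G+\Gamma(C)^{-1}$. The first term is block diagonal and so commutes with $S$, which gives $H(-C)=S\,H(C)\,S$. Hence $H(-C)^{-1}=S\,H(C)^{-1}S$, and the diagonal blocks, which are the marginal posterior covariances, are unchanged.

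What the paper's explicit formula buys in return is the precision matrix itself. That matrix is what is needed to evaluate $\|s-s_\ast\|^2_{\Gamma^{-1}(C)}$ in the prior and in the sampler.
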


\begin{proof}
First, let the joint prior covariance matrix be written as $\GGamma$ in (\ref{eq: appGam}). Then the joint prior precision matrix is given by
\[
\GGamma^{-1}=\begin{pmatrix}
\LL_\z^T(I-\CC\CC^{T})^{-1}\LL_\z & -\LL_\z^T\CC(I-\CC^{T}\CC)^{-1}\LL_\w \\
 -\LL_\w^T(I-\CC^{T}\CC)^{-1}\CC^T\LL_\z & \LL_\w^T(I-\CC^{T}\CC)^{-1}\LL_\w
 \end{pmatrix}.
\]
 Then it follows immediately that 
\begin{align}
    \LL_\z^T(I-\CC\CC^{T})^{-1}\LL_\z&=\LL_\z^T(I-(-\CC)(-\CC)^{T})^{-1}\LL_\z,\nonumber\\
 \LL_\w^T(I-\CC^{T}\CC)^{-1}\LL_\w&=\LL_\w^T(I-(-\CC)^{T}(-\CC))^{-1}\LL_\w,\nonumber
 \end{align}
 i.e., the diagonal blocks of the prior precision operator are invariant to the sign of $\CC$, and 
 \begin{align}
 -\LL_\z^T\CC(I-\CC^{T}\CC)^{-1}\LL_\w&=\LL_\z^T(-\CC)(I-(-\CC)^{T}(-\CC))^{-1}\LL_\w,\nonumber\\
  -\LL_\w^T(I-\CC^{T}\CC)^{-1}\CC^T\LL_\z&=\LL_\w^T(I-(-\CC)^{T}(-\CC))^{-1}(-\CC)^T\LL_\z,\nonumber 
\end{align}
i.e, the off-diagonal blocks of the prior precision operator change sign.
\end{proof}

Lemma~\ref{lem: pri} shows that under the transformation 
$\CC\mapsto-\CC$ only the off-diagonal blocks of the prior precision matrix change sign. In the separable linear setting with block-diagonal noise covariance, it follows that the posterior precision matrix inherits the same symmetry. As an immediate consequence, we have the following corollary that shows that this implies invariance of the marginal posterior covariance matrices.
\begin{corollary}
Let $s=(\z,\w)\in\mathbb{R}^n$ be normally distributed with covariance matrix given by $\Gamma$ in (\ref{eq: appGam}), and let $G\in\mathbb{R}^{q\times n}$ be the block diagonal matrix with diagonal blocks $G_1\in\mathbb{R}^{q_1\times n_1}$ and $G_2\in\mathbb{R}^{q_2\times n_2}$. Furthermore,  assume $e\in\mathbb{R}^q$ is Gaussian with block diagonal covariance matrix  $\Gamma_e\in\mathbb{R}^{q\times q}$ with diagonal blocks $\Gamma_{e_1}\in\mathbb{R}^{q_1\times q_1}$ and $\Gamma_{e_2}\in\mathbb{R}^{q_2\times q_2}$. Then under the assumptions of Lemma~\ref{lem: pri}, in the linear setting, i.e, 
\begin{align}
    \data=Gs+e,\label{eq: appLin}
\end{align}
then the marginal posterior covariance matrices of $\z$ and $\w$ are invariant under the sign change $\CC\mapsto-\CC$.
\end{corollary}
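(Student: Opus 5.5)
The plan is a signature-matrix similarity argument. Let $J=\diag(I_{n_1},-I_{n_2})$, which satisfies $J=J^T=J^{-1}$.

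First, rephrase Lemma~\ref{lem: pri} in this notation. Write $\GGamma(\CC)$ for the joint prior covariance. The lemma says the prior precision satisfies
\[\GGamma(-\CC)^{-1}=J\,\GGamma(\CC)^{-1}J.\]
Conjugation by $J$ leaves the diagonal blocks unchanged and flips the sign of the off-diagonal blocks, which is exactly what the lemma establishes.

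Second, examine the data-misfit term of the posterior precision. In the linear model (\ref{eq: appLin}), the posterior precision is
\[\GGamma_{\rm post}(\CC)^{-1}=G^T\Gamma_e^{-1}G+\GGamma(\CC)^{-1}.\]
Since $G$ and $\Gamma_e$ are block diagonal with compatible block sizes, $G^T\Gamma_e^{-1}G=\diag(G_1^T\Gamma_{e_1}^{-1}G_1,\;G_2^T\Gamma_{e_2}^{-1}G_2)$ is block diagonal. A block-diagonal matrix commutes with $J$, so
\[J\,G^T\Gamma_e^{-1}G\,J=G^T\Gamma_e^{-1}G.\]

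Combining the two steps gives
\[\GGamma_{\rm post}(-\CC)^{-1}=J\,\GGamma_{\rm post}(\CC)^{-1}J.\]
Inverting, and using $J^{-1}=J$, yields
\[\GGamma_{\rm post}(-\CC)=J\,\GGamma_{\rm post}(\CC)\,J.\]
The marginal posterior covariances of $\z$ and $\w$ are the diagonal blocks of $\GGamma_{\rm post}$, and conjugation by $J$ leaves those blocks unchanged. This gives the claimed invariance. Invertibility is automatic: $\GGamma(\pm\CC)\in\mathbb{S}^+_n$ by Corollary~\ref{Thm2}, since $-\CC$ is also a strict contraction, so the posterior precision is a positive definite matrix plus a positive semidefinite one.

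There is no real obstacle. The only point needing care is that both $G$ and $\Gamma_e$ must be block diagonal, partitioned conformally with $(\z,\w)$, for the misfit term to commute with $J$. As a remark, the posterior mean does not share this invariance: it transforms by $J$ only when the data are also sign-flipped in the second block. This is consistent with the biased estimates seen in Example~1.
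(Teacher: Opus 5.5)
Your proof is correct and follows the same route as the paper. The paper argues that the block-diagonal misfit term $G^T\Gamma_e^{-1}G$ lets the posterior precision inherit the sign symmetry of the prior precision from Lemma~\ref{lem: pri}, and then states the invariance of the marginal posterior covariances as immediate; your conjugation by $J=\diag(I_{n_1},-I_{n_2})$ makes explicit the step the paper leaves implicit, namely why flipping the off-diagonal blocks of the precision leaves the diagonal blocks of its inverse unchanged.
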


\end{document}